\documentclass{article}
\usepackage[a4paper, total={7in, 9in}]{geometry}

\usepackage{amsmath,amssymb,amsthm,mathtools,bm,booktabs,graphicx,array,multirow}
\usepackage{natbib}
\usepackage{xcolor}
\usepackage[hidelinks]{hyperref}
\usepackage{url}

\newtheorem{theorem}{Theorem}
\newtheorem{proposition}[theorem]{Proposition}
\newtheorem{corollary}[theorem]{Corollary}
\newtheorem{lemma}[theorem]{Lemma}
\theoremstyle{remark}
\newtheorem*{remark}{Remark}
\newcommand{\T}{\mathbb T}
\newcommand{\R}{\mathbb R}
\newcommand{\N}{\mathcal N}
\newcommand{\dd}{\,\mathrm d}
\newcommand{\MMD}{\mathrm{MMD}}
\newcommand{\dist}{\operatorname{dist}}
\newcommand{\E}{\mathbb E}

\newcommand{\argmin}{\operatorname*{arg\,min}}
\DeclareMathOperator{\pr}{pr}
\newcommand{\footremember}[2]{%
    \footnote{#2}
    \newcounter{#1}
    \setcounter{#1}{\value{footnote}}%
}
\newcommand{\footrecall}[1]{%
    \footnotemark[\value{#1}]%
} 
\title{How to bowl properly}
\author{%
  The Dude\footremember{alley}{Holly Star Lanes Bowling Alley}%
  \and Walter Sobchak\footremember{trailer}{Probably in a trailer park}%
  \and Jesus Quintana\footrecall{alley} \footnote{Mexico?}%
  \and Uli Kunkel\footrecall{trailer} \footnote{Germany?}%
  }

\begin{document}

%

%

\title{Projected Normal Models on Recursively Curved Tori}

\author{ Surojit Biswas \footremember{Bonn}{Hausdorff Research Institute for Mathematics, University of Bonn} \and Aron Persson \footrecall{Bonn} \footremember{Ume}{Department of Mathematics and Mathematical Statistics, Umeå University}}

\date{}

\maketitle

\begin{abstract}
We develop a  computationally tractable statistical framework for angular data supported on a recursively embedded curved torus $\mathbb{T}^n\hookrightarrow\mathbb{R}^{n+1}$.   Due to the recursive structure we build an exact uniform sampler by generalizing the process from the curved 2-torus.  We generalize projected normal distributions to the curved $n$-torus and we develop inference theory for these distributions. We utilize both maximum-log-likelihood estimates and energy-based Fr\'echet estimates for the projected normal distribution for this purpose. This is done by establishing identifiability for isotropic projected normal distributions and by applying theory of maximum likelihood and intrinsic statistics for the curved tori. We numerically verify  the geometry, sampler, density transformation, and energy solver. Lastly, we demonstrate the applicability of the proposed model using real data.
\end{abstract}

\section{INTRODUCTION}

Toroidal observations arise whenever several angular variables are
recorded jointly, including robot configurations (see \cite{mescher2024geometric,sato2023online}), ophthalmology \cite[see][]{BanerjeeBiswas2026}, protein backbone
dihedrals (see \cite{gottard2021graphical,hong2022clustorus}), genetic information (see \cite{Eltzner2018torus}), periodic signals (see \cite{klein2020torus}), and directional time series (see \cite{giannakis2019data}).  The usual
statistical representation is the flat product torus
$(\mathbb{S}^1)^n$, where dependence is introduced directly through
an angular density.

An embedded torus, however, may carry the non-flat Riemannian geometry
induced by its Euclidean embedding.  Its volume element, geodesic
distance, and normal directions therefore depend on the embedding and
not only on the product topology.  For the classical embedded curved
$2-$torus, \cite{biswas2026efficient} extended the von Mises distribution on curved $\mathbb{T}^2$,  \citet{Diaconis2013} showed that uniformity with respect to
surface area induces a non-uniform angular law.
\citet{BanerjeeBiswas2026} used the same curved $\mathbb{T}^2$
geometry to construct dependent toroidal distributions and an exact
cardioid transformation sampler.  These results motivate the use of
embedded geometry in toroidal data, extending the area law
and Cardioid sampler. It is a simple geometric extension and is not the main novelty of the present paper.

\paragraph{RELATED WORK.}
Directional-statistics models $\T^n$ include products of circular
laws, multivariate von Mises families, and wrapped Euclidean
distributions \cite[see][]{MardiaJupp2000,Mardia2008,mardia2014some,Greco2023,xu2023density}.  The sine model due to 
\cite{Singh2002Probabilistic}  represents pairwise dependence
through sine interactions, whereas multivariate wrapped normals used in \cite{nodehi2021estimation,nodehi2025torus} inherit
a Euclidean Gaussian dependence structure after summing over wrapping
vectors.  These models provide natural flat-torus comparators because
they model the same angular observations without using an embedded
curved geometry. 

Projected normal distributions provide a flexible route to circular
and directional modeling by mapping normal random variables onto a subspace, although standard parametrizations generally exhibit scale
non-identifiability \cite[see][]{WangGelfand2013}.  Multivariate projected
normal constructions for several circular variables are already established in the literature \cite[see][]{nodehi2025torus}.
In particular, \citet{Mastrantonio2018} constructs a multivariate
projected normal by partitioning a $2n$-dimensional Gaussian vector into
$n$ bivariate components and radially projecting each component.
More recently, \citet{KatoMastrantonioIshikawa2025} proposed an
interpretable toroidal projected-normal family for general-dimensional
hypertoroidal data together with a related copula construction.


\paragraph{Gap and positioning.}
The above strands address complementary aspects of toroidal modeling,
but they do not provide a single construction in which the probability
model, density calculation, and intrinsic statistics are all tied to the
geometry of one recursively embedded curved torus.  Existing
multivariate projected-normal models use coordinate-wise radial
projections onto a product of circles, whereas the curved-torus work
above is restricted to the ordinary $2-$torus and does not provide the
normal-fiber projected-Gaussian construction developed here.
Our objective is therefore not to introduce the first projected-normal
model on $\T^n$, nor merely to extend the surface-area sampler to higher
dimension.  Instead, we couple an ambient Gaussian model to the
normal-fiber geometry of a single recursively curved embedding
$
F_n(\mathbb{T}^n)\subset\mathbb{R}^{n+1},
$
so that the same recursive structure determines the metric, volume
element, normal field, fiber Jacobian, projected density, and intrinsic
geometric summaries.

The paper \textit{makes five contributions}.  \textit{First}, we organize the recursively
curved torus geometry into a form suitable for statistical modeling;
the metric is diagonal, the volume is explicit, and the normal-fiber
map is affine in its radial coordinate. \textit{ Second}, we define a
normal-fiber projected Gaussian family and derive its angular and
surface densities.  \textit{Third}, we establish global identifiability and likelihood-based
asymptotic theory for an isotropic subfamily.  For the more flexible family
used in the applications, we establish consistency of an idealized
simulation-based minimum-MMD estimator using a characteristic kernel
\cite[see][]{Gretton2012}.  \textit{Fourth}, we define intrinsic distance and
Fr\'echet statistics through the path-energy functional of the curved
metric, with minimization over winding classes.  \textit{Fifth}, we validate the construction numerically and
compare it with literature-established toroidal models on robot joint
trajectories and protein dihedral data.

\section{RECURSIVELY CURVED TORUS} 
\vspace{-.71em}
 Throughout, $\T^n=\R^n/(2\pi\mathbb{Z})^n$ denotes the flat torus equipped with the classical quotient topology. Moreover, we denote points as {$\bm{\theta}$}$=(\theta_1,\ldots,\theta_n)\in\T^n$ and we shall throughout assume that the radii $R_1,\ldots,R_n>0$ satisfy the strong radius condition
\begin{equation}
R_k>\sum_{j<k}R_j,\quad k=2,\ldots,n.
\label{eq:radius}
\end{equation}
As is done in \cite{llibre2024darboux}, we define recursively
\begin{equation}
\rho_1=R_1,
\quad
\rho_k=R_k+\rho_{k-1}\cos\theta_{k-1},
\label{eq:rho}
\end{equation}
and the geometry-defining parameterization\\ $F_n:\T^n\rightarrow \R^{n+1}$ is defined by
\begin{equation}
F_n({ \bm{\theta}})=(\rho_n\cos\theta_n,\rho_n\sin\theta_n,\rho_{n-1}\sin\theta_{n-1},\ldots,\rho_1\sin\theta_1).
\label{eq:embed}
\end{equation}
Condition \eqref{eq:radius} guarantees $\rho_k>0$ for every angular configuration and yields a smooth embedded hypersurface; $F_n(\T^n)\subset\R^{n+1}$ hence, from this point onward, we refer to it as the \textit{curved $n$-torus.}

\paragraph{Dependence on coordinate ordering.}
The recursive construction is intentionally ordered.  In general, a
permutation of the angular coordinates changes the intermediate radii
$\rho_k$ and hence changes the embedding, induced metric, and resulting
projected distribution.  Thus the proposed model is not permutation
invariant.  When the coordinates possess a natural sequential
interpretation, such as revolute joints along a serial kinematic chain,
we use that physical ordering a priori.  For applications in which no
scientifically distinguished ordering is available, the coordinate
order should instead be regarded as part of the model specification
and assessed by sensitivity analysis or selected using training or
validation data only.  A validation-only ordering sensitivity is reported in
Appendix~\ref{app:ordering-sensitivity}.

\begin{proposition}
\label{prop:geometry}
The following holds for the isometrically embedded Riemannian manifold $F_n( \bm{\theta})\subset \R^{n+1}$.  The induced metric is diagonal in $\bm{\theta}$-coordinates,
\begin{equation}
{g(\bm{\theta})}=\operatorname{diag}(\rho_1^2,\ldots,\rho_n^2),
\label{eq:riemannian_metric_local}
\end{equation}
and therefore
{
\begin{equation}
\operatorname{dVol}_n(\bm{\theta}) = \Bigl(\prod_{k=1}^n\rho_k\Bigr)\dd\bm{\theta}\
\label{eq:volume_form}
\end{equation}
and
\begin{equation}
\operatorname{Vol}(F_n(\T^n))=(2\pi)^n\prod_{k=1}^nR_k.
\label{eq:volume}
\end{equation}}
The  unit normal vector field can be written component-wise as
\begin{equation}
\begin{split}
\nu_1&=\prod_{j=1}^n\cos\theta_j,\\
\nu_{n-k+2}&=\Bigl(\prod_{j=1}^{k-1}\cos\theta_j\Bigr)\sin\theta_k,
\quad k=2,\ldots,n,\\
\nu_{n+1}&=\sin\theta_1.
\end{split}
\label{eq:normal}
\end{equation}

\end{proposition}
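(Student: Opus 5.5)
The plan is to prove everything by induction on $n$, using the recursive structure $F_n(\bm\theta)=(\rho_n\cos\theta_n,\rho_n\sin\theta_n,\rho_{n-1}\sin\theta_{n-1},\ldots,\rho_1\sin\theta_1)$. The key observation is that $F_n$ is a "surface of revolution" built over $F_{n-1}$. The first two coordinates of $F_{n-1}$ are $(\rho_{n-1}\cos\theta_{n-1},\rho_{n-1}\sin\theta_{n-1})$. In $F_n$ the first coordinate $\rho_{n-1}\cos\theta_{n-1}$ is replaced by the radial distance $\rho_n=R_n+\rho_{n-1}\cos\theta_{n-1}$ and then rotated by $\theta_n$. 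Concretely, write $F_n=(\rho_n e(\theta_n), G)$ with $e(t)=(\cos t,\sin t)$ and $G=(\rho_{n-1}\sin\theta_{n-1},\ldots,\rho_1\sin\theta_1)$, which depends only on $\theta_1,\ldots,\theta_{n-1}$.

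For the metric, I would compute partial derivatives directly. We have $\partial_{\theta_n}F_n=(\rho_n e'(\theta_n),0)$, which has norm $\rho_n$. For $k<n$, $\partial_{\theta_k}F_n=(\partial_k\rho_n\, e(\theta_n),\partial_k G)$. Orthogonality to $\partial_{\theta_n}F_n$ follows from $e\perp e'$. For $j,k<n$, the inner product is $\partial_j\rho_n\partial_k\rho_n+\langle\partial_jG,\partial_kG\rangle$. Since $\rho_n=R_n+(\text{first coordinate of }F_{n-1})$ and $G$ consists of the remaining coordinates of $F_{n-1}$, this equals $\langle\partial_jF_{n-1},\partial_kF_{n-1}\rangle$. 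The induction hypothesis then gives $\operatorname{diag}(\rho_1^2,\ldots,\rho_{n-1}^2)$. The base case $F_1=(R_1\cos\theta_1,R_1\sin\theta_1)$ is immediate. The volume form is then $\sqrt{\det g}=\prod\rho_k$, using $\rho_k>0$ by \eqref{eq:radius}.

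For the total volume, integrate $\prod_k\rho_k$ iteratively. Only $\rho_k$ depends on $\theta_{k-1}$ (beyond the lower variables), and it does so affinely in $\cos\theta_{k-1}$. So integrating in the order $\theta_n,\theta_{n-1},\ldots,\theta_1$, the innermost integrand in $\theta_n$ is constant, giving $2\pi$. Then the integral over $\theta_{n-1}$ of $\rho_n\cdot(\text{stuff independent of }\theta_{n-1}\text{ except }\rho_{n-1})$ kills the cosine term, producing $2\pi R_n$, and so on. The cleanest route is to prove by induction that $\int \prod_{k=1}^n\rho_k\,\dd\bm\theta=(2\pi)^n\prod R_k$. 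The step integrates $\theta_n$ first (factor $2\pi$), then writes $\rho_n=R_n+\rho_{n-1}\cos\theta_{n-1}$. The $R_n$ part gives $R_n$ times the $(n-1)$ case. The cosine part gives $\int\rho_{n-1}^2\cos\theta_{n-1}\prod_{k<n-1}\rho_k$, which vanishes because $\rho_{n-1}$ does not depend on $\theta_{n-1}$.

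For the normal, I would verify that the proposed $\nu$ is unit length and orthogonal to every $\partial_{\theta_k}F_n$, rather than deriving it from a cross product. Inductively, $\nu^{(n)}=(\cos\theta_{n}\,? )$. Reading \eqref{eq:normal} carefully in the paper's coordinate ordering, the natural guess is $\nu^{(n)}=(\nu^{(n-1)}_1 e(\theta_n),\nu^{(n-1)}_{2},\ldots,\nu^{(n-1)}_n)$. That is, the first component of the $(n-1)$-normal, which is $\prod_{j<n}\cos\theta_j$, is rotated by $\theta_n$. Orthogonality to $\partial_{\theta_n}F_n$ and to $\partial_{\theta_k}F_n$ for $k<n$ then reduces to the $(n-1)$ case by the same coordinate identification as above, and the unit length is inherited.

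I expect the main obstacle to be the indexing of the components in \eqref{eq:normal}. The stated component formulas must be matched against this inductive form, and the displayed formula may need the $\sin\theta_n$ factor placed in the second component (with $\nu_1$ carrying $\cos\theta_n$). I would check the cases $n=1,2$ explicitly to fix the convention, and the ambient sign is fixed as the outward choice.
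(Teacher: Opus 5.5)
Your proposal is correct and follows essentially the paper's argument: the metric comes from the same surface-of-revolution induction, which the paper phrases as $\dd q_k^2+\dd y_k^2=\dd q_{k-1}^2+q_{k-1}^2\dd\theta_{k-1}^2$ plus the rotation term $q_k^2\dd\theta_k^2$. The volume likewise comes from sequential integration in which each $\cos\theta_{k-1}$ term integrates to zero. The one difference is the normal: the paper reads it off as $\partial_T X$ of the normal-fiber map, which is unit and orthogonal to the $\partial_{\theta_k}X$ because the fiber metric is diagonal, whereas you verify a guessed formula inductively. Your inductive form $\nu^{(n)}=(\nu^{(n-1)}_1e(\theta_n),\nu^{(n-1)}_2,\ldots,\nu^{(n-1)}_n)$ already reproduces \eqref{eq:normal} exactly as displayed (the case $k=n$ gives $\nu_2=(\prod_{j<n}\cos\theta_j)\sin\theta_n$), so the re-indexing you worried about is not needed.
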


The proof is by direct differentiation and induction over the coordinates. See Figure \ref{fig:torus_with_normal} for a schematic of the embedding given by \eqref{eq:embed} with the normal field as in \eqref{eq:normal}.

\begin{figure}[h!]
\centering
\includegraphics[width=0.42\textwidth]{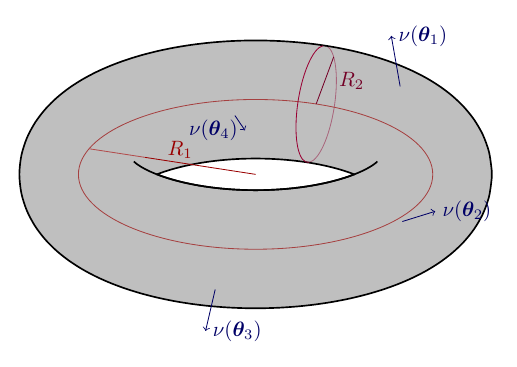}
\caption{A schematic of the embedded 2-torus with radii $R_1$ and $R_2$ and with normal vector field $\nu$ at points represented by angles $\bm{\theta}_1,\bm{\theta}_2,\bm{\theta}_3, \bm{\theta}_4\in \mathbb{T}^2$.}
\label{fig:torus_with_normal}
\end{figure}

\subsection{Intrinsic uniform law and exact sampling}
Following the work by \citet{Diaconis2013}, normalizing {the volume form \eqref{eq:volume_form} by the total volume \eqref{eq:volume}} gives the angular density of the intrinsic uniform distribution,
\begin{equation}
p_U({ \bm{\theta}})=\frac{1}{(2\pi)^n}
\prod_{j=1}^{n-1}
\left(1+\frac{\rho_j}{R_{j+1}}\cos\theta_j\right).
\label{eq:uniform}
\end{equation}
Because $\rho_j$ depends only on $\theta_1,\ldots,\theta_{j-1}$,  { the marginal distribution may be written as a product of conditional distributions.}  Specifically, 
\[
\Theta_j\mid\Theta_{1:j-1}
\sim\frac{1}{2\pi}(1+a_j\cos\theta_j),
\quad a_j=\rho_j/R_{j+1},
\]
for $j<n$, while $\Theta_n\sim U(0,2\pi)$.  { By assumption} \eqref{eq:radius} $0<a_j<1$.  Thus the exact cardioid transformation of \citet{BanerjeeBiswas2026} can be applied sequentially, giving an exact  {$\mathcal{O}(n)$} surface-uniform sampler.

\begin{theorem}
\label{thm:exactuniform_alg}
Initialize $\rho\leftarrow R_1$.  For $j=1,\ldots,n-1$, set
$a=\rho/R_{j+1}$, draw $X\sim U(0,2\pi)$ and $U\sim U(0,1)$, and let
\[
p(X)=\frac{1+a\cos X}{2}.
\]
Set
\[
\Theta_j=
\begin{cases}
X,&U<p(X),\\
\pi-X,&U\ge p(X),\ X<\pi,\\
3\pi-X,&U\ge p(X),\ X>\pi,
\end{cases}
\]
then update $\rho\leftarrow R_{j+1}+\rho\cos\Theta_j$.  Finally draw
$\Theta_n\sim U(0,2\pi)$.  Then $F_n(\Theta)$ is exactly uniform with respect to normalized Riemannian volume.
\end{theorem}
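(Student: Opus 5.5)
The plan is to reduce the theorem to two ingredients: the single-step cardioid transformation and the sequential factorization of $p_U$ already recorded after \eqref{eq:uniform}.

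\textbf{Single step.} Fix $a\in(0,1)$. I will show that the map described in the theorem sends $X\sim U(0,2\pi)$, $U\sim U(0,1)$ to a variable $\Theta$ with density $(1+a\cos\theta)/(2\pi)$ on $[0,2\pi)$. Let $\tau$ denote the reflection $x\mapsto \pi-x$ for $x<\pi$ and $x\mapsto 3\pi-x$ for $x>\pi$, understood modulo $2\pi$. On $(0,2\pi)$ minus a null set, $\tau$ is a measure-preserving involution, and it satisfies $\cos\tau(x)=-\cos x$. Hence $p(\tau(x))=1-p(x)$.

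For a Borel set $A$, conditioning on $X$ gives
\[
P(\Theta\in A)=\int_0^{2\pi}\frac{1}{2\pi}\Bigl[p(x)\mathbf 1_A(x)+\bigl(1-p(x)\bigr)\mathbf 1_A(\tau x)\Bigr]\dd x .
\]
In the second term substitute $y=\tau x$; the Jacobian is $1$ and $1-p(x)=p(y)$. This yields
\[
P(\Theta\in A)=\int_A \frac{2p(y)}{2\pi}\dd y=\int_A\frac{1+a\cos y}{2\pi}\dd y .
\]
Note that $0<p<1$ because $|a|<1$, so the rule is a valid randomization. The boundary points $X\in\{0,\pi\}$ have probability zero and can be ignored.

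\textbf{Sequential structure.} Draw the pairs $(X_j,U_j)$ independently of everything else. Then conditional on $\Theta_{1:j-1}$, the current value of $\rho$ in the algorithm equals $\rho_j(\Theta_{1:j-1})$ from \eqref{eq:rho}; this follows by a trivial induction on the update rule. Consequently $a=a_j$. By \eqref{eq:radius} and the fact, stated after \eqref{eq:radius}, that $\rho_j\le\sum_{i\le j}R_i<R_{j+1}$, we have $a_j\in(0,1)$. The single-step lemma applied conditionally therefore gives
\[
\Theta_j\mid\Theta_{1:j-1}\sim \frac{1+a_j\cos\theta_j}{2\pi},
\]
and $\Theta_n$ is uniform and independent. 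Multiplying these conditional densities reproduces exactly the product $p_U$ in \eqref{eq:uniform}.

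\textbf{Conclusion via Proposition~\ref{prop:geometry}.} It remains to check that $p_U$ is the normalized volume density. Write $\prod_k\rho_k=R_1\prod_{j=1}^{n-1}\rho_{j+1}$ and use $\rho_{j+1}=R_{j+1}(1+a_j\cos\theta_j)$. Dividing by \eqref{eq:volume} then gives \eqref{eq:uniform}. So the law of $\Theta$ has density $\operatorname{dVol}_n/\operatorname{Vol}$ in $\bm\theta$-coordinates. Since $F_n$ is an embedding, hence injective, the pushforward $F_n(\Theta)$ is the normalized Riemannian volume measure on $F_n(\T^n)$.

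\textbf{Main obstacle.} The only delicate point is the single-step reflection argument: verifying that $\tau$ is well defined modulo $2\pi$, that it is measure preserving, and that it flips the sign of cosine. I would check these piecewise on $(0,\pi)$ and $(\pi,2\pi)$; the rest is bookkeeping.
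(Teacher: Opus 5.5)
Your proposal is correct and follows the paper's argument. You factor the normalized volume density into sequential cardioid conditionals with $a_j=\rho_j/R_{j+1}\in(0,1)$, then apply the single-step cardioid transformation conditionally and multiply. The only difference is that you prove the single step yourself, via the measure-preserving reflection $\tau$ with $p(\tau x)=1-p(x)$, whereas the paper simply cites \citet{BanerjeeBiswas2026} for it.
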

 See Appendix \ref{sec:proof_exactuniform_alg} for a proof that this algorithm is exact. 

\section{NORMAL-FIBER PROJECTED GAUSSIAN}
In this section we shall describe the geometry and the uniform distribution of the curved $n-$torus $F_n(\T^n)$.

\subsection{Normal-fiber coordinates}
{In order to define the (maximal) tubular neighborhood around the torus $F_n(\mathbb{T}^n) 
\subset\mathbb{R}^{n+1}$, we shall replace the innermost radius $R_1$ by the fiber coordinate $T>0$ and define} 
\begin{equation}
q_1(T,{\bm{\theta}})=T,
\quad
q_k(T,{\bm{\theta}})=R_k+q_{k-1}(T,{\bm{\theta}})\cos\theta_{k-1}.
\label{eq:q}
\end{equation}
Now an ambient point can be expressed in the coordinates $(T,\bm{\theta}) $ by replacing every $\rho_k$ by $q_k$ in \eqref{eq:embed}. That is we define the normal-fiber coordinate function by
\begin{equation}
X(T,\bm{\theta}) =(q_n\cos\theta_n,q_n\sin\theta_n,
q_{n-1}\sin\theta_{n-1}, \ldots, q_1\sin\theta_1).
    \label{eq:fiber-map}
\end{equation}
 which  satisfies the affine identity
\begin{equation}
X(T,{\bm{\theta}})=F_n({\bm{\theta}})+(T-R_1)\nu({\bm{\theta}}).
\label{eq:fiber-affine}
\end{equation}
The  metric { of the tubular neighborhood  in $(T,\bm{\theta})$ coordinates } is
\begin{equation}
\dd s^2=\dd T^2+\sum_{k=1}^nq_k(T,{\bm{\theta}})^2\dd\theta_k^2,
\label{eq:fiber-metric}
\end{equation}
and the  Jacobian { determinant of the tubular neighborhood parametrization $X$} is
\begin{equation}
J(T,{\bm{\theta}})=T\prod_{k=2}^nq_k(T,{\bm{\theta}}).
\label{eq:fiber-jac}
\end{equation}
 { Note that every $q_k$ is affine in $T$ and that $q_k(T,\bm{\theta})>0$ for all $0<T\leq R_1$ by \eqref{eq:radius}. In order for the derivative of the normal fiber coordinates to attain maximal rank, \eqref{eq:fiber-jac} implies that each $q_k$ remains positive. Therefore, we define, for fixed $\bm{\theta}= ( \theta_1,\dots,\theta_n)$, the upper bound for $T$ by
\begin{equation}
T_0(\bm{\theta})=\inf_{k=1,\dots,n} \{T>0: q_k(T,\bm{\theta})= 0 \}
\label{eq:Tzero}
\end{equation}
and naturally $T_0(\bm{\theta})= \infty$ in case the set 
\[
\bigcup_{k=1}^n\{T>0: q_k(T,\bm{\theta})= 0 \}
\]
 is empty. Hence the normal fiber coordinates $X$ is defined for all $(T,\bm{\theta})$ such that $0<T< T_0(\bm{\theta})$ and $\bm{\theta}\in  \mathbb{T}^n$. On this domain, denoted $\mathcal{D}$, $X$ is one-to-one, and its inverse is obtained recursively by two-dimensional polar inversions. We obtain the projection map $\pr:X(\mathcal{D})\rightarrow F_n(\mathbb{T}^n)$ defined by $\pr(x_0,x_1,\dots,x_n)=F_n(X^{-1}(x_0,x_1,\dots,x_n))$ where $F_n$ is extended to the coordinates $(T,\bm{\theta})$ by defining $F_n(T,\bm{\theta}):= F_n(\bm{\theta}).$ See Appendix~\ref{sec:normalfibercoordinatelemma} for more details.
}

\subsection{Projected Gaussian density}
Let $Z\sim\N_{n+1}(\mu,\Sigma)$, and define { the projected normal as the random variable $\pr(Z)$ which is almost surely defined if $\mathbb{P}(Z\in X(\mathcal{D}))=1$.}  
{ We define $a(\bm{\theta}):= F_n({\bm{\theta}})-R_1\nu(\bm{\theta})-\mu$ and denote
\begin{align*}
A(\bm{\theta})&=\nu({\bm{\theta}})^\top \Sigma^{-1}\nu({\bm{\theta}}),&
B({\bm{\theta}})&=\nu({\bm{\theta}})^\top \Sigma^{-1}a(\bm{\theta}),\\
C(\bm{\theta})&=-\frac{1}{2}a(\bm{\theta})^\top \Sigma^{-1}a(\bm{\theta}),& K(\bm{\theta}) &= \frac{B(\bm{\theta})}{\sqrt{A(\bm{\theta})}}.
\end{align*}
}

\begin{theorem}
\label{thm:pgdensity} The probability density function of $\pr(Z)$ in the angular coordinates $\bm{\theta}\in \T^n$  is given by
\begin{multline}
p_{\operatorname{pr}(Z)}(\bm{\theta})=\frac{\exp\big(C(\bm{\theta}) + \frac{1}{2} K^2(\bm{\theta})\big)}
{(2\pi)^{(n+1)/2}|\Sigma|^{1/2} A(\bm{\theta})}
\int_{K(\bm{\theta})}^{\sqrt{A}(\bm{\theta})T_0(\bm{\theta})+K(\bm{\theta})} 
\exp\left(-\frac{1}{2}S^2\right)
\\(S-K(\bm{\theta}))  \prod_{k=2}^n q_k\left(\frac{S-K(\bm{\theta})}{\sqrt{A}(\bm{\theta})},\bm{\theta}\right)\,\dd S.
\label{eq:angular-density}
\end{multline}

{The corresponding push forward probability measure on the embedded torus satisfies
\begin{equation}
\mathbb{P}\big(\pr(Z)\in F_n(V)\subseteq F_n(\mathbb{T}^n)\big)=\int_{\bm{\theta}\in V} \frac{p_{\pr(Z)}(\bm{\theta})}{J(R_1,\bm{\theta})} \operatorname{dVol}(\bm{\theta}).
\label{eq:surface-density}
\end{equation}}
\end{theorem}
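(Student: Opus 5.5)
We prove the angular density by changing variables from the ambient Gaussian to the normal-fiber coordinates and then integrating out the fiber coordinate $T$. On $\mathcal{D}$ the map $X$ is a diffeomorphism onto $X(\mathcal{D})$ with Jacobian determinant \eqref{eq:fiber-jac}. We assume $\mathbb{P}(Z\in X(\mathcal{D}))=1$. The change-of-variables formula then gives the joint density of $(T,\bm{\Theta})=X^{-1}(Z)$:
\[
p_{T,\bm\Theta}(T,\bm{\theta})=\frac{1}{(2\pi)^{(n+1)/2}|\Sigma|^{1/2}}\exp\Bigl(-\tfrac12 (X(T,\bm\theta)-\mu)^\top\Sigma^{-1}(X(T,\bm\theta)-\mu)\Bigr)\,T\prod_{k=2}^n q_k(T,\bm\theta),
\]
valid for $0<T<T_0(\bm\theta)$. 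Since $\pr(Z)=F_n(\bm\Theta)$, the angular density of $\pr(Z)$ is the $\bm\Theta$-marginal, namely $\int_0^{T_0(\bm\theta)}p_{T,\bm\Theta}\,\dd T$.

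The key algebraic step uses the affine identity \eqref{eq:fiber-affine}. It gives
\[
X(T,\bm\theta)-\mu=F_n-R_1\nu-\mu+T\nu=a(\bm\theta)+T\nu(\bm\theta).
\]
The quadratic form is therefore $T^2A+2TB-2C$, and the exponent equals $-\tfrac12 AT^2-BT+C$. Completing the square yields
\[
-\tfrac12\bigl(\sqrt{A}\,T+K\bigr)^2+\tfrac12K^2+C.
\]
Note that $A>0$ because $\nu$ is a unit vector and $\Sigma^{-1}$ is positive definite, so $K$ is well defined. We then substitute $S=\sqrt{A}\,T+K$, so that $T=(S-K)/\sqrt{A}$ and $\dd T=\dd S/\sqrt{A}$. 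The limits $T=0$ and $T=T_0$ become $S=K$ and $S=\sqrt{A}\,T_0+K$.

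The factor $T$ in the Jacobian becomes $(S-K)/\sqrt{A}$, and together with $\dd T=\dd S/\sqrt{A}$ this produces the prefactor $1/A$ in \eqref{eq:angular-density}. The remaining factors $q_k$ are evaluated at $((S-K)/\sqrt{A},\bm\theta)$, exactly as displayed. If $T_0=\infty$, the upper limit is $+\infty$ and the integral converges because of the Gaussian factor, since each $q_k$ grows only polynomially in $S$.

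For the surface statement \eqref{eq:surface-density} we start from
\[
\mathbb{P}(\pr(Z)\in F_n(V))=\mathbb{P}(\bm\Theta\in V)=\int_V p_{\pr(Z)}(\bm\theta)\,\dd\bm\theta .
\]
We then rewrite $\dd\bm\theta$ using \eqref{eq:volume_form}, which gives $\dd\bm\theta=\operatorname{dVol}/\prod_k\rho_k$. Since $q_k(R_1,\bm\theta)=\rho_k$ by comparing \eqref{eq:q} with \eqref{eq:rho}, we have $J(R_1,\bm\theta)=R_1\prod_{k\ge2}\rho_k=\prod_k\rho_k$. Hence $\dd\bm\theta=\operatorname{dVol}/J(R_1,\bm\theta)$, and \eqref{eq:surface-density} follows. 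Since $F_n$ is injective, events $F_n(V)$ correspond to events $V$ in angle space, so no further bookkeeping is needed.

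The main obstacle is justifying the change of variables rigorously. This requires that $X$ be a bijective local diffeomorphism from $\mathcal{D}$ onto its image, with Jacobian determinant given by \eqref{eq:fiber-jac}. It also requires that $X(\mathcal{D})^c$ carry no Gaussian mass, which is handled by the standing assumption. We plan to cite the normal-fiber coordinate lemma in Appendix~\ref{sec:normalfibercoordinatelemma} for injectivity via recursive polar inversion. The determinant itself follows from the orthogonal-frame structure underlying the diagonal metric \eqref{eq:fiber-metric}: the columns of $DX$ are orthogonal with norms $1,q_1,\dots,q_n$, and $q_1=T$. Orthogonality of the $\partial_T$ column to the others follows from $\partial_TX=\nu$ and the fact that $\nu$ is normal to the level sets of $T$. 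The $\theta$-columns are mutually orthogonal by the same induction used in Proposition~\ref{prop:geometry}, with $R_1$ replaced by $T$.
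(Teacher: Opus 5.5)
Your proposal is correct and follows the paper's proof step for step. Both arguments change variables to normal-fiber coordinates with Jacobian $T\prod_{k\ge2}q_k$, then use the affine identity to split the exponent as $C-\tfrac12AT^2-BT$, integrate over $(0,T_0(\bm\theta))$, substitute $S=\sqrt{A}\,T+K$, and finally obtain the surface form from $\operatorname{dVol}=J(R_1,\bm\theta)\dd\bm\theta$. Your extra checks ($A>0$, convergence when $T_0=\infty$, injectivity of $F_n$, and orthogonality of the columns of $DX$) are sound details that the paper leaves implicit.
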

Theorem~\ref{thm:pgdensity} follows directly from Gaussian change of variables in $(T,\theta)$ followed by integration along each normal fiber, see Appendix~\ref{sec:proof_prob_density}. 

Since
\[
T\prod_{k=2}^nq_k(T,{\bm{\theta}})=\sum_{m=1}^{n}c_m({\bm{\theta}})T^m,
\]
the integral in \eqref{eq:angular-density} is a finite linear combination of truncated Gaussian moments. How these moments can be computed is given explicitly in Appendix~\ref{sec:truncated_form_pg}.

{
\subsection{Identifiability of an isotropic projected Gaussian}
Standard projected-normal models generally require identifying
constraints because of scale non-identifiability
\cite[see][]{WangGelfand2013}.  For the recursively curved embedding studied
here, however, this scale ambiguity is removed in the isotropic
subfamily. Theorem~\ref{thm:ident} establishes global identifiability
of $(\mu,\sigma)$ for $n\ge2$.

\begin{theorem}
\label{thm:ident}
Let $Z \sim N_{n+1}(\mu,\sigma^2 I)$ and let $Z'\sim N_{n+1}(\tilde{\mu},\tilde{\sigma}^2 I)$ with $\sigma,\tilde{\sigma}>0$. Then it holds that $\pr(Z)\sim \pr(Z')$ if, and only if, $\mu=\tilde{\mu}$ and $\sigma= \tilde{\sigma}$.
\end{theorem}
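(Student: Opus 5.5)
The ``if'' direction is immediate, so the work is the ``only if'' direction: the law of $\pr(Z)$ should determine $(\mu,\sigma)$. I would work with the angular density of Theorem~\ref{thm:pgdensity}. Equality in law of $\pr(Z)$ and $\pr(Z')$ gives $p_{\pr(Z)}=p_{\pr(Z')}$ almost everywhere on $\T^n$, and by real-analyticity in $\bm\theta$ (off the set where $T_0$ switches branches) this equality holds on open sets. In the isotropic case $A\equiv\sigma^{-2}$, and the density is the fiber integral
\[
p(\bm\theta)=\int_0^{T_0(\bm\theta)}\varphi_\sigma\bigl(c(\bm\theta)+T\nu(\bm\theta)-\mu\bigr)\,J(T,\bm\theta)\dd T,
\]
where $c(\bm\theta)=F_n(\bm\theta)-R_1\nu(\bm\theta)=X(0,\bm\theta)$ and $\varphi_\sigma$ is the isotropic Gaussian density. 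This follows from the affine identity \eqref{eq:fiber-affine}.

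\textbf{Planar slices.} The key geometric observation is that $c(\bm\theta)$ does not depend on $\theta_1$, since $q_1=0$ there. Fix $\bm\theta_{2:n}$. As $\theta_1$ varies, the normal rays $\{c+T\nu\}$ therefore all emanate from one point $c=c(\bm\theta_{2:n})$. By \eqref{eq:normal}, their directions $\nu$ sweep a unit circle in a fixed $2$-plane $P(\bm\theta_{2:n})$ through $c$. Thus the conditional angular law of $\Theta_1$ given $\bm\Theta_{2:n}$ is a weighted planar projected normal. The centre is $c$, the mean enters through the orthogonal projection of $\mu$ onto $P$, the distance of $\mu$ from $P$ only contributes a factor constant in $\theta_1$, and the weight $J(T,\bm\theta)=T\prod_{k\ge2}q_k$ is a known polynomial in $T$ with $\theta_1$-dependent coefficients.

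\textbf{Recovering $\mu$.} First I would show that this conditional law determines the in-plane component $\pr_{P}(\mu)-c$. The plan is to compare the Fourier coefficients, or equivalently the location of the mode, in $\theta_1$ of the planar projected normal with its $\theta_1$-dependent polynomial weight. Then I would let $\bm\theta_{2:n}$ range over an open set. Since $n\ge2$, the planes $P(\bm\theta_{2:n})$ rotate with $\theta_2$ and move with $c$, and one should check that their normal spaces jointly span $\R^{n+1}$. Knowing the projection of $\mu$ onto a spanning family of affine planes pins down $\mu$. To handle any leftover coupling with $\sigma$, I would recover $\mu$ and $\sigma$ together from the two-parameter family of conditional laws.

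\textbf{Recovering $\sigma$.} Standard projected normals lose $\sigma$ because they are invariant under the dilation $x\mapsto\lambda x$ about the projection centre. Here that invariance is broken, because $J$ and $T_0$ involve the fixed radii $R_k$ through $q_k=R_k+q_{k-1}\cos\theta_{k-1}$. Concretely, I would take the truncated-moment representation from Appendix~\ref{sec:truncated_form_pg}. This writes $p(\bm\theta)$ as $\sum_m c_m(\bm\theta)\,\sigma^{m+1}M_m(K(\bm\theta))$ times the Gaussian prefactor, where $M_m$ are truncated Gaussian moments. With $\mu$ already fixed, I would then show that this expression is strictly monotone in $\sigma$ at a suitable $\bm\theta$. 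A convenient choice is $\bm\theta=0$, where $T_0=\infty$ and every $q_k$ is increasing in $T$.

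\textbf{Main obstacle.} I expect the hardest step to be the rigorous inversion for $\mu$: showing that the weighted planar projected normal determines the in-plane mean despite the polynomial weight and the truncation at $T_0$. My first attempt would be a large-$|\theta_1|$-free argument via the analytic continuation of $K(\bm\theta)=\nu\cdot(c-\mu)/\sigma$. Since $\nu$ is a trigonometric polynomial in $\theta_1$, the function $\sigma K$ is a linear trigonometric polynomial in $\theta_1$ whose coefficients are the in-plane coordinates of $c-\mu$. I would then try to read off $K$ from the exponential factor $\exp(C+\tfrac12K^2)$ through its growth in a complex direction of $\theta_1$.
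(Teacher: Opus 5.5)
\textbf{There is a genuine gap.} Your geometric preliminaries are correct. $X(0,\bm\theta)$ does not depend on $\theta_1$. The fibers over a fixed $\bm\theta_{2:n}$ lie in an affine $2$-plane. As $\bm\theta_{2:n}$ varies, the in-plane direction orthogonal to the last coordinate axis sweeps an open subset of a unit sphere, so the in-plane projections would indeed pin down $\mu$.

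However, the two steps that actually identify the parameters are not established, and one of them fails as stated.
\begin{itemize}
\item \emph{Recovering $\mu$.} The inversion of the weighted, truncated planar projected normal for the in-plane mean is only sketched. Your idea of reading off $K$ from the growth of $\exp(C+\tfrac12K^2)$ overlooks that the truncated factor $\int_K e^{-S^2/2}\cdots\,\dd S$ also depends on $K$ and can cancel that growth: for real $K\to+\infty$ it behaves like $e^{-K^2/2}$ times a lower-order factor.
\item \emph{Order of recovery.} The conditional law depends on the in-plane mean and $\sigma$ jointly; without the weight it would depend only on their ratio. So the order ``first $\mu$, then $\sigma$ with $\mu$ fixed'' is not available without the joint inversion you postpone.
\item \emph{Recovering $\sigma$.} This step is false at the point you propose: for fixed $\mu$, $\sigma\mapsto p_{\mu,\sigma}(\bm\theta)$ need not be monotone. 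Take $n=2$ and $\bm\theta=0$, whose fiber is the ray $\{(R_2+T,0,0):T>0\}$ with $J(T,0)=T(R_2+T)$ and $T_0=\infty$, and let $\mu=(R_2+R_1,0,\varepsilon)$. Then $p_{\mu,\sigma}(0)\to0$ as $\sigma\to0$. It is approximately $R_1(R_1+R_2)/(2\pi\sigma^2)$ for $\varepsilon\ll\sigma\ll R_1$. It tends to $1/(4\pi)$ as $\sigma\to\infty$. So it rises and then falls, and no argument is offered for any other $\bm\theta$.
\end{itemize}

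\textbf{The missing idea.} No density analysis is needed: identification follows from the probabilities of a few events that are functions of the angles.
\begin{itemize}
\item Since every $q_k>0$ on $\mathcal D$, we have $\operatorname{sign}Z_0=\operatorname{sign}\cos\Theta_n$, $\operatorname{sign}Z_1=\operatorname{sign}\sin\Theta_n$, and $\operatorname{sign}Z_j=\operatorname{sign}\sin\Theta_{n-j+1}$ for $j\ge2$. The angular law therefore gives $\Phi(\mu_j/\sigma)$, and hence every $\lambda_j=\mu_j/\sigma$.
\item The outer radius then fixes the scale. Since $\sqrt{Z_0^2+Z_1^2}-R_n=q_{n-1}\cos\Theta_{n-1}$, we get $\mathbb P(\cos\Theta_{n-1}<0)=H_{\lambda_0,\lambda_1}(R_n/\sigma)$. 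Here $H_{\lambda_0,\lambda_1}$ is the strictly increasing CDF of a noncentral $\chi$ variable with two degrees of freedom, so this probability determines $\sigma$.
\item Finally $\mu=\sigma\lambda$.
\end{itemize}
Your remark that the fixed radii break the dilation invariance of ordinary projected normals is the right intuition. It becomes a proof once it is captured by the single threshold event $\{\sqrt{Z_0^2+Z_1^2}<R_n\}=\{\cos\Theta_{n-1}<0\}$, rather than extracted from the fiber-integral density.
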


The idea behind the proof uses that the projection preserves the signature of the first two coordinates of a random variable $Z$ which allows us to identify standardized coordinates $\frac{\mu_0}{\sigma},\frac{\mu_1}{\sigma}$. Then, the parameter $\sigma$ is identified by the non-central chi distribution of $\sqrt{Z_0^2+Z_1^2}$ which appears in the third coordinate. Lastly, the reductive curvature structure of the torus gives the rest of the parameters $\mu_2,\dots,\mu_n$, see Appendix~\ref{sec:proof_indentifiability} for full proof.
}

\section{STATISTICAL INFERENCE}
\subsection{Maximum likelihood theory}
For the identifiable isotropic family let $\gamma=(\mu^\top,\sigma)^\top$ and let $\ell_N(\gamma)$ be the log-likelihood { of the probability density function given in} \eqref{eq:angular-density}.  Furthermore, we consider the compact parameter set such that $F_n(\mathbb{T}^n)$ is contained in the interior for the mean parameter space, then differentiability and domination of the density, and positive-definite Fisher information, standard M-estimation arguments apply because the torus is compact and the density is smooth away from measure-zero coordinate boundaries, see Appendix~\ref{sec:basics_max_log_likelihood} for an overview of the definitions used in the maximum log-likelihood estimation.

\begin{theorem}
\label{thm:mle}
{Let $\Theta_1,\dots,\Theta_N$ be independent identically distributed observations following the distribution of $\pr(Z)$ with $Z\sim N_{n+1}(\mu_0,\sigma_0^2 I)$ and} {assume the parameter space $\Xi$ in} \eqref{eq:compact-parameter-space} { is compact with $\vartheta_0=(\mu_0^T,\sigma_0)^T\in \operatorname{int}\Xi$}. Then any maximizer
$\widehat\vartheta_N$ { of the log-likelihood defined by } \eqref{eq:mle-def} satisfies
\begin{equation}
\widehat\vartheta_N\longrightarrow\vartheta_0
\quad\text{almost surely}.
\label{eq:mle-strong-consistency}
\end{equation}
If, in addition, the Fisher information matrix $\mathcal I(\vartheta_0)$ is positive definite, then
\begin{equation}
\sqrt N\big(\widehat\vartheta_N-\vartheta_0\big)
\ \xrightarrow{{\dd}}\ 
\mathcal N_{n+2}
\left(0,\mathcal I(\vartheta_0)^{-1}\right).
\label{eq:mle-clt}
\end{equation}
\end{theorem}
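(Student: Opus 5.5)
The argument is the classical Wald consistency plus Cramér asymptotic normality route, applied to the angular density $p_\vartheta(\bm\theta)$ of \eqref{eq:angular-density} with $\Sigma=\sigma^2 I$, regarded as a family on the compact manifold $\T^n$.

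\textbf{Regularity.} First I verify that $(\vartheta,\bm\theta)\mapsto p_\vartheta(\bm\theta)$ is continuous on $\Xi\times\T^n$ and bounded above and below by positive constants.

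With $\Sigma=\sigma^2I$ we have $A=\sigma^{-2}$ (since $\|\nu\|=1$), $B=\sigma^{-2}\nu^\top a$ and $C=-\|a\|^2/(2\sigma^2)$. Here $a(\bm\theta)=F_n(\bm\theta)-R_1\nu(\bm\theta)-\mu$ is bounded and smooth in $(\mu,\bm\theta)$.

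The integrand in \eqref{eq:angular-density} is a polynomial in $S$ times a Gaussian. Its upper limit $\sqrt A\,T_0+K$ depends on $\bm\theta$ only through $T_0$, which is piecewise smooth and possibly infinite. The density is therefore continuous, and smooth in $\vartheta$ for every fixed $\bm\theta$ off a null set where $T_0$ changes branch.

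Compactness of $\Xi$ bounds $\sigma$ away from $0$ and $\infty$ and keeps $\mu$ in a bounded set. Hence there are constants $0<c\le p_\vartheta(\bm\theta)\le C'<\infty$ uniformly on $\Xi\times\T^n$. The lower bound holds because the Gaussian puts positive mass on a neighbourhood of every normal fiber segment.

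It follows that $\sup_{\vartheta\in\Xi}|\log p_\vartheta(\bm\theta)|$ is bounded, which is an integrable envelope. The same argument, differentiating under the integral sign (the moving limit contributes a boundary term that is also bounded), gives bounded first and second derivatives in $\vartheta$ on a neighbourhood of $\vartheta_0$.

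\textbf{Consistency.} Identifiability comes from Theorem~\ref{thm:ident}. If $\vartheta\ne\vartheta_0$, then $p_\vartheta\ne p_{\vartheta_0}$ on a set of positive measure, so by Jensen's inequality
\[
M(\vartheta)=\E_{\vartheta_0}\log p_\vartheta(\Theta)
\]
has the unique maximizer $\vartheta_0$. Continuity of $M$ follows from dominated convergence.

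The uniform strong law on the compact set $\Xi$, with a continuous integrand and an integrable envelope (Jennrich, or Wald's argument), gives
\[
\sup_{\Xi}\bigl|N^{-1}\ell_N-M\bigr|\to0 \quad\text{almost surely}.
\]
The argmax theorem then yields \eqref{eq:mle-strong-consistency}.

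\textbf{Asymptotic normality.} Since $\vartheta_0\in\operatorname{int}\Xi$, for large $N$ the maximizer $\widehat\vartheta_N$ is interior and the score equation $\nabla\ell_N(\widehat\vartheta_N)=0$ holds. I Taylor expand the score around $\vartheta_0$.

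Bounded derivatives justify exchanging differentiation and integration. This gives a mean-zero score with covariance $\mathcal I(\vartheta_0)$, and the information identity $-\E\nabla^2\log p=\mathcal I$.

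The multivariate CLT applies to $N^{-1/2}\nabla\ell_N(\vartheta_0)$. A uniform LLN for the Hessian near $\vartheta_0$, combined with consistency, handles the mean-value term. Slutsky's lemma and the positive definiteness of $\mathcal I(\vartheta_0)$ then give \eqref{eq:mle-clt}.

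\textbf{Main obstacle.} The delicate step is the regularity claim, because of the $\bm\theta$-dependent truncation $T_0(\bm\theta)$. It is non-smooth where the index attaining the infimum in \eqref{eq:Tzero} switches, and where it becomes infinite.

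I would handle this in two parts. First, differentiability is needed only in $\vartheta$ for fixed $\bm\theta$; $T_0$ does not depend on $\vartheta$, so the only $\vartheta$-dependence of the limit is through $\sqrt A$ and $K$, which are smooth. Second, the switching set is a finite union of lower-dimensional analytic sets, hence a null set.

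Finally, I must check that the almost-sure definedness assumption $\mathbb P(Z\in X(\mathcal D))=1$ holds on all of $\Xi$. If not, the density should be normalized accordingly, and the normalizing constant is again a smooth, positive function of $\vartheta$.
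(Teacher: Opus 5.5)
Your proposal is correct and follows essentially the same route as the paper. That route is: uniform two-sided density bounds and smoothness in $\vartheta$ from Lemma~\ref{lem:likelihood-regularity}; a uniform strong law on the compact set $\Xi$, which the paper proves by hand via a modulus of continuity and a finite $\delta$-net rather than citing Jennrich; the Kullback--Leibler argument combined with Theorem~\ref{thm:ident} to get a unique population maximizer; and then the score CLT, mean-value expansion, Hessian law of large numbers and Slutsky. Your closing worry about $\mathbb P(Z\in X(\mathcal D))=1$ is already settled by Lemma~\ref{lem:fiber-coordinates}: the recursive polar inverse is defined off a Lebesgue-null set for every nondegenerate Gaussian, so no renormalization is needed.
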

See Appendix~\ref{sec:proof_mle} for the proof.

\subsection{Minimum-MMD estimation for the flexible family}

For applications, the recursive geometry is fixed and we use the
structured ambient covariance
\begin{equation}
\Sigma=
\operatorname{diag}
\left(
\lambda_{\min}+e^{2s_1},\ldots,
\lambda_{\min}+e^{2s_{n+1}}
\right)
+vv^\top,
\label{eq:flexcov}
\end{equation}
where $\lambda_{\min}>0$ is a fixed covariance floor,
$s,v\in\mathbb{R}^{n+1}$, and
$\eta=(\mu,s,v)$ denotes the full ambient-Gaussian parameter.
The diagonal component controls coordinate-specific variation, while
$vv^\top$ provides a parsimonious dependence structure.

Let $P_\eta$ denote the angular distribution obtained by projecting
$Z\sim N_{n+1}(\mu,\Sigma)$ onto the curved $n-$torus $F_n(\T^n)$.
We estimate $\eta$ by minimum MMD after the sine-cosine embedding
$$
\Phi(\bm{\theta})
=
(\cos\theta_1,\ldots,\cos\theta_n,
 \sin\theta_1,\ldots,\sin\theta_n).
$$
Let $\widehat J_{N,m}(\eta)$ be the simulated empirical squared-MMD
criterion based on $N$ observations and $m$ model-generated draws.

\begin{proposition}
\label{prop:mmd}
Assume the parameter space $\mathcal H$ is compact,
$\eta\mapsto P_\eta$ is continuous in the MMD metric, the kernel is
bounded and characteristic, $m=m_N\to\infty$, and
\begin{align*}
\sup_{\eta\in\mathcal H}
\left|
\widehat J_{N,m_N}(\eta)-J(\eta)
\right|
&\to_p0,\\
J(\eta)
&=\MMD^2(P_0,P_\eta).
\end{align*}
If $\widehat\eta_N$ is an $o_p(1)$-approximate minimizer, then
$$
\dist\!\left(
\widehat\eta_N,
\argmin_{\eta\in\mathcal H}J(\eta)
\right)\to_p0.
$$
If $P_0=P_{\eta_0}$, then
$
\MMD(P_{\widehat\eta_N},P_{\eta_0})\to_p0;
$
parameter consistency follows whenever the chosen parametrization is
identifiable.
\end{proposition}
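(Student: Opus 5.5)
This is a standard argmin-consistency argument for M-estimators, carried out in three steps. Write $J(\eta)=\MMD^2(P_0,P_\eta)$ and $\widehat J_N:=\widehat J_{N,m_N}$.

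\textbf{Step 1: $J$ is continuous and the argmin set is nonempty and compact.} Since $\MMD$ is a metric on probability measures when the kernel is bounded and characteristic, the triangle inequality gives
\[
|\MMD(P_0,P_\eta)-\MMD(P_0,P_{\eta'})|\le \MMD(P_\eta,P_{\eta'}).
\]
So $J$ is continuous on $\mathcal H$ by the assumed MMD-continuity of $\eta\mapsto P_\eta$. By compactness of $\mathcal H$, the set $\mathcal A:=\argmin_{\mathcal H}J$ is nonempty and compact. Write $J^*=\min J$.

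\textbf{Step 2: distance to $\mathcal A$.} Fix $\epsilon>0$. The set $\mathcal H_\epsilon=\{\eta\in\mathcal H:\dist(\eta,\mathcal A)\ge\epsilon\}$ is compact and disjoint from $\mathcal A$. Hence, by continuity,
\[
\delta_\epsilon:=\inf_{\mathcal H_\epsilon}J-J^*>0
\]
(if $\mathcal H_\epsilon$ is empty there is nothing to prove). This well-separatedness step is the only place compactness is really needed, and it is the main point to check.

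Let $\widehat J_N(\widehat\eta_N)\le\inf_{\mathcal H}\widehat J_N+o_p(1)$ and $\Delta_N:=\sup_{\mathcal H}|\widehat J_N-J|\to_p0$. Then
\[
J(\widehat\eta_N)\le \widehat J_N(\widehat\eta_N)+\Delta_N\le \widehat J_N(\eta^*)+o_p(1)+\Delta_N\le J^*+2\Delta_N+o_p(1)
\]
for any $\eta^*\in\mathcal A$. Consequently
\[
\mathbb P\bigl(\dist(\widehat\eta_N,\mathcal A)\ge\epsilon\bigr)\le\mathbb P\bigl(2\Delta_N+o_p(1)\ge\delta_\epsilon\bigr)\to0.
\]
Measurability of $\widehat\eta_N$ may require outer probabilities, which I would note in passing.

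\textbf{Step 3: well-specified case.} If $P_0=P_{\eta_0}$, then $J^*=0$. The display above gives $J(\widehat\eta_N)\to_p0$, that is, $\MMD(P_{\widehat\eta_N},P_{\eta_0})\to_p0$. If moreover $\eta\mapsto P_\eta$ is injective, the characteristic property gives
\[
J(\eta)=0\iff P_\eta=P_{\eta_0}\iff\eta=\eta_0,
\]
so $\mathcal A=\{\eta_0\}$. Step 2 then yields $\widehat\eta_N\to_p\eta_0$.
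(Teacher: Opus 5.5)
Your proposal is correct and follows essentially the same route as the paper's proof. Compactness and continuity of $J$ give a positive separation gap $\inf_{\mathcal H_\epsilon}J-J^*$; uniform convergence of $\widehat J_{N,m_N}$ and the vanishing optimization error then keep $\widehat\eta_N$ out of $\mathcal H_\epsilon$, and identifiability makes the argmin the singleton $\{\eta_0\}$. The differences are only refinements: you justify continuity of $J$ through the MMD triangle inequality, which the paper simply asserts, and your direct bound $J(\widehat\eta_N)\le J^*+2\Delta_N+o_p(1)$ yields $\MMD(P_{\widehat\eta_N},P_{\eta_0})\to_p0$ immediately, whereas the paper deduces it from the set convergence together with (implicitly uniform) continuity of $J$.
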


Proposition~\ref{prop:mmd}
concerns an idealized asymptotic minimum-MMD estimator with increasing
simulation size and asymptotically negligible optimization error. The proof and implementation
details are given in Appendix~\ref{app:flexible-mmd}.

\subsection{Intrinsic energy and Fr\'echet summaries}

For a smooth curve $\gamma:[0,1]\rightarrow \mathbb{R}^n$, define the energy functional as 
 \begin{equation}
\mathcal{E}[\gamma]=\int_0^1
\sum_{j=1}^n\rho_j(\gamma(t))^2\dot\gamma_j(t)^2\,\dd t.
\label{eq:energy}
\end{equation}
This is the usual Riemannian energy functional
\cite[see][Section~3.2]{srivastava2016functional}, expressed in the
angular coordinates of $F_n(\T^n)$.  The following proposition shows
that the intrinsic geodesic distance on the embedded $n$-torus, $F_n(\mathbb{T}^n)$ can be
obtained by minimizing this energy over all winding classes of the
universal cover $\mathbb{R}^n$ of $\T^n$.

\begin{proposition}
\label{prop:winding-attainment}

For points on $F_n(\T^n)$ represented by angular coordinates
$\bm{\alpha},\bm{\beta}\in\T^n$, the squared intrinsic geodesic
distance is
\begin{equation}
d_M^2(\bm{\alpha},\bm{\beta})=
\min_{k\in\mathbb Z^n}
\inf_{\substack{\gamma(0)=\bm{\alpha}\\
\gamma(1)=\bm{\beta}+2\pi k}}\mathcal{E}[\gamma].
\label{eq:distance}
\end{equation}
where $d_M$ is the Riemannian geodesic distance induced by the Riemannian metric
\eqref{eq:riemannian_metric_local}.
\end{proposition}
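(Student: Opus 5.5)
The plan is to reduce \eqref{eq:distance} to two standard facts: the relation between length and energy for curves on a Riemannian manifold, and path lifting through the covering map $\pi:\R^n\to\T^n$. Pull the metric \eqref{eq:riemannian_metric_local} back to $\R^n$ via $\pi$. Since each $\rho_j$ is a trigonometric polynomial in $\theta_1,\ldots,\theta_{j-1}$, it is $2\pi\mathbb{Z}^n$-periodic. Hence $\tilde g=\operatorname{diag}(\rho_1^2,\ldots,\rho_n^2)$ on $\R^n$ is well defined, and $\pi:(\R^n,\tilde g)\to(F_n(\T^n),g)$ is a Riemannian covering, through the isometric embedding of Proposition~\ref{prop:geometry}. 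For a curve $\gamma$ in $\R^n$, the Riemannian length is $L[\gamma]=\int_0^1\bigl(\sum_j\rho_j(\gamma)^2\dot\gamma_j^2\bigr)^{1/2}\dd t$, and the energy is $\mathcal E[\gamma]$ from \eqref{eq:energy}.

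\textbf{Step 1 (length versus energy).} By Cauchy--Schwarz on $[0,1]$, $L[\gamma]^2\le\mathcal E[\gamma]$, with equality when $\gamma$ has constant speed. Every piecewise smooth regular curve can be reparametrized to constant speed without changing its length. Therefore, for fixed endpoints, $\inf\mathcal E=(\inf L)^2$ over the same class of curves. To handle the difference between smooth and piecewise smooth curves, approximate piecewise smooth constant-speed curves by smooth ones with energy arbitrarily close; this is a routine smoothing near the finitely many corners.

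\textbf{Step 2 (lifting).} Fix a lift $\bm\alpha\in\R^n$ of the first endpoint. Every curve $c$ in $\T^n$ from $\pi(\bm\alpha)$ to $\pi(\bm\beta)$ lifts uniquely to a curve $\gamma$ in $\R^n$ with $\gamma(0)=\bm\alpha$. Its endpoint is $\gamma(1)=\bm\beta+2\pi k$ for a unique $k\in\mathbb{Z}^n$, and $\mathcal E[\gamma]$ equals the energy of $c$, because the metric is pulled back. Conversely, every such $\gamma$ projects to an admissible curve $c$. Hence the set of curves defining $d_M(\bm\alpha,\bm\beta)$ is the disjoint union over $k$ of the lifted classes. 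Combined with Step 1, this gives
\[
d_M^2(\bm\alpha,\bm\beta)=\inf_{k\in\mathbb{Z}^n}\ \inf_{\gamma(0)=\bm\alpha,\ \gamma(1)=\bm\beta+2\pi k}\mathcal E[\gamma].
\]

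\textbf{Step 3 (the infimum over $k$ is a minimum).} This is the main point needing care. By \eqref{eq:radius} and induction, $\rho_j\ge c:=\min_j\bigl(R_j-\sum_{i<j}R_i\bigr)>0$, where the $j=1$ term is read as $R_1$. Then Jensen's inequality applied to $\mathcal E$ gives
\[
\mathcal E[\gamma]\ \ge\ c^2\int_0^1|\dot\gamma|^2\dd t\ \ge\ c^2\,|\bm\beta+2\pi k-\bm\alpha|^2 .
\]
Taking $k=0$ and the straight segment bounds the $k=0$ inner infimum by some finite $E_0$. Only the finitely many $k$ with $c^2|\bm\beta+2\pi k-\bm\alpha|^2\le E_0$ can compete, so the infimum over $\mathbb{Z}^n$ is a minimum over a finite set and is attained. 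If desired, one can also note that the inner infimum is attained by a geodesic. This follows from Hopf--Rinow, since $(\R^n,\tilde g)$ is complete: it is a covering of a compact manifold, or alternatively one can use $\tilde g\ge c^2 I$. That fact is not needed for \eqref{eq:distance} as written. Combining Steps 1--3 yields the proposition.
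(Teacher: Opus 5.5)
Your proposal is correct and follows essentially the same route as the paper's proof. Both arguments lift to $\R^n$ with the periodic metric, bound the energy below by $\underline\rho^2\|\bm\beta+2\pi k-\bm\alpha\|^2$ and use the finite energy of a straight segment to reduce to finitely many winding classes, and then identify infimal energy with squared distance. The differences are small: you get $\inf\mathcal E=(\inf L)^2$ from Cauchy--Schwarz and constant-speed reparametrization and justify $d_g=\min_k\tilde d_g$ by path lifting, whereas the paper uses Hopf--Rinow (completeness of the lifted metric) to obtain a minimizing constant-speed geodesic in each class and simply asserts the covering identity.
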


The population Fr\'echet mean is defined as a minimizer of
$$
q\longmapsto \E\!\left[d_M^2(q,Y)\right].
$$
Since $F_n(\T^n)$ is compact, a population minimizer exists.  If this
minimizer is unique, the empirical Fr\'echet mean is 
consistent.  The intrinsic covariance is defined through the initial
velocities of the corresponding minimizing energy paths.  A sufficient
strong-convexity condition for uniqueness, together with the associated
consistency result for the intrinsic covariance, is given in
Appendix~\ref{sec:energy-frechet}; see also
\citet{Afsari2011} and \citet{bhattacharya2003large}.
\section{NUMERICAL VALIDATION}
\label{sec:numerical-validation}
We use $R_1=1$ and $R_k=1.05\sum_{j<k}R_j$ unless otherwise stated. We validate each of the following separately, geometry and uniform sampling, inverse normal-fiber coordinates, density normalization and Monte Carlo agreement, restricted-parameter recovery, intrinsic-energy optimization, and runtime. Complete numerical settings and tables are reported in
Appendix~\ref{app:simulation-validation}.

\subsection{Geometry, sampler, and density checks}
For dimensions $n\in\{2,3,5,10\}$, the numerical validation, summarized in
Table~\ref{tab:audit}, shows that Monte Carlo estimates of the
Riemannian volume agree with $(2\pi)^n\prod_k R_k$ to within
$2.1\%$, with errors below $0.5\%$ through $n=5$.  Across $20{,}000$ exact sampler draws, the largest probability-integral-transform KS discrepancy is $0.0115$ and the largest absolute transformed-coordinate correlation is below $0.02$.  Inverse normal-fiber coordinates reconstruct ambient draws to floating-point accuracy through $n=10$.

\begin{table}[h!]
\vspace{-1em}
\centering
\caption{Selected geometry and computational validations.}
\label{tab:audit}
\small
\begin{tabular}{rcc}
\toprule
$n$ & Vol. rel. err. & Max recon. err. \\
\midrule
2 & 0.00495 & $1.33{\times}10^{-15}$ \\
3 & 0.00372 & $9.99{\times}10^{-16}$ \\
5 & 0.00354 & $4.33{\times}10^{-15}$ \\
10& 0.02020 & $1.39{\times}10^{-13}$ \\
\bottomrule
\end{tabular}
\end{table}

For low-dimensional projected-density normalization, importance estimates are $1.0159$ (SE $0.0275$, effective sample size $812.6$) for $n=2$ and $0.9148$ (SE $0.0594$, ESS $212.2$) for $n=3$.  The same proposal degenerates for $n=5$ and $10$ (ESS $4.85$ and $1$), so those high-dimensional importance estimates are not interpreted as model failures.  Direct coordinate reconstruction remains accurate in those dimensions.  Fourier-probe comparisons between analytic density calculations and Monte Carlo simulation give mean absolute errors $0.0183$ for $n=2$ and $0.0421$ for $n=3$. Full volume/sampler, normal-fiber reconstruction, density-normalization,
and analytic-versus-Monte-Carlo diagnostics are given in
Appendices~\ref{app:uniform-sampler}-\ref{app:density-validation}. 

\subsection{Parameter recovery and intrinsic energy}
A restricted isotropic recovery experiment uses truth $\alpha=0.75$ and $\sigma=0.8$.  In $n=2$, RMSE decreases for both parameters over the investigated sample sizes.  In $n=3$, location recovery improves while finite-sample scale RMSE is not monotone; this experiment is a numerical validation of Theorem~\ref{thm:mle}. Figure~\ref{fig:recovery-runtime} summarizes the parameter-recovery
behavior and the computational scaling of the sampler and projected
density evaluator. The complete bias and RMSE table is reported in
Appendix~\ref{app:recovery-audit}.

\begin{figure*}[t]
\centering
\begin{minipage}[t]{0.48\textwidth}
\centering
\includegraphics[width=\linewidth]{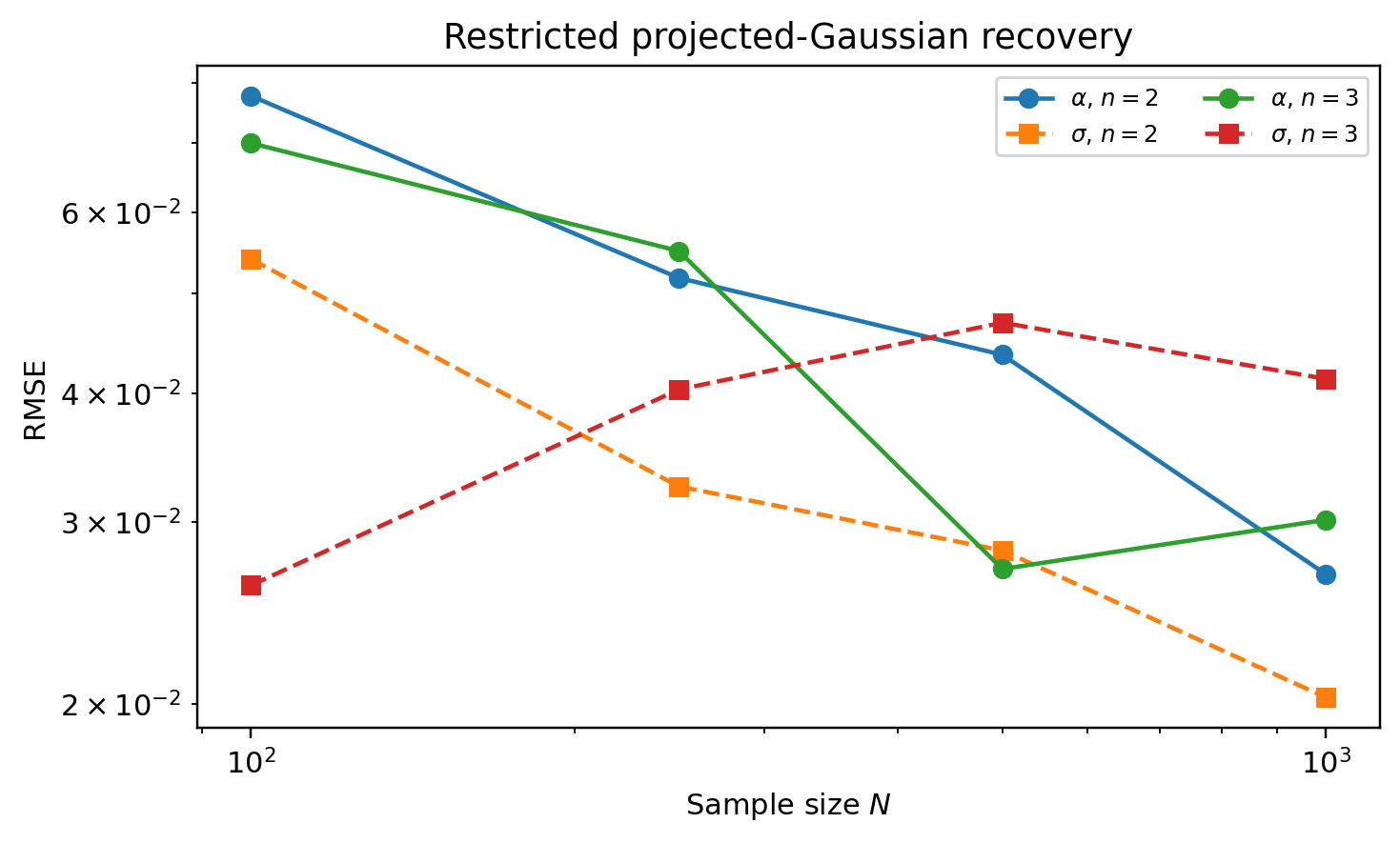}
\end{minipage}\hfill
\begin{minipage}[t]{0.48\textwidth}
\centering
\includegraphics[width=\linewidth]{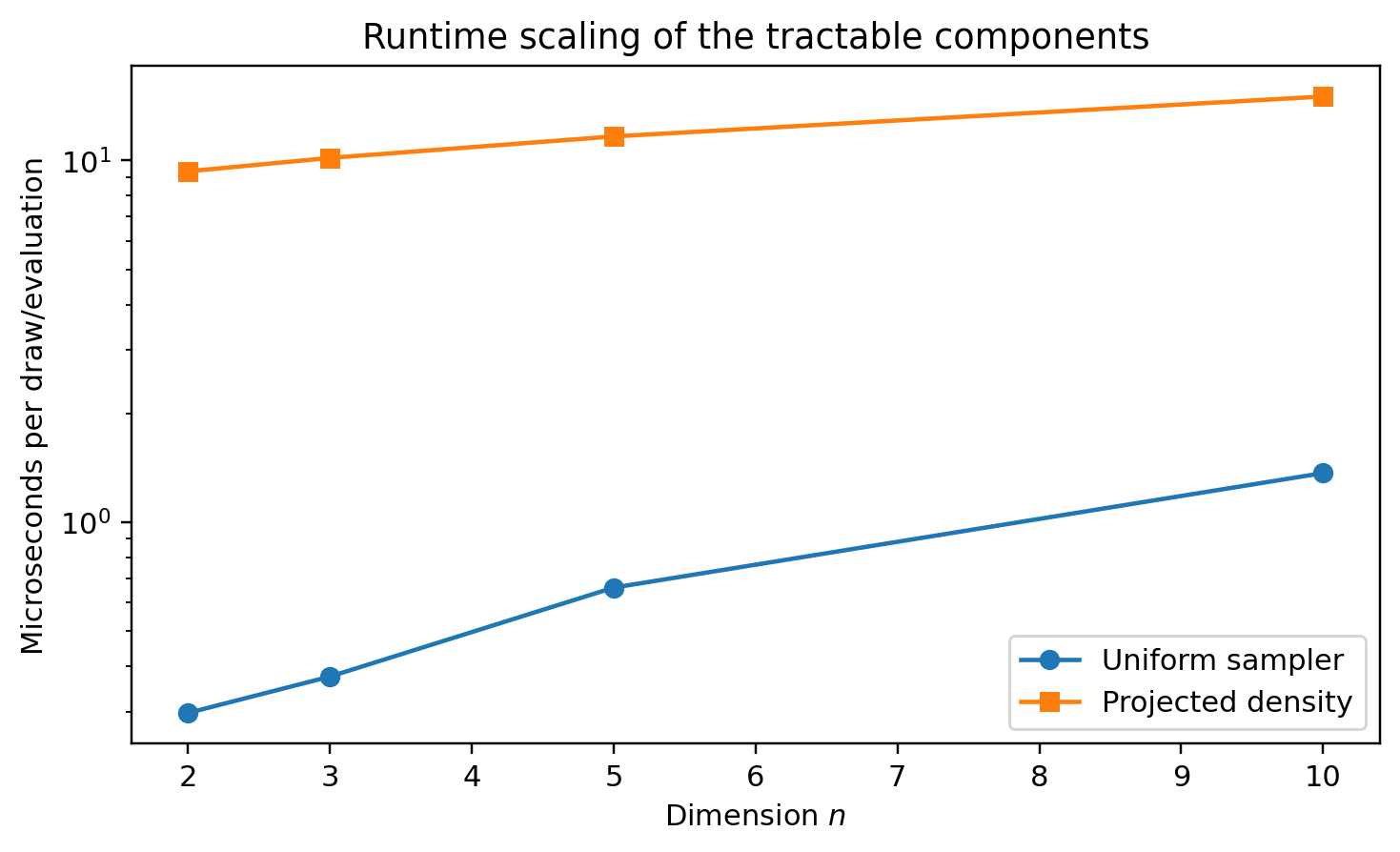}
\end{minipage}
\caption{\textbf{Left:} Restricted projected-Gaussian parameter-recovery study.  \textbf{Right:} Scaling of the exact surface-uniform sampler and projected-density evaluation.  Density evaluation remains a one-dimensional fiber calculation as $n$ grows.}
\label{fig:recovery-runtime}
\end{figure*}

The intrinsic energy solver is checked by an independent finite-difference gradient comparison (relative error $8.76\times10^{-11}$), mesh refinement from 16 to 32 path segments, multiple winding classes, and final gradient norms.  All six endpoint pairs in $n=2,3,5$ satisfy the prespecified $1\%$ mesh-refinement criterion and $\|\nabla E\|_\infty<10^{-4}$.  The worst mesh change is $0.7854\%$ and the worst final gradient norm is $1.76\times10^{-5}$.  Unlike the sampler and density evaluation, global path-energy optimization becomes expensive with dimension; Sampling and one-dimensional projected-density evaluation remain
computationally inexpensive over the investigated dimensions, whereas
intrinsic-energy optimization is substantially more expensive because
it combines path optimization with a winding search. 
Per-pair mesh-refinement results, gradient residuals, and selected
winding classes are reported in Appendix~\ref{app:energy-audit}.


\section{REAL-DATA APPLICATIONS}
\label{sec:realdata}
\subsection{Common comparison protocol}
\label{sec:realdata_protocol}

Each fitted model is evaluated by comparing samples generated from the
fitted model with the held-out real observations.  Both samples are
mapped through the common sine-cosine embedding
$$
\Phi(\theta)
=
(\cos\theta_1,\ldots,\cos\theta_d,
 \sin\theta_1,\ldots,\sin\theta_d)
\in\mathbb{R}^{2d}.
$$
We report Energy distance \cite[see][]{SzekelyRizzo2013}, multiscale
$\MMD^2$ \cite[see][]{Gretton2012}, and sliced Wasserstein distance (SWD)
\cite[see][]{Bonneel2015}; lower values indicate closer agreement between
the fitted model and the held-out real-data distribution. Since the competing
estimators use different fitting objectives, comparisons are based on
these common held-out sample discrepancies rather than a universal
likelihood criterion.

We compare six models: \textbf{M0}, independent marginal von Mises
maximum likelihood; \textbf{M1}, multivariate wrapped normal fitted by
latent wrapping-vector EM \cite{Greco2023}; \textbf{M2}, multivariate von Mises distribution fitted by maximum
pseudo-likelihood \cite{Mardia2008}; \textbf{M3}, the general
multivariate projected normal of \citet{Mastrantonio2018}, fitted by
latent-radius Bayesian MCMC with pairwise scale identification;
\textbf{M4}, our proposed projected
normal distributions on the curved torus fitted by minimum MMD; and \textbf{M5}, the interpretable
toroidal projected normal of \citet{KatoMastrantonioIshikawa2025},
fitted by latent-variable Bayesian MCMC.
For M3 and M5, held-out
evaluation uses posterior-predictive samples from retained MCMC draws.
All tuning and restart selection use validation data only, and grouped
or trajectory data are split at the group or trajectory level to avoid
frame-level leakage.

\subsection{Franka Panda lift-hold joint configurations}
\label{sec:franka}

We first analyze the public Franka Panda \texttt{lift\_hold}
trajectories from the Neural Actuation Dataset \cite{Dou2026}. The
source contains 30 official training trajectories and 5 official test
trajectories across five payloads. Trial \texttt{007} from each payload
is reserved from the training set for validation, giving 25 fitting,
5 validation, and 5 untouched official-test trajectories. The coordinates are kept in their physical serial-chain order
$(q_1,q_2,q_3,q_4)$; no permutation of the joints is selected using
validation or test performance.

The M4 geometry is fixed before official-test evaluation at
$$
R=(1,1.05,2.1525,4.412625),
$$
with covariance floor $10^{-3}$. Three M4 restarts are selected only
by validation MMD. The competing model definitions and estimation
procedures are unchanged across applications.

As shown in Table~\ref{tab:franka_m0_m5}, M4 attains the lowest
mean Energy and $\MMD^2$ discrepancies among all six models.. Relative to the strongest non-M4 competitor for each of
these criteria, the reduction is approximately $22.6\%$ in Energy
(M5) and $26.6\%$ in $\MMD^2$ (M3). The trajectory-wise comparison is
also consistent: M4 has lower Energy and $\MMD^2$ than every competitor
on all five official test trajectories. In contrast, M1 attains the
smallest SWD, and M5 also improves on M4 under SWD. Thus the Franka
analysis supports a strong but metric-dependent benefit from the
recursive geometry rather than uniform dominance. Additional paired trajectory results and model-specific numerical
diagnostics are reported in
Appendices~\ref{app:paired-robot} and~\ref{app:realdata-diagnostics}.

\begin{table}[h!]
\vspace{-1em}
\centering
\caption{Franka Panda official-test discrepancies, reported as
mean (SE) across the five held-out trajectories. Lower is better.}
\label{tab:franka_m0_m5}
\small
\setlength{\tabcolsep}{2.6pt}
\scalebox{1}{\begin{tabular}{lcccc}
\toprule
Model & Params. & Energy & $\MMD^2$ & SWD\\
\midrule
M0 & 8  & 0.05828 (0.00180) & 0.15284 (0.00495) & 0.04653 (0.00058)\\
M1 & 14 & 0.05500 (0.00162) & 0.14151 (0.00294) & \textbf{0.04344 (0.00029)}\\
M2 & 14 & 0.06350 (0.00242) & 0.15397 (0.00357) & 0.04420 (0.00060)\\
M3 & 40 & 0.08175 (0.00304) & 0.08289 (0.00323) & 0.09314 (0.00263)\\
M4 & 15 & \textbf{0.03752 (0.00096)} & \textbf{0.06080 (0.00084)} & 0.04766 (0.00102)\\
M5 & 14 & 0.04849 (0.00081) & 0.12124 (0.00217) & 0.04405 (0.00041)\\
\bottomrule
\end{tabular}}
\vspace{-1em}
\end{table}

\subsection{ALOHA coffee manipulation}
\label{sec:aloha_coffee}

As a second ordered-kinematic application, we analyze the public real
ALOHA static-coffee demonstrations associated with the ALOHA platform
\cite{Zhao2023}. The dataset contains 50 episodes recorded at 50 Hz.
We use the first four revolute joints of the left serial arm,
$
(\texttt{left\_waist},
 \texttt{left\_shoulder},
 \texttt{left\_elbow},
$ and $ \texttt{left\_forearm\_roll})
\in\T^4,$
exclude the gripper coordinate and retain every fifth frame. These coordinates are retained in their physical order along the left
serial arm, so the recursive ordering is specified by robot kinematics
rather than by empirical model selection.
A fixed
episode-level split gives 35 fitting, 7 validation, and 8 untouched
test episodes. The same six model classes, M4 geometry, covariance
floor, optimization settings, and held-out discrepancy definitions
used for Franka are retained; only the data source is changed. See 
Table~\ref{tab:aloha_coffee} which shows a similar but more competitive
pattern compared to the Franka-Panda application.
\begin{table}[h!]
\vspace{-1em}
\centering
\caption{ALOHA coffee held-out discrepancies, reported as mean (SE)
across the eight test episodes. Lower is better.}
\label{tab:aloha_coffee}
\small
\setlength{\tabcolsep}{2.6pt}
\scalebox{1}{\begin{tabular}{lcccc}
\toprule
Model & Params. & Energy & $\MMD^2$ & SWD\\
\midrule
M0 & 8  & 0.12098 (0.01497) & 0.07257 (0.00984) & 0.09436 (0.00537)\\
M1 & 14 & 0.10439 (0.01479) & 0.05821 (0.00964) & 0.08339 (0.00584)\\
M2 & 14 & 0.10388 (0.01374) & 0.05725 (0.00886) & 0.08181 (0.00544)\\
M3 & 40 & 0.09897 (0.01531) & 0.05389 (0.00975) & \textbf{0.07991 (0.00608)}\\
M4 & 15 & \textbf{0.09718 (0.01355)} & \textbf{0.05062 (0.00775)} & 0.08259 (0.00567)\\
M5 & 14 & 0.10427 (0.01391) & 0.05784 (0.00932) & 0.08709 (0.00554)\\
\bottomrule
\end{tabular}}
\vspace{-1em}
\end{table}

 M4 gives the lowest mean Energy and $\MMD^2$, whereas the
more highly parameterized M3 gives the lowest SWD. Relative to M3, the M4
mean is lower by approximately $1.8\%$ in Energy and $6.1\%$ in
$\MMD^2$; the corresponding paired episode differences favor M4 on
6 of 8 episodes for each criterion. The SWD comparison reverses this
pattern: M3 is lower on 7 of 8 episodes. The comparison with the
similarly sized M5 model is more pronounced; M4 reduces the mean
Energy, $\MMD^2$, and SWD by approximately $6.8\%$, $12.5\%$, and
$5.2\%$, respectively. Because the M3-M4 differences in the ALOHA
analysis are modest relative to episode-to-episode variability, we
interpret them descriptively rather than as evidence of uniform
superiority. Paired episode-level comparisons are reported in
Appendix~\ref{app:paired-robot}.

\subsection{Cross-domain stress test: cysteine backbone dihedrals}
\label{sec:cysteine_stress}

To test whether the recursive geometry is advantageous outside the
ordered robot-kinematic setting, we retain the public cysteine
backbone-dihedral application of \citet{GarridoRuiz2022}. The
$(\phi,\psi)\in\T^2$ angles for CYS, CSO, CSD, and OCS are modeled
separately. For the bivariate backbone analysis we use the conventional
$(\phi,\psi)$ coordinate order.  Because the recursive model is
order-dependent, the reversed order $(\psi,\phi)$ is examined as a
supplementary sensitivity check and is not selected using the held-out
test data.
Within each state we use a 70/15/15
fit/validation/test split; when a PDB or structure identifier is
available, splitting is performed at structure level. The main summary
is the equal-weight macro-average over the four oxidation states, so
that the largest state does not dominate the conclusion.

\begin{table}[h]
\vspace{-1em}
\centering
\caption{Cysteine macro-averaged held-out discrepancies across CYS,
CSO, CSD, and OCS. Lower is better. Full state-wise results are given in
Appendix~\ref{app:cysteine-statewise}.}
\label{tab:cysteine_macro_m0_m5}
\small
\begin{tabular}{lccc}
\toprule
Model & Energy & $\MMD^2$ & SWD\\
\midrule
M0 & 0.16364 & 0.04714 & 0.20876\\
M1 & 0.15704 & 0.04447 & 0.20855\\
M2 & 0.13974 & 0.03833 & 0.19407\\
M3 & \textbf{0.08547} & \textbf{0.02075} & \textbf{0.14594}\\
M4 & 0.13119 & 0.03500 & 0.17664\\
M5 & 0.13425 & 0.03725 & 0.19395\\
\bottomrule
\end{tabular}
\vspace{-1em}
\end{table}

Table~\ref{tab:cysteine_macro_m0_m5} shows that M3 is clearly
strongest in the macro-average under all three discrepancies.
The full state-wise results in
Appendix~\ref{app:cysteine-statewise} show the same pattern in
each oxidation state. Relative to M4, the
macro-averaged discrepancies are smaller by approximately $34.9\%$,
$40.7\%$, and $17.4\%$, respectively. M4 remains second-best in the
macro-average under all three criteria, but the cysteine analysis
demonstrates that the recursive geometry is not universally preferred.

\subsection{Dependence on coordinate ordering.}
The recursive embedding is not permutation invariant, so the angular
ordering forms part of the model specification.  In the robot
applications we use the physical proximal-to-distal joint order,
independently of model fit.  A validation-only sensitivity analysis
over all $4!=24$ permutations shows that this physical order ranks
first for ALOHA Coffee and fifth for Franka lift-hold.  In contrast,
for cysteine backbone angles the reversed $(\psi,\phi)$ ordering gives
lower validation MMD than the conventional $(\phi,\psi)$ ordering. The complete validation-only ordering  is reported in
Appendix~\ref{app:ordering-sensitivity}.
These results indicate that the recursive construction is especially
natural when an application supplies an external sequential ordering;
otherwise the coordinate order should be treated as a modeling choice
and selected without access to the test data.

Taken together, the applications support a selective inductive-bias interpretation. M4 performs particularly well for ordered robot-joint data, attaining the lowest held-out Energy and $\MMD^2$ in both Franka and ALOHA, whereas the more flexible M3 is preferred for cysteine dihedrals. This suggests that the recursive geometry is most useful when the coordinate ordering reflects meaningful application structure.


\section{DISCUSSION AND CONCLUSION}

Due to the recursivity of the embedding $F_n$, the nested revolution structure yields an
orthogonal metric, an explicit normal field, and affine normal-fiber
coordinates. These properties reduce the projected Gaussian density
to a one-dimensional normal-fiber integral with a
polynomial-trigonometric Jacobian, while also providing a natural
intrinsic energy functional for geometric summaries.

The empirical results support a context-dependent
advantage.  The proposed model (M4) attains the lowest held-out Energy and $\MMD^2$ in both
robot applications, whereas the more flexible M3 is preferred for
cysteine dihedrals.  Together with the ordering-sensitivity results in
Appendix~\ref{app:ordering-sensitivity}, this suggests that the recursive
geometry is most useful when its ordered dependence is compatible with
application structure.

The geometry has several limitations.  The strong radius condition
\eqref{eq:radius} restricts the admissible embeddings, and global
parameter identifiability is established only for the isotropic
subfamily.  For the flexible covariance model, the theoretical guarantee
is distributional minimum-MMD consistency for the idealized
simulation-based criterion, with the finite-batch optimizer serving as
its practical approximation.  The construction is also not permutation
invariant, so coordinate ordering forms part of the model specification.
Finally, intrinsic-distance computation requires nonconvex optimization
over winding classes and is substantially more expensive than sampling
or density evaluation.  Extensions to weaker embedding conditions,
richer identifiable covariance structures, scalable ordering strategies,
and faster intrinsic solvers remain natural directions.

In summary, the proposed normal-fiber projected Gaussian framework
combines explicit curved-torus geometry, a tractable projected density,
likelihood theory in an identifiable restricted family, minimum-MMD
estimation, and intrinsic geometric statistics.  The numerical and
real-data results support the recursive construction as a situationally useful model.

\section*{Acknowledgement}

This project has been funded by the Deutsche Forschungsgemeinschaft (DFG, German Research Foundation) under Germany's Excellence Strategy – EXC-2047/2 – 390685813.

\bibliographystyle{apalike}
\bibliography{references}

@book{MardiaJupp2000,
  title={Directional Statistics},
  author={Mardia, Kanti V. and Jupp, Peter E.},
  year={2000},
  publisher={Wiley}
}

@article{Mastrantonio2018,
  author  = {Mastrantonio, Gianluca},
  title   = {The joint projected normal and skew-normal: {A} distribution for poly-cylindrical data},
  journal = {Journal of Multivariate Analysis},
  volume  = {165},
  pages   = {14--26},
  year    = {2018}
}

@article{KatoMastrantonioIshikawa2025,
  author  = {Kato, Shogo and Mastrantonio, Gianluca and Ishikawa, Masayuki},
  title   = {An interpretable family of projected normal distributions and a related copula model for {B}ayesian analysis of hypertoroidal data},
  journal = {arXiv preprint arXiv:2508.16432},
  year    = {2025}
}

@incollection{Diaconis2013,
  title={Sampling from a Manifold},
  author={Diaconis, Persi and Holmes, Susan and Shahshahani, Mehrdad},
  booktitle={Advances in Modern Statistical Theory and Applications: {A} Festschrift in Honor of {M}orris {L}. {E}aton},
  pages={102--125},
  year={2013},
  publisher={Institute of Mathematical Statistics}
}

@article{BanerjeeBiswas2026,
  title={Intrinsic geometry-inspired dependent toroidal distribution: {A}pplication to regression model for astigmatism data},
  author={Banerjee, Buddhananda and Biswas, Surojit},
  journal={Computational Statistics},
  volume={41},
  pages={37},
  year={2026}
}

@article{Greco2023,
  title={Finite mixtures of multivariate wrapped normal distributions for model based clustering of p-torus data},
  author={Greco, Luca and Novi Inverardi, Pierpaolo and Agostinelli, Claudio},
  journal={Journal of Computational and Graphical Statistics},
  volume={32},
  number={3},
  pages={1215--1228},
  year={2023}
}

@article{Mardia2008,
  title={A multivariate von {M}ises distribution with applications to bioinformatics},
  author={Mardia, Kanti V. and Hughes, Gareth and Taylor, Charles C. and Singh, Harpreet},
  journal={Canadian Journal of Statistics},
  volume={36},
  number={1},
  pages={99--109},
  year={2008}
}

@article{WangGelfand2013,
  title={Directional data analysis under the general projected normal distribution},
  author={Wang, Fangpo and Gelfand, Alan E.},
  journal={Statistical Methodology},
  volume={10},
  number={1},
  pages={113--127},
  year={2013}
}

@article{Gretton2012,
  title={A kernel two-sample test},
  author={Gretton, Arthur and Borgwardt, Karsten M. and Rasch, Malte J. and Sch{\"o}lkopf, Bernhard and Smola, Alexander},
  journal={Journal of Machine Learning Research},
  volume={13},
  pages={723--773},
  year={2012}
}

@inproceedings{Zhao2023,
  author    = {Zhao, Tony Z. and Kumar, Vikash and Levine, Sergey and Finn, Chelsea},
  title     = {Learning fine-grained bimanual manipulation with low-cost hardware},
  booktitle = {Robotics: Science and Systems},
  year      = {2023}
}

@book{srivastava2016functional,
  author    = {Srivastava, Anuj and Klassen, Eric P.},
  title     = {Functional and Shape Data Analysis},
  series    = {Springer Series in Statistics},
  publisher = {Springer},
  address   = {New York},
  year      = {2016}
}

@article{Singh2002Probabilistic,
  author    = {Singh, Harshinder and Hnizdo, Vladimir and Demchuk, Eugene},
  title     = {Probabilistic model for two dependent circular variables},
  journal   = {Biometrika},
  volume    = {89},
  number    = {3},
  pages     = {719--723},
  year      = {2002},
  publisher = {Oxford University Press}
}

@article{Afsari2011,
  title={Riemannian $L^p$ center of mass: {E}xistence, uniqueness, and convexity},
  author={Afsari, Bijan},
  journal={Proceedings of the American Mathematical Society},
  volume={139},
  number={2},
  pages={655--673},
  year={2011}
}

@article{GarridoRuiz2022,
  title={Cysteine oxidation in proteins: {S}tructure, biophysics, and simulation},
  author={Garrido Ruiz, Diego and Sandoval-Perez, Ana and Rangarajan, A. V. and Gunderson, E. L. and Jacobson, Matthew P.},
  journal={Biochemistry},
  volume={61},
  number={20},
  pages={2165--2176},
  year={2022}
}

@inproceedings{Dou2026,
  title={NeuralActuator: Neural Actuation Modeling for Robot Dynamics and External Force Perception},
  author={Dou, Zhiyang and Onyemelukwe, John U. and Zhang, Hangxing and Zhang, Heng and Guo, Minghao and Tian, Yunsheng and Lipiec, Michal Piotr and Jacob, Joshua and Liu, Chao and Chen, Peter Yichen and Ivanov, Yuri and Matusik, Wojciech},
  booktitle={Robotics: Science and Systems},
  year={2026}
}

@article{SzekelyRizzo2013,
  title={Energy Statistics: {A} Class of Statistics Based on Distances},
  author={Sz{\'e}kely, G{\'a}bor J. and Rizzo, Maria L.},
  journal={Journal of Statistical Planning and Inference},
  volume={143},
  number={8},
  pages={1249--1272},
  year={2013}
}

@article{Bonneel2015,
  title={Sliced and {R}adon {W}asserstein Barycenters of Measures},
  author={Bonneel, Nicolas and Rabin, Julien and Peyr{\'e}, Gabriel and Pfister, Hanspeter},
  journal={Journal of Mathematical Imaging and Vision},
  volume={51},
  number={1},
  pages={22--45},
  year={2015}
}

@article{bhattacharya2003large,
  title={Large sample theory of intrinsic and extrinsic sample means on manifolds},
  author={Bhattacharya, Rabi and Patrangenaru, Vic},
  journal={The Annals of Statistics},
  volume={31},
  number={1},
  pages={1--29},
  year={2003},
  publisher={Institute of Mathematical Statistics}
}

@article{llibre2024darboux,
  title={Darboux theory of integrability in $\mathbb{T}^n$},
  author={Llibre, Jaume and Valls, Claudia},
  journal={European Journal of Mathematics},
  volume={10},
  number={2},
  pages ={26},
  year={2024},
  publisher={Springer}
}

@article{nodehi2021estimation,
  title={Estimation of parameters in multivariate wrapped models for data on ap-torus},
  author={Nodehi, Anahita and Golalizadeh, Mousa and Maadooliat, Mehdi and Agostinelli, Claudio},
  journal={Computational Statistics},
  volume={36},
  number={1},
  pages={193--215},
  year={2021},
  publisher={Springer}
}

@article{mardia2014some,
  title={Some fundamental properties of a multivariate von Mises distribution},
  author={Mardia, Kanti V and Voss, Jochen},
  journal={Communications in Statistics-Theory and Methods},
  volume={43},
  number={6},
  pages={1132--1144},
  year={2014},
  publisher={Taylor \& Francis}
}

@article{xu2023density,
  title={Density estimation for toroidal data using semiparametric mixtures},
  author={Xu, Danli and Wang, Yong},
  journal={Statistics and Computing},
  volume={33},
  number={6},
  pages={140},
  year={2023},
  publisher={Springer}
}

@article{nodehi2025torus,
  title={Torus probabilistic principal component analysis},
  author={Nodehi, Anahita and Golalizadeh, Mousa and Maadooliat, Mehdi and Agostinelli, Claudio},
  journal={Journal of Classification},
  volume={42},
  number={2},
  pages={435--456},
  year={2025},
  publisher={Springer}
}

@article{Eltzner2018torus,
    author= {Benjamin Eltzner and Stephan Huckemann and Kanti V. Mardia},
    title={Torus principal component analysis with applications to RNA structure}, 
    journal={The Annals of Applied Statistics},
    volume={12},
    number={2},
    pages={1332-1359},
    year={2018}
}

@article{hong2022clustorus,
  title={ClusTorus: An R Package for Prediction and Clustering on the Torus by Conformal Prediction.},
  author={Hong, Seungki and Jung, Sungkyu},
  journal={R J.},
  volume={14},
  number={2},
  pages={187--208},
  year={2022}
}

@article{klein2020torus,
  title={Torus graphs for multivariate phase coupling analysis},
  author={Klein, Natalie and Orellana, Josue and Brincat, Scott L and Miller, Earl K and Kass, Robert E},
  journal={The annals of applied statistics},
  volume={14},
  number={2},
  pages={635},
  year={2020}
}

@article{gottard2021graphical,
  title={Graphical models for circular variables},
  author={Gottard, Anna and Panzera, Agnese},
  journal={arXiv preprint arXiv:2104.03194},
  year={2021}
}

@incollection{mescher2024geometric,
  title={Geometric and topological properties of manifolds in robot motion planning},
  author={Mescher, Stephan},
  booktitle={Topology and AI},
  pages={123--159},
  year={2024},
  publisher={European Mathematical Society-EMS-Publishing House GmbH}
}

@inproceedings{sato2023online,
  title={Online Estimation of Self-Body Deflection With Various Sensor Data Based on Directional Statistics},
  author={Sato, Hiroya and Kawaharazuka, Kento and Makabe, Tasuku and Okada, Kei and Inaba, Masayuki},
  booktitle={2023 IEEE/SICE International Symposium on System Integration (SII)},
  pages={1--8},
  year={2023},
  organization={IEEE}
}

@article{giannakis2019data,
  title={Data-driven spectral decomposition and forecasting of ergodic dynamical systems},
  author={Giannakis, Dimitrios},
  journal={Applied and Computational Harmonic Analysis},
  volume={47},
  number={2},
  pages={338--396},
  year={2019},
  publisher={Elsevier}
}

@article{biswas2026efficient,
  title={Efficient sampling from circular distributions: extensions to toroidal and spherical distributions},
  author={Biswas, Surojit and Banerjee, Buddhananda},
  journal={Communications in Statistics-Simulation and Computation},
  pages={1--24},
  year={2026},
  publisher={Taylor \& Francis}
}

\appendix


\section{Theory and proofs}
Throughout, we shall assume that $n\ge 2$ and that the radii $R_1,\ldots,R_n>0$ satisfy the strong-radius condition \eqref{eq:radius}.

\subsection{Proof of Theorem~\ref{thm:exactuniform_alg}}
\label{sec:proof_exactuniform_alg}

The main paper uses the recursively embedded torus similarly to \cite{llibre2024darboux}
\[
\rho_1=R_1,\quad \rho_k=R_k+\rho_{k-1}\cos\theta_{k-1},
\]
\[
F_n({\bm{\theta}})=
(\rho_n\cos\theta_n,\rho_n\sin\theta_n,
\rho_{n-1}\sin\theta_{n-1},\ldots,\\
\rho_1\sin\theta_1),    
\]

under $R_k>\sum_{j<k}R_j$.  Direct differentiation gives
\begin{align*}
g&=\operatorname{diag}(\rho_1^2,\ldots,\rho_n^2),&
\quad\operatorname{dVol}_n&=\prod_{k=1}^n\rho_k\,\dd\theta_k,\\
\operatorname{Vol}(F_n(\T^n))&=(2\pi)^n\prod_{k=1}^nR_k.    
\end{align*}

For the ordinary $2$-torus, the corresponding area-uniform angular law is classical; see \citet{Diaconis2013}.  The following lemma gives the recursive formula for the volume form.

\begin{lemma}
\label{lem:uniform_prob_density}
If $Y$ is uniform on $F_n(\T^n)$ with respect to normalized Riemannian volume, its angular coordinate $\Theta$ has density
\[
p_U({ \bm{\theta}})
=
\frac1{(2\pi)^n}
\prod_{j=1}^{n-1}
\left(1+\frac{\rho_j}{R_{j+1}}\cos\theta_j\right).
\]
Equivalently,
\[
\Theta_j\mid\Theta_{1:j-1}
\sim
\frac1{2\pi}\left(1+a_j\cos\theta_j\right),
\quad
a_j=\frac{\rho_j}{R_{j+1}},
\]
for $j=1,\ldots,n-1$, and $\Theta_n\sim U(0,2\pi)$.
\end{lemma}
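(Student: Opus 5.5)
We start from Proposition~\ref{prop:geometry}, which gives $\operatorname{dVol}_n(\bm{\theta})=\bigl(\prod_{k=1}^n\rho_k\bigr)\dd\bm{\theta}$ and $\operatorname{Vol}(F_n(\T^n))=(2\pi)^n\prod_{k=1}^nR_k$. Since $F_n$ is an embedding, a random point $Y$ that is uniform with respect to normalized Riemannian volume has angular coordinate $\Theta=F_n^{-1}(Y)$ with density equal to $\operatorname{dVol}_n/\operatorname{Vol}$ relative to $\dd\bm{\theta}$. This density is
\[
p_U(\bm{\theta})=\frac{\prod_{k=1}^n\rho_k}{(2\pi)^n\prod_{k=1}^nR_k}.
\]
The first claim then follows by algebra. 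Because $\rho_1=R_1$, that factor cancels. For $k\ge2$ the recursion \eqref{eq:rho} gives $\rho_k/R_k=(R_k+\rho_{k-1}\cos\theta_{k-1})/R_k=1+(\rho_{k-1}/R_k)\cos\theta_{k-1}$. Re-indexing with $j=k-1$ yields the stated product over $j=1,\dots,n-1$.

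For the conditional description, the key observation is that $\rho_j$ depends only on $\theta_1,\dots,\theta_{j-1}$. We would therefore set $f_j(\theta_j\mid\theta_{1:j-1})=\frac1{2\pi}(1+a_j\cos\theta_j)$ with $a_j=\rho_j/R_{j+1}$, and check two things.

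\emph{Each $f_j$ is a genuine probability density on $[0,2\pi)$.} Its integral over $\theta_j$ is $1$, since $\int_0^{2\pi}\cos\theta_j\,\dd\theta_j=0$ and $a_j$ does not involve $\theta_j$. Nonnegativity requires $|a_j|\le1$. Inductively $0<\rho_j\le\sum_{i\le j}R_i$, because $\rho_j\le R_j+\rho_{j-1}$ and $\rho_j\ge R_j-\rho_{j-1}>0$. The strong radius condition \eqref{eq:radius} then gives $0<a_j<1$.

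\emph{The product of the conditionals reproduces $p_U$.} We have $p_U(\bm{\theta})=\bigl(\prod_{j=1}^{n-1}f_j\bigr)\cdot\frac1{2\pi}$. Integrating out the coordinates in reverse order, $\theta_n$ first and then $\theta_{n-1}$ and so on, shows the following:
\begin{itemize}
\item the marginal of $\Theta_{1:j}$ is $\prod_{i\le j}f_i$;
\item the conditional law of $\Theta_j$ given $\Theta_{1:j-1}$ is $f_j$;
\item $\Theta_n$ given $\Theta_{1:n-1}$ is $U(0,2\pi)$, and hence $\Theta_n$ is also marginally uniform.
\end{itemize}

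The only mildly delicate step is confirming that $a_j\in(0,1)$. This needs the bound $\rho_j\le\sum_{i\le j}R_i$, which is a short induction from \eqref{eq:rho}. Everything else is direct substitution.
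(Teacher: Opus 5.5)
Your proposal is correct and follows the paper's proof: normalize the volume form, use $\rho_{j+1}/R_{j+1}=1+(\rho_j/R_{j+1})\cos\theta_j$ with $\rho_1=R_1$ to get the product, and note that $\rho_j$ depends only on $\theta_{1:j-1}$. The difference is that you spell out two things the paper only asserts, namely the inductive bound $0<\rho_j\le\sum_{i\le j}R_i$ giving $0<a_j<1$, and the reverse-order marginalization showing the factors are the conditionals.
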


\begin{proof}
Normalize the Riemannian volume element:
\[
p_U(\theta)=
\frac{\prod_{k=1}^n\rho_k}
{(2\pi)^n\prod_{k=1}^nR_k}.
\]
Since $\rho_1=R_1$ and
$\rho_{j+1}/R_{j+1}=1+(\rho_j/R_{j+1})\cos\theta_j$,
the product factorization follows.  Each $\rho_j$ depends only on
$\theta_1,\ldots,\theta_{j-1}$, hence the factors are valid conditional densities.
Condition $R_{j+1}>\sum_{k\le j}R_k$ implies $0<a_j<1$.
\end{proof}

\begin{proof}[Proof of Theorem~\ref{thm:exactuniform_alg}]
The transformation at stage $j$ generates the required Cardioid conditional
$(2\pi)^{-1}(1+a_j\cos\theta_j)$.  Multiplying the sequential conditionals and the final uniform density gives exactly the density in Lemma \ref{lem:uniform_prob_density}. Thus the exact cardioid transformation due to \citet{BanerjeeBiswas2026} can be applied sequentially, giving an exact surface-uniform sampler of computational complexity  {$\mathcal{O}(n)$}. 
\end{proof}

\subsection{A normal-fiber coordinate lemma}
\label{sec:normalfibercoordinatelemma}

In this section we introduce a lemma that establishes fundamental properties for the normal-fiber coordinates given by \eqref{eq:fiber-map}.
Set
\[
\mathcal D
=
\left\{(T,{\bm{\theta}}):{\bm{\theta}}\in\T^n,\ 0<T<T_0({\bm{\theta}})\right\}.
\]
By construction, $(T,{\bm{\theta}})\in\mathcal D$ if and only if
\[
T>0,
\quad \text{ and } \quad
q_k(T,{\bm{\theta}})>0,
\quad k=2,\ldots,n.
\]

\begin{lemma}
\label{lem:fiber-coordinates}
The following hold:

\begin{enumerate}
\item The map $X:\mathcal D\to\R^{n+1}$ is smooth and its pullback Euclidean metric is given by \eqref{eq:fiber-metric} and the Jacobian of $X$ is given by \eqref{eq:fiber-jac}.

\item The vector $\partial_T X(T,{\bm{\theta}})$ has unit norm and is orthogonal to every $\partial_{\theta_k}X(T,{\bm{\theta}})$. In particular,
\[
\nu({\bm{\theta}}):=\partial_T X(T,{\bm{\theta}})
\]
is independent of $T$ and is the unit normal vector field given by \eqref{eq:normal}.

\item {We} have the affine normal-fiber representation \eqref{eq:fiber-affine} holds true.

\item The coordinate map is one-to-one modulo the usual $2\pi$ identification of each angular coordinate on its image, and it admits an explicit inverse away from a Lebesgue-null singular set.
\end{enumerate}
\end{lemma}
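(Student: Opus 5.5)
I will prove the four items in the order (2), (3), (1), (4), since the orthogonality and unit-norm properties of $\partial_T X$ drive everything else.

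The basic device is a rotation frame at each recursion level. For $k=2,\dots,n$ write $e_k(\bm\theta)$ for the unit vector in the plane spanned by the level-$k$ coordinates that points in the direction $\theta_k$; at the top level this is $(\cos\theta_n,\sin\theta_n)$ in the first two ambient coordinates. At each level the recursion $q_k=R_k+q_{k-1}\cos\theta_{k-1}$ together with the $\sin\theta_{k-1}$ component is a two-dimensional polar parametrization of radius $q_{k-1}$ about the centre $R_k$.

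By induction on $k$, $\partial_T q_k=\prod_{j=1}^{k-1}\cos\theta_j$, since $\partial_T q_1=1$ and each $q_k$ is affine in $T$. Hence
\[
\partial_T X=\bigl(\partial_T q_n\cos\theta_n,\ \partial_T q_n\sin\theta_n,\ \partial_T q_{n-1}\sin\theta_{n-1},\dots,\ \sin\theta_1\bigr).
\]
This is independent of $T$. Up to the labelling used in \eqref{eq:normal}, with the $\cos\theta_n,\sin\theta_n$ split at the top level, it coincides with $\nu$.

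Its squared norm is $\sin^2\theta_1+\cos^2\theta_1\bigl(\sin^2\theta_2+\cos^2\theta_2(\cdots)\bigr)$. This telescopes to $1$ by repeatedly applying $\sin^2+\cos^2=1$ from the outermost level inward.

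Because each $q_k$ is affine in $T$ with slope independent of $T$, integrating $\partial_T X=\nu$ from $R_1$ to $T$ gives $X(T,\bm\theta)=X(R_1,\bm\theta)+(T-R_1)\nu(\bm\theta)$. Since $X(R_1,\cdot)=F_n$, this is item (3).

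For the tangent vectors, the only components of $X$ depending on $\theta_k$ are those at level $k$ and above. One computes:
\begin{itemize}
\item $\partial_{\theta_n}X=q_n(-\sin\theta_n,\cos\theta_n,0,\dots,0)$;
\item for $k<n$, $\partial_{\theta_k}X$ is $q_k$ times the unit vector obtained from $\nu$'s level-$k$ pattern with $(\cos\theta_k,\sin\theta_k)$ replaced by $(-\sin\theta_k,\cos\theta_k)$, with the higher levels multiplied by the appropriate cosine products.
\end{itemize}
The key inductive fact is that $\partial_{\theta_{k}}q_{j}=-q_k\sin\theta_k\prod_{i=k+1}^{j-1}\cos\theta_i$ for $j>k$. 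With this, the inner products $\langle\partial_T X,\partial_{\theta_k}X\rangle$ and $\langle\partial_{\theta_j}X,\partial_{\theta_k}X\rangle$ for $j\neq k$ collapse to sums of the form $\sin\theta\cos\theta-\cos\theta\sin\theta=0$. The norms come out as $\|\partial_{\theta_k}X\|=q_k$ by the same telescoping as for $\nu$.

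This yields the metric \eqref{eq:fiber-metric}, which is item (1) and, at $T=R_1$, Proposition~\ref{prop:geometry}. The Jacobian determinant is then the square root of the Gram determinant, namely $1\cdot\prod_{k=1}^n q_k$ with $q_1=T$, which is \eqref{eq:fiber-jac}. Smoothness is immediate since the components of $X$ are trigonometric polynomials in $\bm\theta$ and affine in $T$.

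For item (4), I invert level by level. From $(x_0,x_1)$ I take $q_n=\sqrt{x_0^2+x_1^2}>0$ and $\theta_n=\operatorname{atan2}(x_1,x_0)$. Then $(q_n-R_n,\,x_2)=q_{n-1}(\cos\theta_{n-1},\sin\theta_{n-1})$, which gives $q_{n-1}$ and $\theta_{n-1}$ by polar inversion provided $q_{n-1}>0$. Continuing down to $q_1=T$ recovers all coordinates. Positivity of every $q_k$ on $\mathcal D$ makes each polar step unique modulo $2\pi$, hence $X$ is injective on $\mathcal D$. The polar inversions fail only where some intermediate radius vanishes, which is a finite union of lower-dimensional algebraic sets and therefore Lebesgue-null.

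The main obstacle is the bookkeeping of the orthogonality computation for general $n$, that is, getting the closed form of $\partial_{\theta_k}X$ and the inductive derivative formula right. I would handle it cleanly by induction on $n$. Write $X_n=(q_n\cos\theta_n,\,q_n\sin\theta_n,\,\widetilde X_{n-1})$, where $(q_n-R_n,\widetilde X_{n-1})$ is, up to reordering, a fibre map of the same type one level down. A rotation by $\theta_n$ in the first plane is an isometry, so the level-$n$ metric is $dq_n^2+q_n^2d\theta_n^2+\|d\widetilde X_{n-1}\|^2$. Since $dq_n$ combines with $d\widetilde X_{n-1}$ into the lower-level embedding's differential, the induction hypothesis gives the lower-level metric. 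This reduces the whole verification to the classical polar identity at each step.
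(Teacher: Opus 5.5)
Your proposal is correct and is essentially the paper's argument. The paper proves (1) by exactly your closing induction, namely the polar identity $\mathrm{d}q_k^2+\mathrm{d}y_k^2=\mathrm{d}q_{k-1}^2+q_{k-1}^2\,\mathrm{d}\theta_{k-1}^2$ for $y_k=q_{k-1}\sin\theta_{k-1}$, applied repeatedly from the base case $\mathrm{d}T^2+T^2\,\mathrm{d}\theta_1^2$; it then reads the unit norm and orthogonality of $\partial_T X$ off the diagonal metric (you instead verify them by explicit telescoping and tangent-frame inner products, which also check out), and it handles (3) and (4) just as you do.

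One tightening for (4): rather than invoking unspecified ``lower-dimensional algebraic sets'' (the $\widehat q_k$ involve nested square roots, so these sets are not literally algebraic), observe, as the paper does, that $\{\widehat q_k=0\}\subset\{x_{n-k+1}=0\}$ for $k<n$ and $\{\widehat q_n=0\}=\{x_0=x_1=0\}$, so the singular set is Lebesgue-null immediately.
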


\begin{proof}
The metric identity is most transparent by induction over the dimension. In dimension 1, the ordinary polar transformation has metric
\[
\dd T^2+T^2\dd\theta_1^2.
\]
For dimension $k\ge2$, the pair
\[
q_k=R_k+q_{k-1}\cos\theta_{k-1},
\quad
y_k=q_{k-1}\sin\theta_{k-1}
\]
satisfies
\[
\dd q_k^2+\dd y_k^2
=
\dd q_{k-1}^2+q_{k-1}^2\dd\theta_{k-1}^2.
\]
Adding the final rotation by $\theta_k$ contributes $q_k^2\dd\theta_k^2$. Repeating this argument yields \eqref{eq:fiber-metric}. Hence the metric tensor is diagonal with diagonal entries
\[
1,
\ T^2,
\ q_2^2,
\ldots,
\ q_n^2,
\]
which gives \eqref{eq:fiber-jac}.

The same diagonal metric shows that $\partial_T X$ has unit norm and is orthogonal to all angular coordinate vectors. Since $q_k$ is affine in $T$, differentiating \eqref{eq:fiber-map} with respect to $T$ gives the explicit vector
\[
\nu({{\bm{\theta}}})=
\big(
\textstyle\prod_{j=1}^{n}\cos\theta_j,
\ (\prod_{j=1}^{n-1}\cos\theta_j)\sin\theta_n,
\ldots,
\cos\theta_1\sin\theta_2,
\sin\theta_1
\big)^\top,    
\]
which is independent of $T$. Integrating $\partial_T X=\nu(\theta)$ in $T$ and evaluating at $T=R_1$ proves \eqref{eq:fiber-affine}.

For the inverse, let  {$\bm{x}=(x_0,\ldots,x_{n})^\top\in\R^{n+1}$}. Define recursively
\[
\widehat q_n={(x_0^2+x_1^2)^{1/2} },
\quad
\widehat\theta_n=\operatorname{atan2}{(x_1,x_0)},
\]
and, for $k=n-1,\ldots,1$,
\[
\widehat q_k
=
\left\{(\widehat q_{k+1}-R_{k+1})^2+{x_{n-k+1}^2}\right\}^{1/2},
\label{eq:inverse-q}
\]
\[
\widehat\theta_k
=
\operatorname{atan2}
\left({x_{n-k+1}},\widehat q_{k+1}-R_{k+1}\right).
\]
Whenever all $\widehat q_k>0$, these relations uniquely recover
\[
T=\widehat q_1,
\quad
{\bm{\theta}}=(\widehat\theta_1,\ldots,\widehat\theta_n)
\]
modulo the standard angular identification. Substituting the recovered quantities into \eqref{eq:fiber-map} reconstructs {$\bm{x}$} exactly.

The singular set is
\[
\mathcal C
=
\bigcup_{k=1}^n\{x:\widehat q_k(x)=0\}.
\]
For $k=n$, $\widehat q_n=0$ implies $x_1=x_2=0$. For $k<n$, $\widehat q_k=0$ implies in particular {$x_{n-k+1}=0$}. Thus each component of $\mathcal C$ lies in a coordinate hyperplane and has Lebesgue measure zero. Hence $\mathcal C$ is Lebesgue-null. This proves the a.e. validity and injectivity of the normal-fiber coordinates.
\end{proof}

\subsection{Proof of Theorem~\ref{thm:pgdensity}}
\label{sec:proof_prob_density}

Let
\[
Z\sim\N_{n+1}(\mu,\Sigma),
\quad \Sigma\succ0.
\]
Since a non-degenerate Gaussian is absolutely continuous, Lemma~\ref{lem:fiber-coordinates} implies that $Z\notin\mathcal C$ almost surely.

\begin{proof}[Proof of Theorem~\ref{thm:pgdensity}]
{
Classically, it holds that $Z$ has probability density function in Euclidean standard coordinates at $\bm{x}= (x_0,\dots,x_n)\in \mathbb{R}^n$
\[
p_Z(\bm{x}) = \frac{1}{(2\pi)^{(n+1)/2}|\Sigma|^{1/2}} \\\exp\left(-\frac{1}{2}(\bm{x}-\mu)^T\Sigma^{-1}(\bm{x}-\mu)\right).    
\]
By the change of variables $X(T,\bm{\theta})= \bm{x}$ and by formula \eqref{eq:fiber-jac} it holds that $P_Z(T,Z)$ can be written in tubular normal coordinates of the form
\[
    p_Z(T,\bm{\theta}) =\frac{1}{(2\pi)^{(n+1)/2}|\Sigma|^{1/2}} \exp\Big(-\frac{1}{2}(X(T,\bm{\theta})-\mu)^T\Sigma^{-1}(X(T,\bm{\theta})-\mu)\Big) T\prod_{k=2}^n q_k(T,\theta).
\]
By rewriting the coordinate map in the affine normal representation \eqref{eq:fiber-affine} we can split the exponential into a $T$-dependent and a $T$-independent part. Hence, by defining $a(\bm{\theta})= F_n(\bm{\theta})-R_1\nu(\bm{\theta})-\mu$ we have that
\begin{multline*}
\exp\left(-\frac{1}{2}(X(T,\bm{\theta})-\mu)^\top\Sigma^{-1}(X(T,\bm{\theta})-\mu)\right)\\
 = \exp\Big(-\frac{1}{2}(F_n(\bm{\theta})-R_1\nu(\bm{\theta})+T\nu(\bm{\theta})-\mu)^\top \Sigma^{-1}(F_n(\bm{\theta})-R_1\nu(\bm{\theta})+T\nu(\bm{\theta})-\mu)\Big)\\
 = \exp\left(-\frac{1}{2}(a(\bm{\theta})+T\nu(\bm{\theta}))^\top\Sigma^{-1}(a(\bm{\theta})+T\nu(\bm{\theta}))\right)\\
 =\exp(C(\bm{\theta})) \exp\left(-\frac12A(\bm{\theta})T^2-B(\bm{\theta})T\right)
\end{multline*}
where we have used that $\Sigma^{-1}$ is symmetric. Note that by definition of the projected normal the marginal distribution can be obtained from the probability density function of 
\[
p_{\pr(Z)}(\bm{\theta}) = \int_0^{T_0(\bm(\theta))} p_Z(T,\bm{\theta}) \dd T
\]
Therefore, 
\[
    p_{\operatorname{pr}(Z)}(\bm{\theta})=\frac{\exp(C(\bm{\theta}))}
{(2\pi)^{(n+1)/2}|\Sigma|^{1/2}}
\int_0^{T_0(\bm{\theta})}
\exp\left(-\frac12A(\bm{\theta})T^2-B(\bm{\theta})T\right)
T\prod_{k=2}^n q_k(T,\bm{\theta})\,\dd T.
\]
Now through a change of variables $S=\sqrt{A}(\bm{\theta})T+K(\bm{\theta})$ we obtain \eqref{eq:angular-density}. Since this is the Radon-Nikodym derivative w.r.t the measure $\dd \bm{\theta}$ and because $\operatorname{dVol}(\bm{\theta}) = J(R_1,\bm{\theta}) \dd \bm{\theta}$ by \eqref{eq:fiber-jac}, we obtain \eqref{eq:surface-density}.

}

\end{proof}

\subsection{The projected normal in truncated moment form}
\label{sec:truncated_form_pg}

{Set $\beta_1=0$ then we may inductively define for $k\ge2$,
\[
\beta_k(\bm{\theta})=R_k+\beta_{k-1}(\bm{\theta})\cos(\theta_{k-1}),
\]
Then we can write out the explicit $T$-dependence of $q_k$ as
\begin{equation}
q_k(T,\bm{\theta})=\beta_k(\bm{\theta})+T\prod_{j=1}^{k-1}\cos\theta_j.
\label{eq:q-affine}
\end{equation}
Moreover,
\[
\beta_k(\bm{\theta})
\ge R_k-\sum_{j=2}^{k-1}R_j>0,
\]
as a consequence of \eqref{eq:radius}. Hence, if $T_0(\bm{\theta})$ is finite then 
\[
T_0(\bm{\theta}) = \min_{k\geq 2} -\frac{\beta_k(\bm{\theta})}{\prod_{j=1}^{k-1}\cos\theta_j}.
\]
Therefore, the integrand in \eqref{eq:angular-density} is an can be expressed as the standard normal Guassian times a polynomial of order $n$, that is the integrand can be written as
\begin{equation}
\exp\left(-\frac{1}{2}S^2\right)
(S-K(\bm{\theta}))\prod_{k=2}^n q_k\left(\frac{S-K(\bm{\theta})}{\sqrt{A}(\bm{\theta})},\bm{\theta}\right) 
=\exp\left(-\frac{1}{2}S^2\right)\sum_{m=0}^{n}c_m(\bm{\theta})S^m.
\label{eq:polynomial_notation_gaussian_integrand}
\end{equation}
Now each integral $\int \frac{1}{\sqrt{2\pi}}\exp(-S^2/2) S^m \dd S$ has a known closed form with
\[
I(0,a,b) := \int_a^b  \frac{1}{\sqrt{2\pi}}\exp(-S^2/2) = \Phi(b)-\Phi(a)
\] 
where $\Phi$ denotes the cumulative distribution function for the standard normal, and for integers $m\geq 1$
\begin{multline}
I(m,a,b):= \int_a^b \frac{\exp(-S^2/2)}{\sqrt{2\pi}} S^m \dd S\\ = \begin{cases}
	    \frac{\exp(-a^2/2)}{\sqrt{2\pi}} \sum_{j=0}^{(m-1)/2} \frac{(m-1)!!}{(2j)!!} a^{2j}  - \frac{\exp(-b^2/2)}{\sqrt{2\pi}} \sum_{j=0}^{(m-1)/2} \frac{(m-1)!!}{(2j)!!} b^{2j}, &  m \text{ odd},\\
	    \frac{\exp(-a^2/2)}{\sqrt{2\pi}} \sum_{j=0}^{(m-2)/2} \frac{(m-1)!!}{(2j+1)!!} a^{2j+1} - \frac{\exp(-b^2/2)}{\sqrt{2\pi}} \sum_{j=0}^{(m-2)/2} \frac{(m-1)!!}{(2j+1)!!} b^{2j+ 1} +(m-1)!!(\Phi(b)-\Phi(a)), 
 &  m \text{ even},
\end{cases}
\label{eq:guassian_moments}
\end{multline}
where $k!!$ denotes the double factorial which factors all integers up to $k$ with the same parity as $k$. Therefore, from \eqref{eq:guassian_moments}, \eqref{eq:angular-density} and \eqref{eq:polynomial_notation_gaussian_integrand} we can conclude that
\begin{equation}
p_{\operatorname{pr}(Z)}(\bm{\theta})
=
\frac{\exp\big(C(\bm{\theta}) + \frac{1}{2} K^2(\bm{\theta})\big)}
{(2\pi)^{n/2}|\Sigma|^{1/2} A(\bm{\theta})} \sum_{m=0}^n c_m I(m,K(\bm{\theta}),\sqrt{A}(\bm{\theta})T_0(\bm{\theta})+K(\bm{\theta})).
\label{eq:density-moment-reduction}
\end{equation}
}

\begin{remark}
For numerical work, the recurrence may be evaluated in a scaled or log-stabilized form when {$K^2(\bm{\theta})$} is large.
\end{remark}

\subsection{Proof of Theorem~\ref{thm:ident}}
\label{sec:proof_indentifiability}

Note that the probability density function \eqref{eq:angular-density} is valid for arbitrary positive-definite $\Sigma$. However, in  the case of Theorem~\ref{thm:ident}, we consider the isotropic family
\[
Z\sim\N_{n+1}(\mu,\sigma^2I_{n+1}),
\quad \mu\in\R^{n+1},\quad \sigma>0.
\]

\begin{proof}[Proof of Theorem~\ref{thm:ident}]
Write
\[
\lambda_j=\frac{\mu_j}{\sigma},
\quad j=0,\ldots,n.
\]
Because $q_n>0$ almost surely,
\[
\operatorname{sign}(Z_0)=\operatorname{sign}(\cos\Theta_n),
\quad
\operatorname{sign}(Z_1)=\operatorname{sign}(\sin\Theta_n).
\]
Therefore the angular law determines
\[
\mathbb P(\cos\Theta_n>0)
=\Phi(\lambda_0),
\quad
\mathbb P(\sin\Theta_n>0)
=\Phi(\lambda_1).
\]
Since $\Phi$ is strictly increasing, $\lambda_0$ and $\lambda_1$ are uniquely identified.

Now define
\[
Q_n=(Z_0^2+Z_1^2)^{1/2}.
\]
From the inverse-coordinate relation,
\[
Q_n-R_n=q_{n-1}\cos\Theta_{n-1}.
\]
Since $q_{n-1}>0$ almost surely,
\begin{equation}
\{\cos\Theta_{n-1}<0\}
=
\{Q_n<R_n\}
\quad\text{a.s.}
\label{eq:scale-event}
\end{equation}
Moreover,
\[
\frac{Q_n}{\sigma}
\sim
\left\|Y\right\|,
\quad
Y\sim\N_2\left(
\begin{bmatrix}\lambda_0\\ \lambda_1\end{bmatrix},I_2
\right).
\]
Let
\[
H_{\lambda_0,\lambda_1}(r)
=
\mathbb P\left(
\left\|Y\right\|<r
\right),
\quad r>0.
\]
This is the CDF of a noncentral chi random variable with two degrees of freedom; its density is strictly positive for every $r>0$. Hence $H_{\lambda_0,\lambda_1}$ is strictly increasing on $(0,\infty)$. The angular law therefore determines
\[
p
:=
\mathbb P(\cos\Theta_{n-1}<0)
=
H_{\lambda_0,\lambda_1}\left(\frac{R_n}{\sigma}\right).
\]
Since $0<p<1$ and $R_n>0$ is known,

{\begin{equation}
\sigma
=
\frac{R_n}
{H_{\lambda_0,\lambda_1}^{-1}(p)}.
\label{eq:sigma-identification}
\end{equation}}
Thus $\sigma$ is uniquely identified.

It remains to identify $\mu_2,\ldots,\mu_{n}$. For $j=2,\ldots,n$, let
\[
k=n-j+1.
\]
The forward coordinates satisfy
\[
Z_j=q_k\sin\Theta_k,
\]
with $q_k>0$ almost surely. Hence
\[
\mathbb P(\sin\Theta_k>0)
=
\mathbb P(Z_j>0)
=
\Phi\left(\frac{\mu_j}{\sigma}\right)
=
\Phi(\lambda_j).
\]
Again, strict monotonicity of $\Phi$ identifies every $\lambda_j$. Since $\sigma$ is already identified, all components
\[
\mu_j=\sigma\lambda_j
\]
are uniquely determined. This proves global identifiability.
\end{proof}

\begin{corollary}[Corollary of Theorem~\ref{thm:ident}]
\label{cor:alpha-sigma}
For
\[
Z\sim\N_{n+1}(\alpha e_1,\sigma^2I_{n+1}),
\quad n\ge2,
\]
the pair $(\alpha,\sigma)$ is globally identifiable. In particular,
\[
\frac{\alpha}{\sigma}
=
\Phi^{-1}\{\mathbb P(\cos\Theta_n>0)\},
\]
and $\sigma$ is then recovered uniquely from \eqref{eq:sigma-identification}.
\end{corollary}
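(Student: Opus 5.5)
The corollary follows by specializing the proof of Theorem~\ref{thm:ident} to the restricted mean $\mu=\alpha e_1$. I read $e_1$ as the first ambient basis vector, so $\mu_0=\alpha$ in the paper's $0$-indexed coordinates. Global identifiability of $(\alpha,\sigma)$ is then immediate from Theorem~\ref{thm:ident}. If $\pr(Z)\sim\pr(Z')$ with $Z\sim\N_{n+1}(\alpha e_1,\sigma^2 I)$ and $Z'\sim\N_{n+1}(\tilde\alpha e_1,\tilde\sigma^2 I)$, the theorem gives $\alpha e_1=\tilde\alpha e_1$ and $\sigma=\tilde\sigma$, hence $\alpha=\tilde\alpha$. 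The remaining task is to make the recovery explicit. For this I retrace the three steps of the theorem's proof with $\lambda_0=\alpha/\sigma$ and $\lambda_j=0$ for $j\ge1$.

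\textbf{Step 1: the ratio $\alpha/\sigma$.} Since $q_n>0$ almost surely, $\operatorname{sign}(Z_0)=\operatorname{sign}(\cos\Theta_n)$. Therefore
\[
\mathbb P(\cos\Theta_n>0)=\mathbb P(Z_0>0)=\Phi(\alpha/\sigma).
\]
Inverting the strictly increasing $\Phi$ gives the displayed formula for $\alpha/\sigma$. If instead $e_1$ refers to the coordinate $Z_1$, the same argument applies with $\sin\Theta_n$ in place of $\cos\Theta_n$. In that case $\mathbb P(\cos\Theta_n>0)=1/2$, and the formula would have to be read with the sine. I would fix the indexing convention explicitly at the start of the proof.

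\textbf{Step 2: recovering $\sigma$.} With $(\lambda_0,\lambda_1)=(\alpha/\sigma,0)$ now known, I apply \eqref{eq:scale-event}. This gives
\[
p=\mathbb P(\cos\Theta_{n-1}<0)=H_{\lambda_0,0}(R_n/\sigma).
\]
Because $H_{\lambda_0,0}$ is strictly increasing and $0<p<1$, equation \eqref{eq:sigma-identification} returns $\sigma$ uniquely.

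\textbf{Step 3: recovering $\alpha$.} Finally, $\alpha=\sigma\cdot\Phi^{-1}\{\mathbb P(\cos\Theta_n>0)\}$.

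\textbf{Main obstacle.} The only delicate point is that $p$ must lie in $(0,1)$. This holds because the noncentral chi density with two degrees of freedom is strictly positive on $(0,\infty)$. One also needs that $\pr(Z)$ is almost surely defined, which follows from Lemma~\ref{lem:fiber-coordinates} together with $n\ge2$, the condition that makes $\Theta_{n-1}$ exist. Both facts are inherited directly from the proof of the theorem.
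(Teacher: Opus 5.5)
Your proposal is correct and follows the paper's route. The corollary is Theorem~\ref{thm:ident} specialized to $\mu=\alpha e_1$, with $e_1$ the basis vector of the $x_0$-coordinate, as you rightly fix. The ratio is read off from $\mathbb P(\cos\Theta_n>0)=\Phi(\alpha/\sigma)$, and $\sigma$ is recovered via \eqref{eq:sigma-identification} using $H_{\alpha/\sigma,0}$. Your remark on the indexing ambiguity is well taken, since under the other reading of $e_1$ the displayed formula would give $0$.
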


\begin{remark}
For $n=1$, the observation is only the direction of a bivariate isotropic Gaussian. Then $(\mu,\sigma)$ and $(c\mu,c\sigma)$, $c>0$, induce the same angular law, so absolute scale is not identifiable. The outer shift $R_n$ entering the event \eqref{eq:scale-event} is exactly what breaks this scale invariance when $n\ge2$.
\end{remark}

\begin{remark}
Theorem~\ref{thm:ident} shows that the non-monotone finite-sample RMSE observed for $\sigma$ in the preliminary $n=3$ simulation cannot be attributed to structural non-identifiability of the isotropic model. It should instead be investigated as a finite-sample, optimization, or numerical-integration effect. This distinction is important when interpreting the simulation study.
\end{remark}

\subsection{Basics of maximum log-likelihood theory}
\label{sec:basics_max_log_likelihood}

Let
\begin{equation}
\vartheta=(\mu^\top,\sigma)^\top\in\R^{n+2},
\quad
Z\sim\mathcal N_{n+1}(\mu,\sigma^2 I_{n+1}),
\label{eq:param-vartheta}
\end{equation}
and let {$p_{\pr(Z)}(\bm{\theta})$} denote the exact angular density in
Theorem~\ref{thm:pgdensity}.  Because the surface Jacobian in
\eqref{eq:fiber-jac} is independent of $\vartheta$, maximizing the angular
likelihood is equivalent to maximizing the likelihood with respect to
Riemannian surface volume.

For  statistical estimation of parameters of type \eqref{eq:param-vartheta}, we work on the compact parameter set 
\begin{equation}
\Xi=K_\mu\times[\underline\sigma,\overline\sigma],
\quad
0<\underline\sigma<\overline\sigma<\infty,
\label{eq:compact-parameter-space}
\end{equation}
where $K_\mu\subset\R^{n+1}$ is compact { such that $F_n(\T_n)\subseteq \operatorname{int}K_\mu$} and the true parameter
$\vartheta_0=(\mu_0^\top,\sigma_0)^\top$ belongs to the interior of $\Xi$ {under the product topology}.
For independent observations $\Theta_1,\ldots,\Theta_N\sim { p_{\vartheta_0} (\bm{\theta})} $ {  where $p_{\vartheta_0}$ denotes the probability density function of $p_{\pr(Z)} $ where $Z\sim N_{n+1}(\mu_0,\sigma_0^2 I)$,}
define
\begin{equation}
\ell_N(\vartheta)
=
\sum_{i=1}^N \log { p_{\vartheta} }(\Theta_i),
\quad
\widehat\vartheta_N
\in
\operatorname*{arg\,max}_{\vartheta\in\Xi}\ell_N(\vartheta).
\label{eq:mle-def}
\end{equation}

\subsection{A regularity lemma for the projected normal}
\label{sec:regularitylemmaforprojnormal}

In this section we shall introduce a lemma treating the regularity of the projected normal density function given in \eqref{eq:angular-density}.

\begin{lemma}
\label{lem:likelihood-regularity}
There exists a compact parameter space
\eqref{eq:compact-parameter-space} such that the following hold.
\begin{enumerate}
\item There are constants $0<c<C<\infty$ such that
\begin{equation}
c\le {p_\vartheta (\bm{\theta})}\le C,
\quad
({\bm{\theta}},\vartheta)\in\T^n\times\Xi.
\label{eq:density-bounds}
\end{equation}
\item For each fixed ${\bm{\theta}}$, ${p_\vartheta (\bm{\theta})}$ is infinitely
differentiable in $\vartheta$ on the interior of $\Xi$.  Derivatives of
any fixed finite order may be passed under the $T$-integral.
\item { All orders of} derivatives of $\log g_\vartheta({\bm{\theta}})$ with
respect to $\vartheta$ are uniformly bounded on every compact subset of
$\operatorname{int}(\Xi)$, uniformly in ${\bm{\theta}}\in\T^n$.
\end{enumerate}
\end{lemma}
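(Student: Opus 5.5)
The approach is to work directly with the fiber-integral representation from the proof of Theorem~\ref{thm:pgdensity}, specialized to $\Sigma=\sigma^2 I$:
\[
p_\vartheta(\bm{\theta})=\frac{1}{(2\pi)^{(n+1)/2}\sigma^{n+1}}\int_0^{T_0(\bm{\theta})}\exp\Big(-\frac{\|a(\bm{\theta})+T\nu(\bm{\theta})-\mu\|^2-\|a(\bm{\theta})-\mu\|^2+\|a(\bm{\theta})-\mu\|^2}{2\sigma^2}\Big)J(T,\bm{\theta})\,\dd T ,
\]
that is, the Gaussian density evaluated along the normal fiber $x=X(T,\bm{\theta})$ and weighted by the Jacobian \eqref{eq:fiber-jac}. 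I will choose $\Xi$ as in \eqref{eq:compact-parameter-space}, with $K_\mu$ a closed ball containing $F_n(\T^n)$ and $\underline\sigma>0$. The integrand is then
\[
h_\vartheta(T,\bm{\theta})=\phi_\sigma\big(X(T,\bm{\theta})-\mu\big)\,J(T,\bm{\theta}),
\]
which is jointly continuous in $(T,\bm{\theta},\vartheta)$ on $\mathcal D\times\Xi$ and real-analytic in $\vartheta$. Every term is built from $F_n$, $\nu$, $q_k$ and $\cos\theta_j$, all of which are smooth and bounded on the compact torus.

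\textbf{Upper bound and domination (parts 1 and 2).} The Gaussian factor is bounded by $(2\pi)^{-(n+1)/2}\underline\sigma^{-(n+1)}$, and $0\le J(T,\bm{\theta})\le T\prod_{k\ge2}(R_k+T)$. Since $\|X(T,\bm{\theta})\|\ge T-c_0$ for a constant $c_0$ (by \eqref{eq:fiber-affine} with $\|\nu\|=1$) and $\mu$ ranges over a compact set, we get
\[
\phi_\sigma(X-\mu)\le C_1\exp\bigl(-(T-c_1)_+^2/(2\overline\sigma^2)\bigr),
\]
uniformly on $\Xi\times\T^n$. This gives an integrable envelope $G(T)$, polynomial times Gaussian, independent of $(\bm{\theta},\vartheta)$, and the upper bound $C$ follows. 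For part 2, every $\vartheta$-derivative of order $r$ of $\phi_\sigma(X-\mu)$ has the form of a polynomial in $(X-\mu)/\sigma^2$ and $1/\sigma$, of degree at most $2r$, times $\phi_\sigma$. On $\Xi$ this is bounded by $C_r(1+T)^{2r}G(T)$, which is again integrable. Dominated convergence then justifies differentiating under the integral to all orders, and $\vartheta\mapsto p_\vartheta(\bm{\theta})$ is $C^\infty$. The variable upper limit $T_0(\bm{\theta})$ causes no trouble here, because it does not depend on $\vartheta$.

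\textbf{Lower bound (part 1), the main obstacle.} I need $\inf_{\bm{\theta},\vartheta}p_\vartheta(\bm{\theta})>0$. I will restrict the fiber integral to $T\in[R_1/2,R_1]$. Using \eqref{eq:radius}, the affine form \eqref{eq:q-affine} gives $q_k(T,\bm{\theta})\ge R_k-\sum_{j<k}R_j>0$ for $T\le R_1$, so this interval lies inside $(0,T_0(\bm{\theta}))$ for every $\bm{\theta}$. On it,
\[
J(T,\bm{\theta})\ge \frac{R_1}{2}\prod_{k\ge2}\Bigl(R_k-\sum_{j<k}R_j\Bigr)>0,
\]
and $\|X(T,\bm{\theta})-\mu\|\le D:=\sup_{\mathcal D_{[R_1/2,R_1]}}\|X\|+\sup_{K_\mu}\|\mu\|$. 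Hence $\phi_\sigma\ge(2\pi\overline\sigma^2)^{-(n+1)/2}e^{-D^2/(2\underline\sigma^2)}$. Multiplying these bounds and the interval length $R_1/2$ gives the constant $c$. The delicate point is making sure the chosen $T$-interval stays inside $\mathcal D$ uniformly in $\bm{\theta}$, which is exactly where the strong radius condition enters. The other thing to check is that $T_0$ does not shrink below $R_1$; the same lower bound on $q_k$ at $T=R_1$ together with affinity in $T$ settles this.

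\textbf{Log-derivatives (part 3).} Derivatives of $\log p_\vartheta$ (the $g_\vartheta$ in the statement) are, by Faà di Bruno, polynomials in $\partial^\alpha p_\vartheta/p_\vartheta$. The numerators are bounded uniformly in $\bm{\theta}$ by part 2's envelope bounds, since $\int(1+T)^{2r}G(T)\,\dd T<\infty$ does not depend on $\bm{\theta}$. The denominators are at least $c$ by part 1. Uniform boundedness on compact subsets of $\operatorname{int}\Xi$, indeed on all of $\Xi$, follows directly.
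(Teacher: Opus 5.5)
Your proposal is correct and follows essentially the paper's route. You get the uniform lower bound by restricting the fiber integral to a fixed $T$-interval inside $(0,T_0(\bm{\theta}))$ on which the strong radius condition keeps $J$ bounded below (you take $[R_1/2,R_1]$, the paper $[R_1-\varepsilon,R_1+\varepsilon]$). You then pass derivatives under the $\vartheta$-independent fiber integral and divide by that lower bound for the log-derivatives. Your explicit Gaussian-tail envelope makes rigorous the upper bound and domination that the paper only asserts via the truncated-moment form \eqref{eq:density-moment-reduction}.

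There are two harmless slips. First, $q_k\le R_k+T$ is false in general (at $\theta_1=\theta_2=0$ one has $q_3=R_3+R_2+T$), so use $q_k\le\sum_{j\le k}R_j+T$ in the envelope. Second, your opening display subtracts $\mu$ twice, since the paper's $a(\bm{\theta})$ already contains $-\mu$; the integrand you actually use, $\phi_\sigma(X(T,\bm{\theta})-\mu)J(T,\bm{\theta})$, is the right one.
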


\begin{proof}
{ At the curved torus $F_n(\T_n)$,} the strong-radius condition gives the uniform lower
bounds
\[
q_1(R_1,{\bm{\theta}})=R_1,
\quad
q_k(R_1,{\bm{\theta}})
\ge
R_k-\sum_{j<k}R_j
=:d_k>0.
\]
 {By \eqref{eq:q-affine} $|\partial_T q_k|=|\prod_{j=1}^{k-1}\cos\theta_j|\le1$}, there exists
$\varepsilon>0$, independent of ${\bm{\theta}}$, such that
\[
[R_1-\varepsilon,R_1+\varepsilon]
\subset (0,T_0({\bm{\theta}}))
\quad\text{for every }{\bm{\theta}}\in\T^n.
\]
On this interval, $X(T,{\bm{\theta}})$ and $J(T,{\bm{\theta}})$ are
continuous and $J$ is bounded away from zero.  Since $K_\mu$ is compact
and $\sigma$ is bounded above and away from zero, the Gaussian density is
also bounded below by a positive constant.  Integrating the $T$-coordinate over $[R_1-\varepsilon,R_1+\varepsilon]$
 proves the lower bound in \eqref{eq:density-bounds}.
 
The upper bound follows from observing that integrated expression in \eqref{eq:density-moment-reduction} is bounded from above.

{ Note that in the case that the noise is isotropic it holds that $A= \frac{1}{\sigma^2}$ is constant in ${\bm{\theta}}$. Moreover $B(\bm{\theta}) = \frac{1}{\sigma^2}(\langle F_n(\bm{\theta}),\nu(\bm{\theta})\rangle -R_1 -\langle\nu(\bm{\theta}) ,\mu\rangle)$ and therefore $K(\bm{\theta})=\frac{1}{\sigma}(\langle F_n(\bm{\theta}),\nu(\bm{\theta})\rangle -R_1 -\langle\nu(\bm{\theta}) ,\mu\rangle)$, since $\sigma$ is bounded away from zero and since $p_{\vartheta}$ is a smooth function of $\sigma$ and $\mu$, it holds that $p_{\vartheta}$ is infinitely differentiable in $\vartheta$. Furthermore, the domain of integration for integration over $T$ is the same for all $\vartheta\in \Xi$, then boundedness of the derivatives ensures that the derivatives can be moved inside the integral. Since $p_{\vartheta}$ is bounded away from $0$ the map $\log(p_{\vartheta})$ is infinitely differentiable as well and they are bounded in $\operatorname{int}(\Xi)$.
}
\end{proof}

\subsection{Proof of Theorem~\ref{thm:mle}}
\label{sec:proof_mle}

For $\vartheta$ as defined in \eqref{eq:param-vartheta}, let
$$
s_\vartheta(\bm{\theta})
=
\nabla_\vartheta \log p_\vartheta(\bm{\theta})
$$
denote the score function, and define the Fisher information matrix by
$$
\mathcal I(\vartheta_0)
=
\int_{\T^n}
s_{\vartheta_0}(\bm{\theta})
s_{\vartheta_0}(\bm{\theta})^\top
p_{\vartheta_0}(\bm{\theta})
\dd \bm{\theta}.
$$

\begin{proof}[Proof of Theorem~\ref{thm:mle}]

We first prove the strong consistency in
\eqref{eq:mle-strong-consistency}.  Define the normalized empirical
log-likelihood
$$
M_N(\vartheta)
=
\frac{1}{N}\ell_N(\vartheta)
=
\frac{1}{N}
\sum_{i=1}^N
\log p_\vartheta(\Theta_i),
$$
and its population counterpart
$$
M(\vartheta)
=
\mathbb E_{\vartheta_0}
\{\log p_\vartheta(\Theta)\}
=
\int_{\T^n}
\log p_\vartheta(\bm{\theta})\,
p_{\vartheta_0}(\bm{\theta})
\dd\bm{\theta}.
$$

By Lemma~\ref{lem:likelihood-regularity}, there exist constants
$0<c<C<\infty$ such that
$$
c
\le
p_\vartheta(\bm{\theta})
\le
C,
\quad
(\bm{\theta},\vartheta)
\in
\T^n\times\Xi.
$$
Consequently,
$$
|\log p_\vartheta(\bm{\theta})|
\le
\max\{|\log c|,|\log C|\},
$$
uniformly in $(\bm{\theta},\vartheta)$, and hence
$$
\int_{\T^n}
|\log p_\vartheta(\bm{\theta})|
p_{\vartheta_0}(\bm{\theta})
\dd\bm{\theta}
<\infty.
$$

For every fixed $\vartheta\in\Xi$, the strong law of large numbers
therefore gives
$$
M_N(\vartheta)
\longrightarrow
M(\vartheta)
\quad\text{almost surely}.
$$

We now strengthen this pointwise convergence to uniform convergence
over $\Xi$.  Since $\T^n\times\Xi$ is compact and
$(\bm{\theta},\vartheta)\mapsto
\log p_\vartheta(\bm{\theta})$
is continuous, it is uniformly continuous.  Define its modulus of
continuity in the parameter argument by
$$
\omega(\delta)
=
\sup_{\substack{
\bm{\theta}\in\T^n,\\
\vartheta,\vartheta'\in\Xi,\\
\|\vartheta-\vartheta'\|\le\delta
}}
\left|
\log p_\vartheta(\bm{\theta})
-
\log p_{\vartheta'}(\bm{\theta})
\right|.
$$
Then
$$
\omega(\delta)\longrightarrow0
\quad\text{as }\delta\downarrow0.
$$

Fix $\delta>0$.  Since $\Xi$ is compact, there exists a finite
$\delta$-net
$
\vartheta^{(1)},\ldots,\vartheta^{(K)}
$
such that for every $\vartheta\in\Xi$ there is some $k$ satisfying
$
\|\vartheta-\vartheta^{(k)}\|\le\delta.
$
For such a pair,
$$
|M_N(\vartheta)-M_N(\vartheta^{(k)})|
\le
\omega(\delta),
$$
because each summand differs by at most $\omega(\delta)$.  Similarly,
$$
|M(\vartheta)-M(\vartheta^{(k)})|
\le
\omega(\delta).
$$
Hence
\begin{multline*}
|M_N(\vartheta)-M(\vartheta)|\le
|M_N(\vartheta)-M_N(\vartheta^{(k)})|+
|M_N(\vartheta^{(k)})-M(\vartheta^{(k)})|+
|M(\vartheta^{(k)})-M(\vartheta)|
\\
\le
2\omega(\delta)
+
|M_N(\vartheta^{(k)})-M(\vartheta^{(k)})|.
\end{multline*}
Taking the supremum over $\vartheta\in\Xi$ gives
$$
\sup_{\vartheta\in\Xi}
|M_N(\vartheta)-M(\vartheta)|
\le
2\omega(\delta)
+
\max_{1\le k\le K}
|M_N(\vartheta^{(k)})-M(\vartheta^{(k)})|.
$$
The net contains only finitely many points, so the strong law applied
simultaneously to these $K$ points implies
$$
\max_{1\le k\le K}
|M_N(\vartheta^{(k)})-M(\vartheta^{(k)})|
\longrightarrow0
\quad\text{almost surely}.
$$
Therefore,
$$
\limsup_{N\to\infty}
\sup_{\vartheta\in\Xi}
|M_N(\vartheta)-M(\vartheta)|
\le
2\omega(\delta)
\quad\text{almost surely}.
$$
Since $\delta>0$ is arbitrary and
$\omega(\delta)\to0$, we obtain
\begin{equation}
\sup_{\vartheta\in\Xi}
\left|
\frac{1}{N}\ell_N(\vartheta)-M(\vartheta)
\right|
\longrightarrow0
\quad\text{almost surely}.
\label{eq:uniform-slln-likelihood}
\end{equation}

We next identify the unique maximizer of the population criterion.
For every $\vartheta\in\Xi$,
\begin{align*}
M(\vartheta)-M(\vartheta_0)
&=
\int_{\T^n}
p_{\vartheta_0}(\bm{\theta})
\log
\left\{
\frac{p_\vartheta(\bm{\theta})}
     {p_{\vartheta_0}(\bm{\theta})}
\right\}
\dd\bm{\theta}
\\
&=
-
\operatorname{KL}
\{p_{\vartheta_0}\|p_\vartheta\}
\\
&\le0.
\end{align*}
Thus $\vartheta_0$ is a maximizer of $M$.  Equality holds if and only
if
$
p_\vartheta
=
p_{\vartheta_0}
\quad\text{almost everywhere}.
$
By Theorem~\ref{thm:ident}, the isotropic projected-normal family is
identifiable, and hence this equality implies
$
\vartheta=\vartheta_0.
$
Therefore, $\vartheta_0$ is the unique maximizer of $M$.

It remains to show that the empirical maximizer converges to this
unique population maximizer.  Fix $\varepsilon>0$ and define
$$
A_\varepsilon
=
\left\{
\vartheta\in\Xi:
\|\vartheta-\vartheta_0\|\ge\varepsilon
\right\}.
$$
Since $\Xi$ is compact, $A_\varepsilon$ is compact.  Since $M$ is
continuous and has the unique maximizer $\vartheta_0$,
$$
\Delta_\varepsilon
:=
M(\vartheta_0)
-
\sup_{\vartheta\in A_\varepsilon}
M(\vartheta)
>0.
$$

By the uniform convergence in
\eqref{eq:uniform-slln-likelihood}, almost surely there exists
$N_\varepsilon<\infty$ such that, for all
$N\ge N_\varepsilon$,
$$
\sup_{\vartheta\in\Xi}
|M_N(\vartheta)-M(\vartheta)|
<
\frac{\Delta_\varepsilon}{3}.
$$
Hence
$$
M_N(\vartheta_0)
>
M(\vartheta_0)
-
\frac{\Delta_\varepsilon}{3}.
$$
On the other hand, for every
$\vartheta\in A_\varepsilon$,
$$
M_N(\vartheta)
<
M(\vartheta)
+
\frac{\Delta_\varepsilon}{3}
\le
M(\vartheta_0)
-
\frac{2\Delta_\varepsilon}{3}.
$$
Therefore,
$$
M_N(\vartheta_0)
>
\sup_{\vartheta\in A_\varepsilon}
M_N(\vartheta)
$$
for all sufficiently large $N$, almost surely.

Since
$$
\widehat\vartheta_N
\in
\arg\max_{\vartheta\in\Xi}
M_N(\vartheta),
$$
it follows that
$
\widehat\vartheta_N
\notin
A_\varepsilon
$
for all sufficiently large $N$, almost surely.  Equivalently,
$$
\|\widehat\vartheta_N-\vartheta_0\|
<
\varepsilon
$$
eventually almost surely.  Since $\varepsilon>0$ was arbitrary,
$$
\widehat\vartheta_N
\longrightarrow
\vartheta_0
\quad\text{almost surely}.
$$
This proves \eqref{eq:mle-strong-consistency}.

Next, we shall show that the convergence in distribution \eqref{eq:mle-clt} holds true. {By Lemma~\ref{lem:likelihood-regularity}, $\log p_\vartheta$ has smooth regularity, since $p_\vartheta$ has the same integration domain for all $\vartheta \in \Xi$ }
\[
\mathbb E_{\vartheta_0}\{s_{\vartheta_0}(\Theta)\}=0,
\quad
-\mathbb E_{\vartheta_0}
\{\nabla_\vartheta^2\log {p}_{\vartheta_0}(\Theta)\}
=
\mathcal I(\vartheta_0).
\]
The  central limit theorem gives
\begin{equation}
\frac1{\sqrt N}
\sum_{i=1}^N s_{\vartheta_0}(\Theta_i)
\xrightarrow{d}
\mathcal N_{n+2}\{0,\mathcal I(\vartheta_0)\}.
\end{equation}
By \eqref{eq:mle-strong-consistency}, $\widehat\vartheta_N$ lies in the
interior with probability tending to one.  Therefore, for large enough, $N$ it holds that $\hat{\vartheta}_N\in \operatorname{int}\Xi$ and therefore conditional

On the event
$
E_N=\{\widehat\vartheta_N\in\operatorname{int}\Xi\},
$
the first-order condition gives
$$
\nabla_\vartheta\ell_N(\widehat\vartheta_N)=0.
$$
By a Taylor expansion around $\vartheta_0$,
$$
0
=
\nabla_\vartheta\ell_N(\vartheta_0)
+
\nabla_\vartheta^2\ell_N(\widetilde\vartheta_N)
(\widehat\vartheta_N-\vartheta_0),
$$
for an intermediate parameter $\widetilde\vartheta_N$ between
$\vartheta_0$ and $\widehat\vartheta_N$. Hence,
$$
0
=
\frac1{\sqrt N}\sum_{i=1}^N
s_{\vartheta_0}(\Theta_i)
+
\left\{
\frac1N
\nabla_\vartheta^2\ell_N(\widetilde\vartheta_N)
\right\}
\sqrt N(\widehat\vartheta_N-\vartheta_0),
$$
on the event $E_N$.

Since $\widetilde\vartheta_N$ lies on the line segment joining
$\vartheta_0$ and $\widehat\vartheta_N$, the strong consistency
\eqref{eq:mle-strong-consistency} implies
$$
\widetilde\vartheta_N\longrightarrow\vartheta_0
\quad\text{almost surely}.
$$
By Lemma~\ref{lem:likelihood-regularity}, the Hessian is continuous
and uniformly bounded on a neighborhood of $\vartheta_0$.  Therefore,
by the corresponding uniform law of large numbers,
$$
\frac1N
\nabla_\vartheta^2
\ell_N(\widetilde\vartheta_N)
\longrightarrow
\mathbb E_{\vartheta_0}
\left\{
\nabla_\vartheta^2
\log p_{\vartheta_0}(\Theta)
\right\}
=
-\mathcal I(\vartheta_0)
$$
almost surely.
Thus
$$
0
=
\frac1{\sqrt N}\sum_{i=1}^N
s_{\vartheta_0}(\Theta_i)
-
\mathcal I(\vartheta_0)
\sqrt N(\widehat\vartheta_N-\vartheta_0)
+
o_p(1).
$$
Since $\mathcal I(\vartheta_0)$ is positive definite and hence
invertible, and since $\mathbb P(E_N)\to1$, Slutsky's theorem together
with the score central limit theorem yields
\eqref{eq:mle-clt}.
\end{proof}

\subsection{Consistency of the simulation-based minimum-MMD estimator}
\label{app:flexible-mmd}
Let $\{P_\eta:\eta\in\mathcal H\}$ be a projected normal family and let
\[
J(\eta)=D_k^2(P_0,P_\eta)
\]
denote the population squared MMD after the sine-cosine embedding.  Let
$\widehat J_{N,m_N}(\eta)$ be the corresponding simulated empirical criterion.



\begin{proof}[Proof of Proposition~\ref{prop:mmd}]
Continuity of $J$ on the compact set $\mathcal H$ implies that the
minimizer set
$$
\mathcal H_0
=
\arg\min_{\eta\in\mathcal H}J(\eta)
$$
is nonempty and compact.  Fix $\varepsilon>0$ and define
$$
\mathcal H_\varepsilon
=
\left\{
\eta\in\mathcal H:
\operatorname{dist}(\eta,\mathcal H_0)\ge\varepsilon
\right\}.
$$
Since $\mathcal H_\varepsilon$ is compact and disjoint from
$\mathcal H_0$,
$$
\Delta_\varepsilon
=
\inf_{\eta\in\mathcal H_\varepsilon}J(\eta)
-
\inf_{\eta\in\mathcal H}J(\eta)
>0.
$$

Suppose
$
\sup_{\eta\in\mathcal H}
\left|
\widehat J_{N,m_N}(\eta)-J(\eta)
\right|
<
\frac{\Delta_\varepsilon}{3}
$

and
$
r_N<\frac{\Delta_\varepsilon}{3},
$
where
$$
\widehat J_{N,m_N}(\widehat\eta_N)
\le
\inf_{\eta\in\mathcal H}
\widehat J_{N,m_N}(\eta)+r_N.
$$
For any $\eta\in\mathcal H_\varepsilon$,
$$
\widehat J_{N,m_N}(\eta)
\ge
\inf_{\eta'\in\mathcal H}J(\eta')
+
\frac{2\Delta_\varepsilon}{3},
$$
whereas for any $\eta_0\in\mathcal H_0$,
$$
\widehat J_{N,m_N}(\eta_0)
\le
\inf_{\eta'\in\mathcal H}J(\eta')
+
\frac{\Delta_\varepsilon}{3}.
$$
Hence every empirical criterion value on
$\mathcal H_\varepsilon$ exceeds the value at a population minimizer
by at least $\Delta_\varepsilon/3$, which is larger than the allowed
optimization error.  Therefore
$
\widehat\eta_N\notin\mathcal H_\varepsilon.
$
Since the displayed event has probability tending to one,
$$
\operatorname{dist}(\widehat\eta_N,\mathcal H_0)
\xrightarrow{p}0.
$$

If the model is correctly specified and the kernel is characteristic,
then
$$
J(\eta)=0
\iff
P_\eta=P_{\eta_0}.
$$
Hence the preceding set convergence and continuity of $J$ imply
$$
D_k(P_{\widehat\eta_N},P_{\eta_0})
\xrightarrow{p}0.
$$
If the generating parameter is unique, so that
$\mathcal H_0=\{\eta_0\}$, then
$
\widehat\eta_N\xrightarrow{p}\eta_0.
$
\end{proof}
\begin{remark}
The distinction between distributional and parameter consistency is material for the
diagonal-plus-rank-one covariance used in the applications because $vv^\top$ is invariant
under $v\mapsto-v$.  The proposition therefore does not require an artificial sign
constraint merely to state the distributional guarantee.
\end{remark}

\begin{remark}
The proposition is an asymptotic statement for an idealized sequence of simulation-based minimum-MMD estimators satisfying $m_N\to\infty$, uniform convergence of the simulated criterion, and asymptotically negligible optimization error. The finite-batch stochastic optimizer used in the experiments is a computational approximation to this estimator. We do not claim that the fixed finite-batch optimization scheme itself satisfies the asymptotic conditions of the proposition.
\end{remark}

For the flexible model used in the applications, the recursively curved
geometry and projection map are fixed before estimation.  Only the
ambient Gaussian parameters are estimated,
$$
Z\sim N_{n+1}(\mu,\Sigma),
$$
with
\begin{equation}
\Sigma=
\operatorname{diag}
\left(
\lambda_{\min}+e^{2s_1},\ldots,
\lambda_{\min}+e^{2s_{n+1}}
\right)
+vv^\top,
\label{eq:flexcov-app}
\end{equation}
where $\lambda_{\min}>0$ is fixed and
$s,v\in\mathbb R^{n+1}$.

The diagonal term controls coordinate-specific ambient variation,
whereas $vv^\top$ introduces a parsimonious dependence structure.
In particular,
$$
\Sigma_{jj}
=
\lambda_{\min}+e^{2s_j}+v_j^2,
\quad
\Sigma_{jk}=v_jv_k,\quad j\ne k.
$$
Since the diagonal matrix in \eqref{eq:flexcov-app} is strictly
positive definite and $vv^\top$ is positive semidefinite,
$\Sigma\succ0$ throughout optimization.

The covariance parameterization uses $2(n+1)$ optimization
parameters through $(s,v)$, compared with
$
\frac{(n+1)(n+2)}{2}
$
parameters for an unrestricted covariance matrix.  Including
$\mu\in\mathbb R^{n+1}$ gives $3(n+1)$ parameters for M4.

For $\eta=(\mu,s,v)$, let $P_\eta$ denote the projected angular law.
We fit the flexible family by simulation-based minimum MMD after the
periodic embedding
$$
\Phi(\theta)
=
(\cos\theta_1,\ldots,\cos\theta_n,
 \sin\theta_1,\ldots,\sin\theta_n)
\in\mathbb R^{2n}.
$$
The simulated empirical criterion compares the embedded observed
sample with embedded draws from $P_\eta$ using a multiscale kernel.
The parameterization is not one-to-one because
$$
vv^\top=(-v)(-v)^\top.
$$
Thus $v$ and $-v$ define the same covariance and the same projected
distribution.  This motivates the set-valued and distributional form
of Proposition~\ref{prop:mmd}.

In the applications, the idealized criterion is approximated using
finite simulated mini-batches and stochastic optimization.  Each
iteration uses an observed mini-batch and a simulated mini-batch from
the current model, evaluates the multiscale squared-MMD objective, and
updates $\eta$ by stochastic gradient optimization.  Restarts and
tuning are selected using validation data only.  Proposition~\ref{prop:mmd}
therefore describes the asymptotic target of the procedure and should
not be interpreted as a finite-sample convergence guarantee for the
particular stochastic optimization run.

\subsection{Energy-based Fr\'echet statistics on the curved torus}
\label{sec:energy-frechet}

This section formulates the intrinsic statistics directly through the
Riemannian energy functional.  
 { We work with the energy functionals since closed form geodesics are not easily attainable.}

For { a smooth curve $\gamma:[0,1]\rightarrow \R^n$}, define its energy { on the embedded $n$-torus $F_n(\T^n)$} by
\[
\mathcal E(\gamma)
=
\int_0^1
\sum_{j=1}^n
\rho_j\{\gamma(t)\}^2\dot\gamma_j(t)^2\,\dd t.
\]

Let $p,y\in\T^n$, and let $\widetilde p,\widetilde y\in\R^n$
be any chosen lifts of these points.  Here, for
$x\in\R^n$, we write
$$
[x]
=
x+(2\pi\mathbb Z)^n
\in
\T^n
=
\R^n/(2\pi\mathbb Z)^n
$$
for the equivalence class of $x$ modulo $2\pi$ in each coordinate.
Thus
$
[\widetilde p]=p,
\quad
[\widetilde y]=y.
$  For $k\in\mathbb Z^n$, let
\[
    \mathcal A_k(p,y)
=
\big\{
\gamma\in C^{{\infty}}([0,1],\R^n):
\gamma(0)=\widetilde p,\quad \gamma(1)=\widetilde y+2\pi k
\big\}.
\]
Define
\begin{equation}
D_E^2(p,y)
=
\min_{k\in\mathbb Z^n}
\inf_{\gamma\in\mathcal A_k(p,y)}
\mathcal E(\gamma).
\label{eq:energy-distance}
\end{equation}

Set
\[
\underline\rho
=
\min\left\{
R_1,
\min_{2\le j\le n}
\left(R_j-\sum_{\ell<j}R_\ell\right)
\right\}>0,
\]
and
\[
\overline\rho
=
\max\left\{
R_1,
\max_{2\le j\le n}
\left(R_j+\sum_{\ell<j}R_\ell\right)
\right\}<\infty.
\]
Then, for every lifted  curve,
\begin{equation}
\underline\rho^2
\int_0^1\|\dot\gamma(t)\|^2\dd t
\le
\mathcal E(\gamma)
\le
\overline\rho^2
\int_0^1\|\dot\gamma(t)\|^2\dd t.
\label{eq:energy-sandwich}
\end{equation}

\begin{proof}[Proof of Proposition~\ref{prop:winding-attainment}]

For a fixed winding number $k\in\mathbb Z^n$, every
$\gamma\in\mathcal A_k(p,y)$ satisfies, by Cauchy-Schwarz and
\eqref{eq:energy-sandwich},
\begin{equation}
\mathcal E(\gamma)
\ge
\underline\rho^2
\int_0^1 \|\dot\gamma(t)\|^2\,\dd t
\nonumber
\ge
\underline\rho^2
\left\|
\widetilde y+2\pi k-\widetilde p
\right\|^2.
\label{eq:winding-lower-bound}
\end{equation}

On the other hand, choose any fixed reference winding
$k_0\in\mathbb Z^n$.  The straight lifted curve
$$
\gamma_0(t)
=
\widetilde p
+
t\{\widetilde y+2\pi k_0-\widetilde p\},
\qquad 0\le t\le1,
$$
has finite energy; denote it by $U_0<\infty$.  Therefore, any
$k\in\mathbb Z^n$ satisfying
$$
\underline\rho^2
\left\|
\widetilde y+2\pi k-\widetilde p
\right\|^2
>
U_0
$$
cannot attain the minimum in \eqref{eq:energy-distance}.
Since
$$
\left\|
\widetilde y+2\pi k-\widetilde p
\right\|
\longrightarrow\infty
\qquad\text{as }\|k\|\to\infty,
$$
only finitely many winding numbers can be relevant.

Now consider the lifted Riemannian metric $\widetilde g$ on
$\mathbb R^n$ obtained by periodically extending the metric on
$\T^n$.  By \eqref{eq:energy-sandwich},
$$
\underline\rho^2\|v\|^2
\le
\widetilde g_x(v,v)
\le
\overline\rho^2\|v\|^2,
\qquad
x,v\in\mathbb R^n.
$$
Hence the lifted Riemannian metric is uniformly equivalent to the
Euclidean metric.  In particular, $(\mathbb R^n,\widetilde g)$ is
complete.

Therefore, by the Hopf-Rinow theorem, for every fixed
$k\in\mathbb Z^n$ there exists a minimizing geodesic joining
$\widetilde p$ to $\widetilde y+2\pi k$.  Let
$\widetilde d_g$ denote the geodesic distance associated with the
lifted metric.  After constant-speed parametrization, the minimizing\\
geodesic $\gamma_k\in\mathcal A_k(p,y)$ satisfies
$$
\mathcal E(\gamma_k)
=
\widetilde d_g
\left(
\widetilde p,\widetilde y+2\pi k
\right)^2.
$$

Since only finitely many winding classes can attain the global
minimum, there exists some $k^\star\in\mathbb Z^n$ and a corresponding
constant-speed minimizing geodesic $\gamma^\star$ such that
$$
D_E^2(p,y)
=
\mathcal E(\gamma^\star)
=
\min_{k\in\mathbb Z^n}
\widetilde d_g
\left(
\widetilde p,\widetilde y+2\pi k
\right)^2.
$$

Because the metric on $\mathbb R^n$ is the periodic lift of the metric
on $\T^n$, the geodesic distance on the quotient torus satisfies
$$
d_g(p,y)
=
\min_{k\in\mathbb Z^n}
\widetilde d_g
\left(
\widetilde p,\widetilde y+2\pi k
\right).
$$
Finally, $F_n$ is an isometric embedding of $(\T^n,g)$ into
$F_n(\T^n)$ with its induced Riemannian metric.  Hence
$$
D_E^2(p,y)
=
d_g(p,y)^2
=
d_M\{F_n(p),F_n(y)\}^2.
$$
This proves \eqref{eq:distance}.
\end{proof}

Let $Y$ be a random  { variable} on $F_n(\T^n)$.  Define the energy-based
Fr\'echet function
\[
\mathcal F(p)
=
\mathbb E\{D_E^2({F_n^{-1}(p)},{F_n^{-1}(Y)})\},
\quad p\in F_n(\T^n),
\]
and the population mean set
\[
\mathcal M
=
\operatorname*{arg\,min}_{p\in F_n(\T^n)}\mathcal F(p).
\]
For observations $Y_1,\ldots,Y_N$, define the respective empirical counterparts
\begin{equation}
    \begin{aligned}
        \mathcal F_N(p)
&=
\frac1N\sum_{i=1}^N D_E^2({F_n^{-1}(p)},{F_n^{-1}(Y_i)}),\\
\widehat m_N
&\in
\operatorname*{arg\,min}_{p\in F_n(\T^n)}\mathcal F_N(p).
    \end{aligned}
    \label{eq:empirical-frechet}
\end{equation}

\begin{theorem}
\label{thm:frechet-existence}
For every probability distribution of $Y$ on $F(\T^n)$, the population
mean set $\mathcal M$ is nonempty and compact.  For every finite sample,
 { $\hat{m}_N$ is also nonempty and compact}.
\end{theorem}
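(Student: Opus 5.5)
The plan is to show that the Fréchet functions $\mathcal F$ and $\mathcal F_N$ are continuous, indeed Lipschitz, on the compact set $F_n(\T^n)$. The conclusion then follows from the extreme value theorem: a continuous function on a compact metric space attains its minimum, and its set of minimizers is a closed subset of a compact set, hence compact.

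\textbf{Step 1: identify $D_E$ with the intrinsic distance.} By Proposition~\ref{prop:winding-attainment} (in the form proved in the appendix), $D_E(F_n^{-1}(p),F_n^{-1}(y)) = d_M(p,y)$ for all $p,y\in F_n(\T^n)$. Here $d_M$ is the Riemannian geodesic distance of the compact connected manifold $M=F_n(\T^n)$. It is a genuine metric inducing the manifold topology, which coincides with the subspace topology from $\R^{n+1}$ because $F_n$ is a smooth embedding. Consequently:
\begin{itemize}
\item $D_E^2(F_n^{-1}(\cdot),F_n^{-1}(\cdot))$ is a continuous function on $M\times M$;
\item it is bounded by $\operatorname{diam}(M)^2<\infty$.
\end{itemize}
For an explicit bound one can instead use \eqref{eq:energy-sandwich} with straight-line lifts and winding $k$ chosen so that each coordinate difference lies in $[-\pi,\pi]$. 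This gives $D_E^2\le \overline\rho^2 n\pi^2$.

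\textbf{Step 2: Lipschitz continuity of $\mathcal F$ and $\mathcal F_N$.} By the triangle inequality for $d_M$,
\[
|d_M^2(p,y)-d_M^2(p',y)| \le \bigl(d_M(p,y)+d_M(p',y)\bigr)\,d_M(p,p') \le 2\operatorname{diam}(M)\,d_M(p,p').
\]
This bound is uniform in $y$. Taking expectations over $Y$, which is justified since the integrand is bounded and measurable (continuous in $y$), gives
\[
|\mathcal F(p)-\mathcal F(p')|\le 2\operatorname{diam}(M)\,d_M(p,p').
\]
In particular $\mathcal F$ is finite and Lipschitz for every law of $Y$; no moment condition is needed. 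The same estimate holds for the empirical average $\mathcal F_N$, for every realization of $Y_1,\dots,Y_N$.

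\textbf{Step 3: conclude.} Since $M$ is compact, Weierstrass gives that $\mathcal F$ attains its infimum, so $\mathcal M\neq\emptyset$. The set $\mathcal M=\mathcal F^{-1}(\{\min\mathcal F\})$ is closed in $M$, hence compact. The identical argument applies to $\mathcal F_N$, showing that its argmin set, from which $\widehat m_N$ is chosen, is nonempty and compact.

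The main point requiring care is Step 1, namely that $D_E$ is finite, symmetric, satisfies the triangle inequality and is continuous. Rather than working with the energy functional directly, this is obtained by invoking Proposition~\ref{prop:winding-attainment} to pass to the Riemannian distance $d_M$, whose metric properties are standard.
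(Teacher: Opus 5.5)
Your proof is correct and follows the paper's argument: identify $D_E$ with the Riemannian distance via Proposition~\ref{prop:winding-attainment}, use finite diameter to bound $D_E^2$ and obtain continuity of $\mathcal F$ and $\mathcal F_N$, then use compactness of the torus to get a minimizer. The only differences are that you get continuity from the triangle-inequality Lipschitz bound rather than from dominated convergence, and you state explicitly that the argmin set is closed and hence compact, which the paper leaves implicit.
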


\begin{proof}
The embedded torus is compact.  By
Proposition~\ref{prop:winding-attainment}, $D_E=d_g$, and the Riemannian
distance is continuous.  Since the manifold has finite diameter,
$D_E^2(p,Y)$ is bounded uniformly in $(p,Y)$.  Hence
$\mathcal F$ is continuous by dominated convergence and attains its
minimum on the compact manifold.  The same argument applies directly to
$\mathcal F_N$.
\end{proof}

Global uniqueness cannot be guaranteed for arbitrary distributions on a
torus.  A sufficient local condition is the following.

\begin{proposition}
\label{prop:frechet-uniqueness}
Suppose there is a geodesically convex open set $U$ containing the support
of $Y$ and a constant $\lambda>0$ such that, for every $y$ in the support
and every $p\in U$,
\begin{equation}
\operatorname{Hess}_p
\left\{\frac12D_E^2(p,y)\right\}
\succeq
\lambda g_p.
\label{eq:strong-convexity-distance}
\end{equation}
Then $\mathcal F$ is strictly geodesically convex on $U$ and has at most
one minimizer there.  If, additionally,
\begin{equation}
\inf_{p\notin U}\mathcal F(p)
>
\inf_{p\in U}\mathcal F(p),
\label{eq:no-outside-minimizer}
\end{equation}
then the population Fr\'echet mean is globally unique.
\end{proposition}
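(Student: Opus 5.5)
The plan is to average the pointwise Hessian bound \eqref{eq:strong-convexity-distance} over the law of $Y$, which gives a uniform lower bound on the Hessian of the Fr\'echet function $\mathcal F$ along geodesics in $U$. That yields strict convexity, hence at most one minimizer in $U$. Condition \eqref{eq:no-outside-minimizer} then rules out minimizers outside $U$, and Theorem~\ref{thm:frechet-existence} supplies existence.

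\textbf{Step 1: smoothness of $\mathcal F$ on $U$.} Fix a unit-speed geodesic $c:[a,b]\to U$. This exists because $U$ is geodesically convex. For each $y$ in the support, the hypothesis that $\operatorname{Hess}_p\{\tfrac12 D_E^2(p,y)\}$ exists on $U$ is read as saying that $p\mapsto\tfrac12 D_E^2(p,y)$ is $C^2$ on $U$, i.e.\ $U$ avoids the cut locus of $y$. I will record this reading explicitly as part of the hypothesis. Set $f_y(t)=\tfrac12 D_E^2(c(t),y)$. Then $f_y''(t)=\operatorname{Hess}\{\tfrac12D_E^2(\cdot,y)\}(\dot c,\dot c)\ge\lambda$.

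\textbf{Step 2: integrate in $y$.} Since $D_E^2$ is bounded (finite diameter, as in the proof of Theorem~\ref{thm:frechet-existence}), I will avoid differentiating under the expectation. Instead I use the integrated form of convexity, obtained from Taylor's formula for each $y$:
\[
f_y\bigl((1-s)t_0+st_1\bigr)\le (1-s)f_y(t_0)+sf_y(t_1)-\tfrac{\lambda}{2}s(1-s)(t_1-t_0)^2 .
\]
Taking expectations preserves this inequality, because all terms are bounded. Hence $t\mapsto\tfrac12\mathcal F(c(t))$ is $\lambda$-strongly convex along every geodesic in $U$, and in particular strictly geodesically convex on $U$.

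\textbf{Step 3: uniqueness in $U$.} Suppose $p_0\neq p_1$ were two minimizers in $U$. Join them by a geodesic in $U$. Strict convexity makes the value of $\mathcal F$ at the geodesic midpoint strictly less than $\min\mathcal F$, which is a contradiction. So $\mathcal F$ has at most one minimizer in $U$.

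\textbf{Step 4: global uniqueness.} By Theorem~\ref{thm:frechet-existence}, $\mathcal M$ is nonempty. Condition \eqref{eq:no-outside-minimizer} says no point outside $U$ attains $\inf_{U}\mathcal F\ge\inf\mathcal F$. Therefore every global minimizer lies in $U$, and by Step 3 $\mathcal M$ is a singleton.

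\textbf{Main obstacle.} The delicate point is Step 1, the regularity of $D_E^2(\cdot,y)$ on $U$. The distance $D_E$ is defined through the winding-class minimization of Proposition~\ref{prop:winding-attainment}, and it is smooth only away from the cut locus, where two winding classes or two geodesics tie. I would handle this by taking the pointwise Hessian hypothesis to implicitly include $C^2$-regularity on $U$, and then working only with the integrated convexity inequality of Step 2. That inequality needs just pointwise convexity along geodesics, plus boundedness of $D_E^2$ for dominated convergence.
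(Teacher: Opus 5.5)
Your proposal is correct and follows the paper's argument. You average the pointwise Hessian bound over the law of $Y$ to get ($2\lambda$-)strong geodesic convexity of $\mathcal F$ on $U$, conclude there is at most one minimizer in $U$, and use \eqref{eq:no-outside-minimizer} to exclude minimizers outside $U$. Two of your details go beyond the paper's terser proof: using the integrated strong-convexity inequality rather than exchanging the Hessian with the expectation, and citing Theorem~\ref{thm:frechet-existence} so that $\mathcal M$ is nonempty.
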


\begin{proof}
Taking expectations in \eqref{eq:strong-convexity-distance} preserves the
Hessian lower bound, so $\mathcal F$ is $2\lambda$-strongly geodesically
convex on $U$.  Therefore it has at most one minimizer in $U$.
Condition \eqref{eq:no-outside-minimizer} excludes all minimizers outside
$U$.
\end{proof}

\begin{theorem}
\label{thm:frechet-consistency}
Assume the population Fr\'echet mean is unique, say
$\mathcal M=\{m_F\}$.  Then any empirical minimizer in
\eqref{eq:empirical-frechet} satisfies
\[
D_E(\widehat m_N,m_F)\longrightarrow0
\quad\text{almost surely}.
\]
\end{theorem}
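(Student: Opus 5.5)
The argument will copy the consistency proof of Theorem~\ref{thm:mle}: show that $\mathcal F_N\to\mathcal F$ uniformly almost surely, then use that $m_F$ is a well-separated minimizer on a compact space. Write $M=F_n(\T^n)$ and use $D_E=d_g=d_M$ from Proposition~\ref{prop:winding-attainment}. The manifold is compact, so its diameter $\operatorname{diam}(M)<\infty$. Each summand $D_E^2(p,Y_i)$ is therefore bounded by $\operatorname{diam}(M)^2$, and for fixed $p$ the strong law gives $\mathcal F_N(p)\to\mathcal F(p)$ almost surely.

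\textbf{Step 1 (equi-Lipschitz bound).} By the triangle inequality, $|D_E(p,y)-D_E(p',y)|\le D_E(p,p')$. Factoring the difference of squares then gives
\[
|D_E^2(p,y)-D_E^2(p',y)|\le 2\operatorname{diam}(M)\,D_E(p,p')
\]
uniformly in $y$. Hence $\mathcal F_N$ and $\mathcal F$ are all $2\operatorname{diam}(M)$-Lipschitz with respect to $D_E$, deterministically and for every $N$. This is simpler than the modulus-of-continuity argument used in the MLE proof.

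\textbf{Step 2 (uniform convergence).} Fix $\delta>0$ and take a finite $\delta$-net $p^{(1)},\dots,p^{(K)}$ of the compact $M$. On the intersection of the $K$ almost-sure events where the strong law holds at the net points, the Lipschitz bound gives
\[
\limsup_N\sup_p|\mathcal F_N-\mathcal F|\le 4\operatorname{diam}(M)\delta.
\]
Taking a countable sequence $\delta\downarrow0$ keeps the exceptional set null, so $\sup_{p\in M}|\mathcal F_N(p)-\mathcal F(p)|\to0$ almost surely.

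\textbf{Step 3 (argmin consistency).} For $\varepsilon>0$ let $A_\varepsilon=\{p\in M:D_E(p,m_F)\ge\varepsilon\}$, which is compact. Since $\mathcal F$ is continuous and $m_F$ is its unique minimizer by Theorem~\ref{thm:frechet-existence} and the hypothesis, the gap
\[
\Delta_\varepsilon=\inf_{A_\varepsilon}\mathcal F-\mathcal F(m_F)>0.
\]
Once $\sup|\mathcal F_N-\mathcal F|<\Delta_\varepsilon/3$, every $p\in A_\varepsilon$ has $\mathcal F_N(p)>\mathcal F_N(m_F)$, so no empirical minimizer $\widehat m_N$ lies in $A_\varepsilon$. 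Thus $D_E(\widehat m_N,m_F)<\varepsilon$ eventually, almost surely, and letting $\varepsilon$ run through a countable sequence gives the result.

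The only delicate point is measure-theoretic: "any empirical minimizer" need not be a measurable selection. The argument above avoids this by proving a deterministic statement on a single full-probability event, namely that the whole argmin set eventually lies inside the $\varepsilon$-ball. The result then holds for every selection without any measurability assumption. No geometric input beyond compactness and the identity $D_E=d_M$ is needed.
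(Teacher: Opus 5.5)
Your argument is correct. It differs from the paper only in that the paper gives no argument of its own: it cites the strong-consistency theorem of Bhattacharya and Patrangenaru (2003) for Fr\'echet means, using only that $F_n(\T^n)$ is compact.

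You instead reprove the compact case directly. The triangle inequality for $D_E=d_M$ makes $\mathcal F$ and every $\mathcal F_N$ uniformly $2\operatorname{diam}$-Lipschitz. That upgrades the pointwise strong law on a finite net to uniform almost-sure convergence, and compactness plus uniqueness then give a well-separated minimum. This is essentially the mechanism behind the cited result. It also mirrors the paper's own consistency proof for Theorem~\ref{thm:mle}, with your Lipschitz bound replacing the modulus of continuity $\omega(\delta)$.

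The citation buys brevity and generality. The cited theorem covers metric spaces in which closed bounded sets are compact. Without uniqueness, it still shows that the sample mean set eventually lies in any neighbourhood of the population mean set.

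Your version buys self-containedness. It isolates exactly which inputs from the paper are needed: compactness, finite diameter, and the identity $D_E=d_M$ from Proposition~\ref{prop:winding-attainment}, which is what licenses the triangle inequality. Because you work on a single full-probability event, it also covers every choice of $\widehat m_N$, including non-measurable ones, without a measurable-selection caveat.

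Both versions tacitly assume $Y_1,\dots,Y_N$ are i.i.d.\ copies of $Y$; the paper never states this explicitly.
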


\begin{proof}

{ Since $F_n(\T^n)$ is compact this is a known result from \cite{bhattacharya2003large}.}
\end{proof}

The diagonal metric provides a global orthonormal frame
\begin{equation}
e_j(p)
=
\rho_j(p)^{-1}\frac{\partial}{\partial\theta_j},
\quad j=1,\ldots,n.
\label{eq:global-orthonormal-frame}
\end{equation}
Suppose that for $(p,y)$ in a neighborhood of
$\{m_F\}\times\operatorname{supp}(Y)$ there is a unique minimizing
constant-speed path for \eqref{eq:energy-distance}, denoted
$\gamma_{p,y}$.  Define its initial-velocity coordinates in the frame
\eqref{eq:global-orthonormal-frame} by
\[
V_E(p,y)
=
\begin{pmatrix}
\rho_1(p)\dot\eta_{p,y,1}(0)\\
\vdots\\
\rho_n(p)\dot\eta_{p,y,n}(0)
\end{pmatrix}
\in\R^n.
\]
Because the minimizing path has constant speed,
\begin{equation}
\|V_E(p,y)\|^2
=
\mathcal E(\gamma_{p,y})
=
D_E^2(p,y).
\label{eq:velocity-energy-identity}
\end{equation}
This is the energy-functional version of the intrinsic tangent vector.

Assume the unique minimizers above depend continuously on $(p,y)$ in
initial velocity.  This holds, for example, when the relevant pairs avoid
the cut locus and remain inside a compact strongly convex normal
neighborhood.  Under this condition, differentiating the Fr\'echet
function along tangent directions gives
\[
\mathbb E\{V_E(m_F,Y)\}=0.
\]
We therefore define the intrinsic covariance by
\[
\Sigma_F
=
\mathbb E
\left[
V_E(m_F,Y)V_E(m_F,Y)^\top
\right].
\]
Its empirical counterpart is
\[
\widehat\Sigma_{F,N}
=
\frac1N\sum_{i=1}^N
V_E(\widehat m_N,Y_i)
V_E(\widehat m_N,Y_i)^\top.
\]
Operationally, $V_E(\widehat m_N,Y_i)$ is obtained from the initial
velocity of the same energy-minimizing path computation already used to
evaluate $D_E^2(\widehat m_N,Y_i)$.

\begin{theorem}
\label{thm:covariance-consistency}
Assume the hypotheses of Theorem~\ref{thm:frechet-consistency}.  Suppose,
in addition, that there exists a neighborhood $U$ of $m_F$ such that for
all $p\in U$ and all $y$ in the support of $Y$ the minimizing
energy path is unique and the map
\begin{equation}
(p,y)\longmapsto V_E(p,y)
\end{equation}
is continuous.  Then
\begin{equation}
\widehat\Sigma_{F,N}
\longrightarrow
\Sigma_F
\quad\text{almost surely}
\label{eq:covariance-strong-consistency}
\end{equation}
in any matrix norm.
\end{theorem}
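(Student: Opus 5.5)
The argument splits $\widehat\Sigma_{F,N}-\Sigma_F$ into a plug-in error and a sampling error:
\[
\widehat\Sigma_{F,N}-\Sigma_F
=\frac1N\sum_{i=1}^N\bigl\{W(\widehat m_N,Y_i)-W(m_F,Y_i)\bigr\}
+\Bigl\{\frac1N\sum_{i=1}^N W(m_F,Y_i)-\Sigma_F\Bigr\},
\]
where $W(p,y)=V_E(p,y)V_E(p,y)^\top$. The second term tends to zero almost surely by the strong law of large numbers, applied entrywise. Integrability holds because \eqref{eq:velocity-energy-identity} gives $\|W(p,y)\|\le\|V_E(p,y)\|^2=D_E^2(p,y)$. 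Moreover, $D_E^2$ is bounded by the squared diameter of the compact manifold; Proposition~\ref{prop:winding-attainment} identifies $D_E$ with $d_g$.

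For the first term, Theorem~\ref{thm:frechet-consistency} gives $D_E(\widehat m_N,m_F)\to0$ almost surely. Hence, on a probability-one event, $\widehat m_N$ lies in any fixed compact neighbourhood $\overline{U'}\subset U$ of $m_F$ for all large $N$. The support of $Y$ is a closed subset of the compact embedded torus, so it is compact. Therefore $\overline{U'}\times\operatorname{supp}(Y)$ is compact, and the assumed continuity of $V_E$, hence of $W$, on it gives uniform continuity there. Let $\omega$ denote the modulus of continuity of $W$ in its first argument, uniformly over $y\in\operatorname{supp}(Y)$. Since $Y_i\in\operatorname{supp}(Y)$ almost surely, each summand of the first term is at most $\omega\bigl(D_E(\widehat m_N,m_F)\bigr)$ in norm, so the whole average is as well. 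This tends to zero almost surely. Combining the two terms yields \eqref{eq:covariance-strong-consistency}; equivalence of matrix norms in dimension $n\times n$ extends the conclusion to any norm.

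The main obstacle is the uniform continuity step. Continuity of $V_E$ on $U\times\operatorname{supp}(Y)$ must be upgraded to a uniform modulus, which requires shrinking to a compact neighbourhood $\overline{U'}\subset U$ of $m_F$. A second subtlety is that $V_E$ is expressed in the frame \eqref{eq:global-orthonormal-frame}, which is global, so comparing vectors at $\widehat m_N$ and $m_F$ is meaningful without parallel transport. I would stress this point to justify treating $V_E$ as an $\R^n$-valued continuous function. I would also note that the rank-one outer products are continuous in $V_E$, so continuity passes from $V_E$ to $W$.
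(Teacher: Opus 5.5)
Your proposal is correct and follows the paper's proof: you use the same split into a plug-in term, controlled by uniform continuity of $(p,y)\mapsto V_E(p,y)V_E(p,y)^\top$ together with $\widehat m_N\to m_F$, and a sampling term handled by the strong law of large numbers. Your restriction to a compact neighbourhood $\overline{U'}\subset U$ of $m_F$ makes explicit the compactness that the paper's appeal to uniform continuity on $U\times\operatorname{supp}(Y)$ tacitly requires.
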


\begin{proof}
Because the torus and the relevant support set are compact and
$V_E$ is continuous, $V_E$ is uniformly bounded on
$U\times\operatorname{supp}(Y)$.  By
Theorem~\ref{thm:frechet-consistency}, $\widehat m_N\to m_F$ almost
surely; hence uniform continuity gives
\[
\sup_{y\in\operatorname{supp}(Y)}
\big\|
V_E(\widehat m_N,y)V_E(\widehat m_N,y)^\top
-
V_E(m_F,y)V_E(m_F,y)^\top
\big\|
\longrightarrow0
\]
almost surely.  Therefore
\[
\left\|
\widehat\Sigma_{F,N}
-
\frac1N\sum_{i=1}^N
V_E(m_F,Y_i)V_E(m_F,Y_i)^\top
\right\|\longrightarrow 0    
\]
almost surely. The remaining empirical average converges almost surely to
$\Sigma_F$ by the strong law of large numbers, proving
\eqref{eq:covariance-strong-consistency}.
\end{proof}

\begin{remark}
The definition \eqref{eq:energy-distance} is exact and minimizes over all
winding classes.  The numerical algorithm used in the simulation study
approximates the path problem by a finite mesh and a finite candidate
winding set.  The mesh-refinement and first-order stationarity results
validate that implementation empirically for the tested cases, but they
are not a proof that a particular finite numerical search finds the
global minimizer for every possible pair of points.
\end{remark}

\section{Additional numerical validation}
\label{app:simulation-validation}

This appendix reports the detailed numerical results underlying
Section~\ref{sec:numerical-validation}.  We use the
same recursively separated radii as in the main experiments,
$R_1=1$ and $R_k=1.05\sum_{j<k}R_j$, and the dimension grid
$n\in\{2,3,5,10\}$ unless otherwise stated.  The purpose of these
experiments is numerical validation of the geometric, probabilistic,
and optimization components; they do not constitute as a large
parameter-grid simulation study.

\subsection{Volume law, exact sampler, and normal-fiber coordinates}
\label{app:uniform-sampler}
The total Riemannian volume is $\operatorname{Vol}\{F_n(\T^n)\}=(2\pi)^n\prod_{j=1}^nR_j$. We checked this expression by Monte Carlo integration using $10^5$
independent angular points. The exact sequential surface-uniform
sampler was assessed using $20{,}000$ draws and the Rosenblatt
probability-integral transform (PIT).
Table~\ref{tab:uniform-audit} summarizes the resulting volume and
PIT diagnostics across the four investigated dimensions.

\begin{table}[ht]
\centering
\caption{Numerical validation of the exact volume law and exact
surface-uniform sampler.  The volume error is
$|\widehat V-V|/V$.  PIT diagnostics use $20{,}000$ exact sampler
draws.}
\label{tab:uniform-audit}
\small
\scalebox{1}{
\begin{tabular}{rrrrrrr}
\toprule
$n$ & Exact volume & MC volume & Rel. error
& Max KS & Min KS $p$ & Max $|\widehat{\rm Corr}|$\\
\midrule
2  & $4.1452\times10^{1}$  & $4.1247\times10^{1}$
   & 0.004951 & 0.009969 & 0.0373 & 0.01438\\
3  & $5.6062\times10^{2}$  & $5.6271\times10^{2}$
   & 0.003718 & 0.008661 & 0.0989 & 0.00171\\
5  & $8.8345\times10^{5}$  & $8.8032\times10^{5}$
   & 0.003541 & 0.011502 & 0.0100 & 0.01516\\
10 & $2.4868\times10^{19}$ & $2.4366\times10^{19}$
   & 0.020201 & 0.009610 & 0.0494 & 0.01997\\
\bottomrule
\end{tabular}}
\end{table}

The relative volume error is below $0.5\%$ for $n=2,3,5$ and about
$2.0\%$ for $n=10$.  Across all dimensions,
$\max D_{\rm KS}=0.011502$ and
$\max|\widehat{\operatorname{Corr}}|=0.01997$, supporting the
implemented sequential sampler.

We also validated the inverse normal-fiber coordinate map directly.
For each $n\in\{2,3,5,10\}$, $20{,}000$ ambient Gaussian observations
were inverted to $(T,\theta)$, checked for admissibility, and
reconstructed through $X(T,\theta)$.  The admissible fraction was
one in every dimension.  The maximum reconstruction errors were
$1.33\times10^{-15}$, $9.99\times10^{-16}$,
$4.33\times10^{-15}$, and $1.39\times10^{-13}$ for
$n=2,3,5,10$, respectively; the corresponding $99.9$th-percentile
errors were $6.66\times10^{-16}$, $7.22\times10^{-16}$,
$3.28\times10^{-15}$, and $1.32\times10^{-13}$.

\subsection{Projected-density validation}
\label{app:density-validation}

For the representative isotropic ambient model
$Z\sim N_{n+1}(0.75e_1,0.8^2I_{n+1})$, the projected angular density
was evaluated by Gauss-Legendre quadrature of order 96.

\paragraph{Normalization and coordinate reconstruction.}
The normalization check uses the exact Riemannian-uniform angular law
as an importance proposal.  Estimates are interpreted only when the
importance effective sample size (ESS) exceeds 50. The normalization estimates, effective sample sizes, and
normal-fiber reconstruction errors are reported in
Table~\ref{tab:density-normalization}.

\begin{table*}[ht]
\centering
\caption{Projected-density normalization and normal-fiber
reconstruction.  A dash denotes an intentionally unreported
importance estimate because ESS $<50$.}
\label{tab:density-normalization}
\small
\scalebox{1}{
\begin{tabular}{rrrrrr}
\toprule
$n$ & Integral estimate & MC SE & ESS
& Admissible fraction & Max reconstruction error\\
\midrule
2  & 1.015900 & 0.027459 & 812.61 & 1.000 & $1.33\times10^{-15}$\\
3  & 0.914758 & 0.059378 & 212.16 & 1.000 & $9.99\times10^{-16}$\\
5  & -       & -       &   4.85 & 1.000 & $4.33\times10^{-15}$\\
10 & -       & -       &   1.00 & 1.000 & $1.39\times10^{-13}$\\
\bottomrule
\end{tabular}}
\end{table*}

For $n=2$ and $n=3$, the normalization estimates are
$1.0159$ (SE $0.02746$) and $0.9148$ (SE $0.05938$),
respectively.  For $n=5$ and $n=10$, the Riemannian-uniform
importance proposal degenerates, so the corresponding absolute
normalization estimates are not interpreted.

\paragraph{Analytic density versus direct projected Monte Carlo.}
We compare expectations computed from the analytic density with those
from $20{,}000$ direct ambient-Gaussian draws followed by projection.
The analytic expectations use $2{,}000$ exact surface-uniform
importance proposals.  Probes consist of first-harmonic sine and
cosine terms and pairwise $\cos(\theta_i-\theta_j)$ terms. Table~\ref{tab:density-mc} compares the resulting analytic
expectations with their direct projected-Monte-Carlo counterparts.

\begin{table}[ht]
\centering
\caption{Agreement between analytic projected-density expectations
and direct projected Monte Carlo.  Probe errors are reported only
when ESS is at least 50.}
\label{tab:density-mc}
\small
\scalebox{1}{
\begin{tabular}{rrrrrr}
\toprule
$n$ & No. probes & ESS & Mean abs. error
& Median abs. error & Max abs. error\\
\midrule
2  &  5 & 862.51 & 0.018258 & 0.015768 & 0.033529\\
3  &  9 & 210.57 & 0.042130 & 0.039953 & 0.093125\\
5  & 20 &   2.70 & - & - & -\\
10 & 65 &   1.00 & - & - & -\\
\bottomrule
\end{tabular}}
\end{table}

The low-dimensional analytic and direct Monte Carlo calculations agree
closely.  For $n=5,10$, the importance proposal is too concentrated
to support a stable probe comparison; the direct coordinate reconstruction validation in
Table~\ref{tab:density-normalization} is therefore used instead.

\subsection{Restricted parameter-recovery study}
\label{app:recovery-audit}

We consider the identifiable restricted family
$Z\sim N_{n+1}(\alpha e_1,\sigma^2I_{n+1})$ with
$\alpha_0=0.75$ and $\sigma_0=0.8$.  For $n\in\{2,3\}$ and
$N\in\{100,250,500,1000\}$, 20 independent data sets are generated
and the projected likelihood is maximized using quadrature order 64.
All optimization runs terminate successfully. The resulting bias and RMSE values are reported in
Table~\ref{tab:mle-recovery}.

\begin{table}[ht]
\centering
\caption{Bias and RMSE for the restricted projected-Gaussian MLE over
20 repetitions.}
\label{tab:mle-recovery}
\small
\begin{tabular}{rrrrrr}
\toprule
$n$ & $N$ & Bias$(\widehat\alpha)$ & RMSE$(\widehat\alpha)$
& Bias$(\widehat\sigma)$ & RMSE$(\widehat\sigma)$\\
\midrule
2 &  100 &  0.007873 & 0.077746 & 0.013484 & 0.053995\\
2 &  250 & -0.005785 & 0.051776 & 0.002549 & 0.032500\\
2 &  500 & -0.002035 & 0.043612 & 0.008595 & 0.028170\\
2 & 1000 &  0.001683 & 0.026709 & 0.005016 & 0.020283\\
\addlinespace
3 &  100 &  0.030149 & 0.069967 & 0.004373 & 0.026065\\
3 &  250 &  0.018348 & 0.054986 & 0.010510 & 0.040358\\
3 &  500 & -0.008061 & 0.027031 & 0.011009 & 0.046859\\
3 & 1000 &  0.001929 & 0.030168 & 0.015177 & 0.041329\\
\bottomrule
\end{tabular}
\end{table}

For $n=2$, RMSE decreases clearly with sample size for both parameters.
For $n=3$, location recovery improves overall, whereas the scale RMSE
is non-monotone.  We therefore use the latter only as a numerical
diagnostic and do not infer consistency from the finite simulation.

\subsection{Intrinsic-energy solver validation}
\label{app:energy-audit}

The discrete energy solver is assessed using two independently
generated endpoint pairs for each $n\in\{2,3,5\}$.  The analytic
gradient agrees with centered finite differences to relative error
$8.76\times10^{-11}$.  Optimization uses coarse-to-fine continuation
over $M\in\{4,8,16,32\}$, multiple coarse initializations,
analytic-gradient L-BFGS-B, and final BFGS polishing when needed.
For each coordinate, the two terminal lifts nearest to the initial
angle are retained, giving $2^n$ candidate winding classes.

The prespecified convergence criteria are
$$
\frac{|E_{32}-E_{16}|}{E_{16}}<0.01,
\quad
\|\nabla E_{32}\|_\infty<10^{-4}.
$$

Table~\ref{tab:energy-final} reports the mesh-refinement error,
final gradient residual, and selected winding class for each of the
six endpoint pairs.

\begin{table*}[ht]
\centering
\caption{Final intrinsic-energy convergence results.  Every case
satisfies the $1\%$ mesh-refinement and $10^{-4}$ first-order
stationarity criteria.}
\label{tab:energy-final}
\small
\scalebox{1}{
\begin{tabular}{ccccccc}
\toprule
$n$ & Pair & $E_{16}$ & $E_{32}$ & Rel. change
& $\|\nabla E_{32}\|_\infty$ & Selected winding\\
\midrule
2 & 1 & 18.652898  & 18.627892  & 0.001341
  & $8.24\times10^{-7}$ & $(0,1)$\\
2 & 2 &  7.566471  &  7.562600  & 0.000512
  & $9.17\times10^{-7}$ & $(0,0)$\\
3 & 1 &  8.882255  &  8.877530  & 0.000532
  & $1.47\times10^{-6}$ & $(0,0,0)$\\
3 & 2 &  6.559641  &  6.558792  & 0.000129
  & $6.48\times10^{-7}$ & $(-1,0,0)$\\
5 & 1 & 116.109317 & 115.284795 & 0.007101
  & $1.53\times10^{-6}$ & $(0,1,0,0,0)$\\
5 & 2 & 326.328870 & 323.765835 & 0.007854
  & $1.76\times10^{-5}$ & $(0,0,0,1,0)$\\
\bottomrule
\end{tabular}}
\end{table*}

The worst final mesh change is $0.7854\%$ and the largest final
gradient infinity norm is $1.76\times10^{-5}$.  Nonzero selected
windings occur in several cases, confirming that restricting the
optimization to principal angular representatives can miss a
lower-energy lifted path.  These results establish numerical
mesh stability and first-order stationarity for the representative
problems; they are not a theorem of global convergence. 
\begin{figure}[t]
\centering
\includegraphics[width=0.7\textwidth]{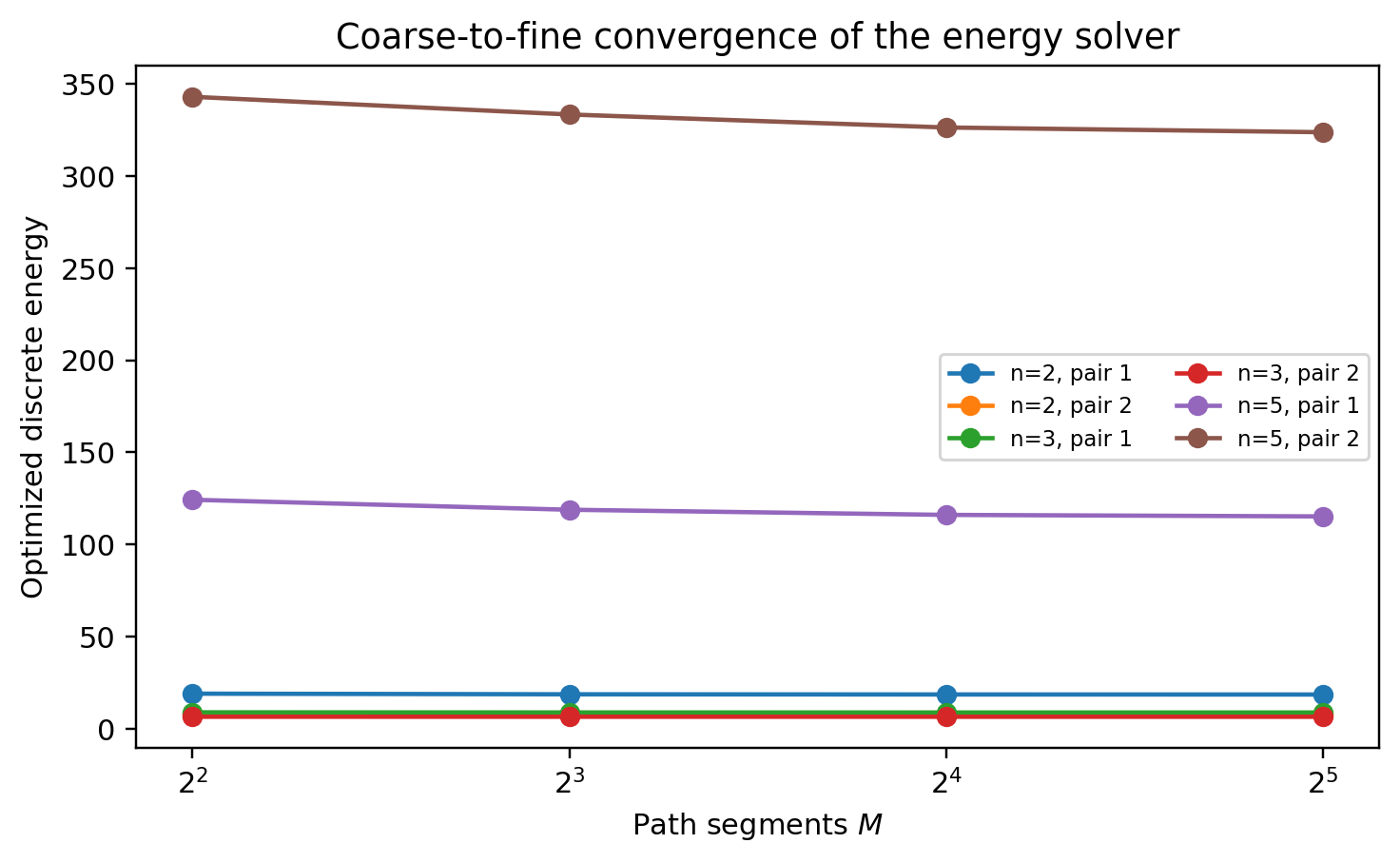}
\caption{Coarse-to-fine convergence of optimized intrinsic path energy for two endpoint pairs in $n=2,3,5$.}
\label{fig:energy}
\end{figure}

Figure~\ref{fig:energy} shows the corresponding coarse-to-fine
energy trajectories and illustrates stabilization as the mesh is
refined from $M=4$ to $M=32$.

\paragraph{Computational note.}
Exact sampling and one-dimensional projected-density evaluation remain
inexpensive over the investigated dimensions, whereas intrinsic-energy
optimization is substantially more expensive because it combines path
optimization with a winding search.  Exact wall-clock values are
hardware dependent and are recorded in the reproducibility outputs
rather than treated as fixed scientific quantities here.


\section{Additional real-data results and diagnostics}
\label{app:realdata}

\subsection{Model-specific numerical diagnostics}
\label{app:realdata-diagnostics}

The real-data comparison combines likelihood-based, pseudo-likelihood,
Bayesian, and simulation-based estimators, so diagnostics are reported
model by model rather than through a common optimizer criterion.

\paragraph{Wrapped-normal truncation.}
For the Franka analysis, re-evaluating M1 with the larger wrapping
lattice gives
$
\mathrm{NLL}_{K=2}=\mathrm{NLL}_{K=3}=-10.549207,
$
showing that the finite lattice is numerically adequate for this data.

\paragraph{M2 pseudo-likelihood stability.}
Five legitimate starting values give the same Franka training
pseudo-negative-log-likelihood,
$-32043.669598,
$
with total spread $1.85\times10^{-8}$, corresponding to relative
spread $5.76\times10^{-13}$.  The held-out Energy, MMD, and SWD values
are also stable across these starts.

\paragraph{M4 quadrature convergence.}
For Franka, Gauss-Legendre orders 96, 128, 192, and 256 all give
$
\mathrm{NLL}=3.918539
$
to six decimals; the difference between orders 192 and 256 is at
approximately machine precision.

\paragraph{Fixed-geometry sensitivity.}
The primary geometry margin $\delta=0.05$ was fixed before official
test evaluation. The primary geometry margin $\delta=0.05$ was fixed before official
test evaluation. Re-fitting M4 over
$\delta\in\{0.02,0.05,0.10,0.20\}$ gives the held-out discrepancies
reported in Table~\ref{tab:delta-sensitivity}.

\begin{table}[ht]
\centering
\caption{Franka fixed-geometry sensitivity for M4.  Lower is better.}
\label{tab:delta-sensitivity}
\small
\begin{tabular}{cccc}
\toprule
$\delta$ & Energy & $\MMD^2$ & SWD\\
\midrule
0.02 & 0.03713 & 0.05927 & 0.04614\\
0.05 & 0.03913 & 0.06202 & 0.04914\\
0.10 & 0.03551 & 0.05785 & 0.04439\\
0.20 & 0.03910 & 0.05981 & 0.04845\\
\bottomrule
\end{tabular}
\end{table}

Because each sensitivity setting is refitted and evaluated by Monte
Carlo generation, the $\delta=0.05$ row need not reproduce the
primary-table Monte Carlo mean exactly.  The prespecified
$\delta=0.05$ analysis is retained; the test data are not used to
select $\delta$.

For M3 and M5, posterior-predictive evaluation uses retained MCMC draws
after burn-in and thinning.  Trace, acceptance, seed, and run-setting
diagnostics are retained in the accompanying reproducibility outputs.

\paragraph{Reproducibility settings.}
For Franka, the analysis uses $d=4$, joints \texttt{pos1-pos4},
frame stride 8, validation trial \texttt{007}, five official test
trajectories, 10 generated-sample evaluation replicates per trajectory,
and 300 SWD directions.  M4 uses
$R=(1,1.05,2.1525,4.412625)$, covariance floor $10^{-3}$, and three
validation-selected restarts.  For ALOHA Coffee, the first four
left-arm revolute joints are retained in physical serial order, every
fifth frame is used, and the split is 35 fitting, 7 validation, and
8 untouched test episodes.  Cysteine is analyzed separately within
CYS, CSO, CSD, and OCS using the 70/15/15 split described in the main
paper; the reported macro-average gives equal weight to the four states.

\subsection{Paired held-out robot comparisons}
\label{app:paired-robot}

For each held-out Franka trajectory and ALOHA Coffee episode, we
compute M4-minus-competitor differences.  Negative values favor M4.
Table~\ref{tab:paired_m3_m5} focuses on M3 and M5, the projected-normal
comparators that most directly separate the effect of the recursive
geometry from projected-normal flexibility.  These are descriptive
paired summaries; no independence assumption across frames is used. The aggregate held-out discrepancy patterns for all six models are
shown in Figure~\ref{fig:robot-realdata-supp}.

\begin{table*}[ht]
\centering
\caption{Paired held-out M4-minus-competitor discrepancies for the
projected-normal comparators.  ``Wins'' is the number of held-out
trajectories or episodes for which the paired difference is negative.}
\label{tab:paired_m3_m5}
\small
\begin{tabular}{llcccc}
\toprule
Dataset & Comparator & Metric & Mean difference & SE & M4 wins\\
\midrule
Franka & M3 & Energy   & -0.044230 & 0.003303 & 5/5\\
Franka & M3 & $\MMD^2$ & -0.022089 & 0.002625 & 5/5\\
Franka & M3 & SWD      & -0.045476 & 0.002793 & 5/5\\
Franka & M5 & Energy   & -0.010965 & 0.001568 & 5/5\\
Franka & M5 & $\MMD^2$ & -0.060439 & 0.002307 & 5/5\\
Franka & M5 & SWD      &  0.003608 & 0.001395 & 0/5\\
\midrule
ALOHA Coffee & M3 & Energy   & -0.001792 & 0.004163 & 6/8\\
ALOHA Coffee & M3 & $\MMD^2$ & -0.003271 & 0.003500 & 6/8\\
ALOHA Coffee & M3 & SWD      &  0.002676 & 0.001288 & 1/8\\
ALOHA Coffee & M5 & Energy   & -0.007087 & 0.003812 & 6/8\\
ALOHA Coffee & M5 & $\MMD^2$ & -0.007230 & 0.003740 & 6/8\\
ALOHA Coffee & M5 & SWD      & -0.004498 & 0.001428 & 7/8\\
\bottomrule
\end{tabular}
\end{table*}

For Franka, M4 has lower Energy and $\MMD^2$ than both projected-normal
comparators on all five trajectories. For ALOHA Coffee, the paired
comparisons favor M4 over M3 for Energy and $\MMD^2$ on 6 of 8
episodes, while M3 is favored under SWD on 7 of 8 episodes.

\begin{figure}[ht]
\centering
\includegraphics[width=0.45\textwidth]
{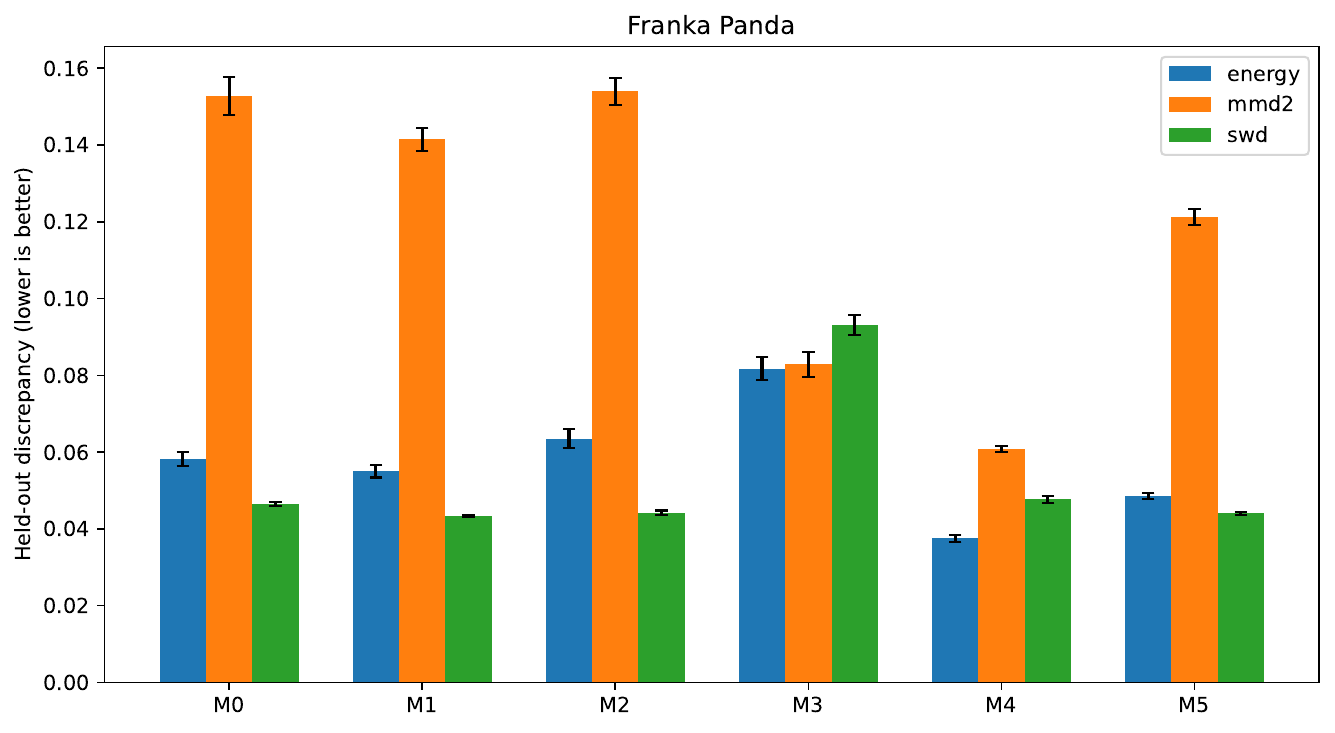}
\hfill
\includegraphics[width=0.45\textwidth]
{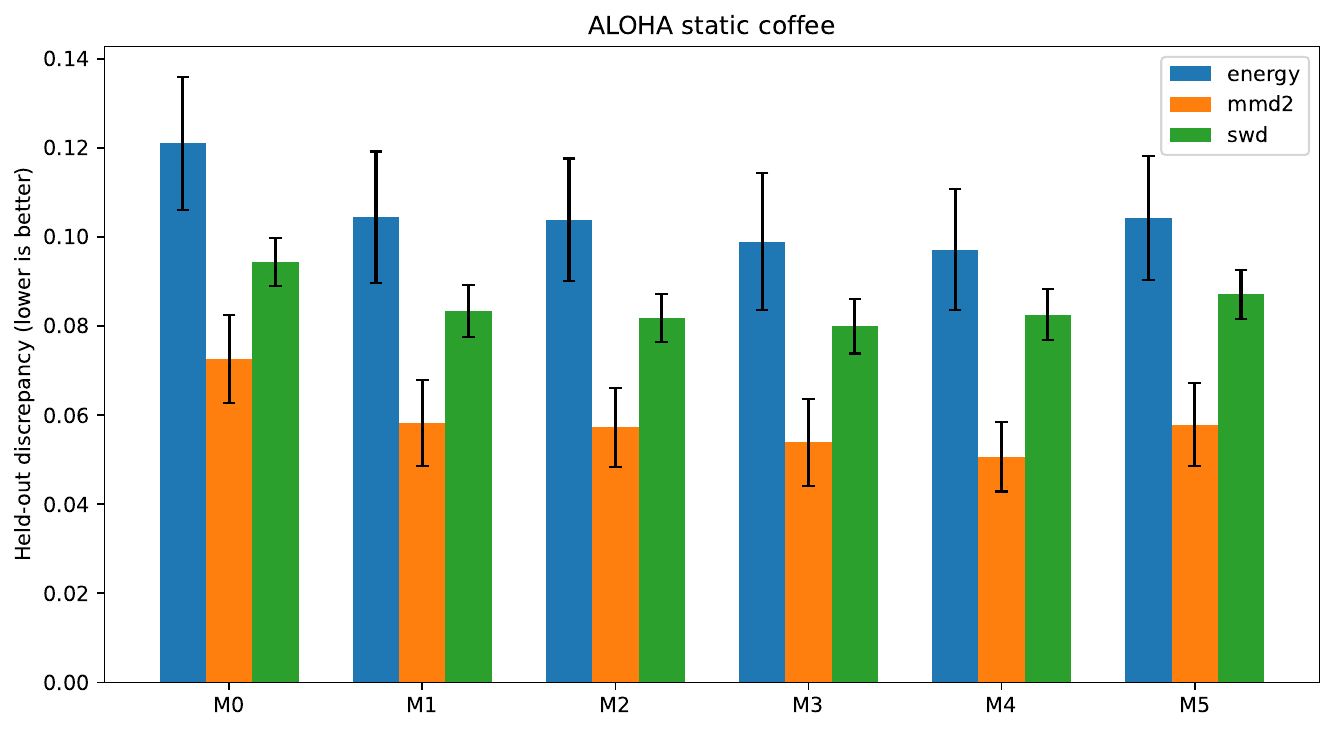}
\caption{Held-out discrepancy summaries for the two robot applications.
\textbf{Left:} Franka Panda lift-hold.  \textbf{Right:} ALOHA Coffee.  Error bars are
standard errors across held-out trajectories or episodes.}
\label{fig:robot-realdata-supp}
\end{figure}
\subsection{Cysteine state-wise results}
\label{app:cysteine-statewise}

The four oxidation states are modeled separately.
Table~\ref{tab:cysteine-statewise-m0-m5} gives the complete
state-wise comparison across M0-M5.  Unlike the two robot
applications, M3 attains the lowest held-out Energy, $\MMD^2$, and
SWD in every oxidation state.

\begin{table}[ht]
\centering
\caption{Cysteine state-wise held-out discrepancies.  Lower is better
within each state and metric.}
\label{tab:cysteine-statewise-m0-m5}
\small
\begin{tabular}{llccc}
\toprule
State & Model & Energy & $\MMD^2$ & SWD\\
\midrule
CYS & M0 & 0.065661 & 0.024435 & 0.137595\\
    & M1 & 0.075083 & 0.026505 & 0.146259\\
    & M2 & 0.063582 & 0.022692 & 0.133466\\
    & M3 & \textbf{0.019379} & \textbf{0.005042} & \textbf{0.070304}\\
    & M4 & 0.026082 & 0.006804 & 0.085384\\
    & M5 & 0.072415 & 0.025394 & 0.144145\\
\midrule
CSO & M0 & 0.094395 & 0.026448 & 0.168758\\
    & M1 & 0.107180 & 0.028922 & 0.178106\\
    & M2 & 0.096150 & 0.024486 & 0.167261\\
    & M3 & \textbf{0.059222} & \textbf{0.014547} & \textbf{0.125811}\\
    & M4 & 0.070888 & 0.017537 & 0.141233\\
    & M5 & 0.107702 & 0.028508 & 0.176967\\
\midrule
CSD & M0 & 0.153193 & 0.034378 & 0.222562\\
    & M1 & 0.148546 & 0.033073 & 0.219538\\
    & M2 & 0.134730 & 0.028255 & 0.206804\\
    & M3 & \textbf{0.104712} & \textbf{0.022196} & \textbf{0.184455}\\
    & M4 & 0.132257 & 0.028435 & 0.204796\\
    & M5 & 0.129306 & 0.027063 & 0.203227\\
\midrule
OCS & M0 & 0.341298 & 0.103300 & 0.306123\\
    & M1 & 0.297330 & 0.089368 & 0.290278\\
    & M2 & 0.264483 & 0.077866 & 0.268735\\
    & M3 & \textbf{0.158564} & \textbf{0.041216} & \textbf{0.203185}\\
    & M4 & 0.295514 & 0.087216 & 0.275139\\
    & M5 & 0.227555 & 0.068042 & 0.251460\\
\bottomrule
\end{tabular}
\end{table}

The equal-weight macro-averages reported in the main text therefore
summarize a state-wise pattern that consistently favors M3 rather than
being driven by a single oxidation state.

\subsection{Sensitivity to angular-coordinate ordering}
\label{app:ordering-sensitivity}

Because the recursively curved embedding is not permutation invariant,
we refitted M4 under all $4!=24$ coordinate permutations for Franka and
ALOHA Coffee, using training and validation data only.  The physical
proximal-to-distal order remains the prespecified primary order and is
not selected from this sensitivity analysis.  For cysteine, the
conventional $(\phi,\psi)$ order is compared with $(\psi,\phi)$.
Table~\ref{tab:ordering-sensitivity} summarizes the validation-only
performance of the primary ordering relative to the best ordering
considered in each application.

\begin{table}[ht]
\centering
\caption{Validation-only sensitivity of M4 to angular-coordinate
ordering.  ``Gap'' is
$100\{\MMD_{\rm primary}^2/\MMD_{\rm best}^2-1\}$.}
\label{tab:ordering-sensitivity}
\small
\scalebox{1}{\begin{tabular}{lcccc}
\toprule
Dataset & Primary val. $\MMD^2$ & Best val. $\MMD^2$
& Rank & Gap (\%)\\
\midrule
Franka lift-hold & 0.055224 & 0.049041 & 5/24 & 12.61\\
ALOHA Coffee      & 0.008308 & 0.008308 & 1/24 & 0.00\\
Cysteine macro    & 0.015284 & 0.008740 & 2/2  & 74.86\\
\bottomrule
\end{tabular}}
\end{table}

The physical ALOHA order is the validation optimum, while the Franka
physical order ranks fifth among the $24$ permutations.  For cysteine,
reversing $(\phi,\psi)$ reduces the macro validation $\MMD^2$ from
$0.015284$ to $0.008740$.  Figure~\ref{fig:ordering-sensitivity}
shows the corresponding ordering-sensitivity profiles.  These results
support treating coordinate order as externally specified when the
application provides a natural sequence, and otherwise as a modeling
choice to be assessed without access to the test data.

\begin{figure}[t]
\centering
\begin{minipage}[t]{0.45\textwidth}
\centering
\includegraphics[width=\linewidth]
{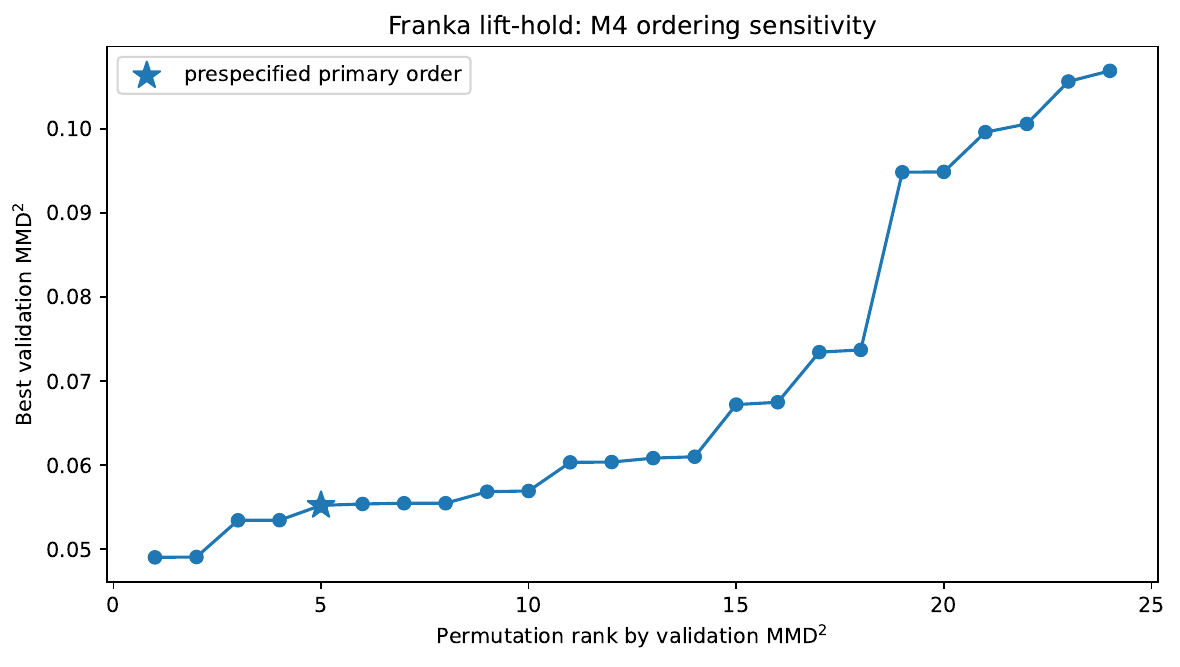}

 (a) Franka lift-hold
\end{minipage}
\begin{minipage}[t]{0.45\textwidth}
\centering
\includegraphics[width=\linewidth]
{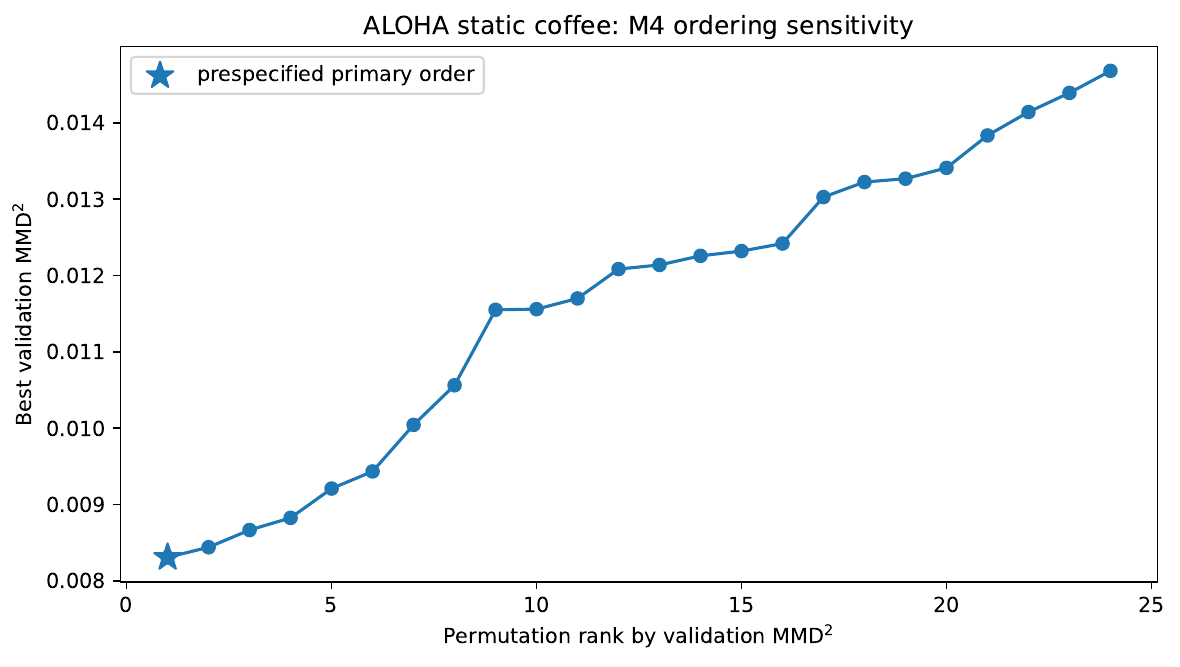}

 (b) ALOHA Coffee
\end{minipage}
\hfill
\begin{minipage}[t]{0.45\textwidth}
\centering
\includegraphics[width=\linewidth]
{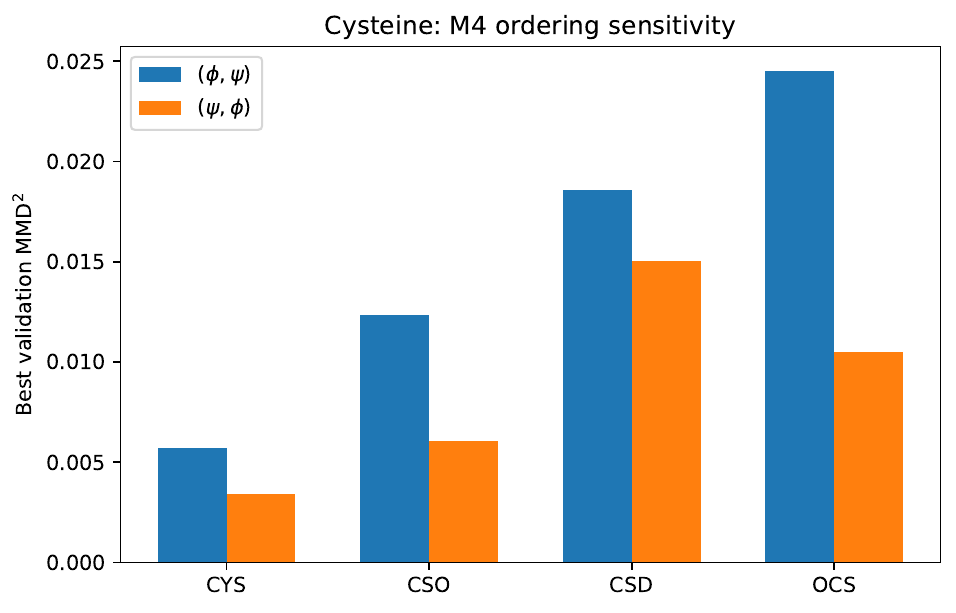}

 (c) Cysteine
\end{minipage}

\caption{Validation-only sensitivity of M4 to angular-coordinate
ordering.  For Franka and ALOHA Coffee, all $4!=24$ permutations are
evaluated using validation data only; the physical proximal-to-distal
ordering remains the prespecified primary analysis.  For cysteine, the
conventional $(\phi,\psi)$ and reversed $(\psi,\phi)$ orderings are
compared.  Lower validation $\MMD^2$ indicates a better fit.  No
held-out test data are used to select the ordering.}
\label{fig:ordering-sensitivity}
\end{figure}
\end{document}